\newenvironment{proof}{\noindent{\bf Proof}:\ }%
   {~\ \hfill $\Box$\vspace{0,5cm}}
\def\LStatex{\Statex\unskip\the\therules}
\newtheorem{prop}{Property}[section]
\newtheorem{theorem}{Theorem}[section]
\newtheorem{rmk}{Remark}[section]
\newtheorem{coro}[theorem]{Corollary}
\newtheorem{claim}{Claim}[section]
\numberwithin{equation}{section}
\begin{document}

	\setlength{\parskip}{0cm}

\title{On the edge capacitated Steiner tree problem}
\author{Cédric Bentz$^{(1)}$, Marie-Christine Costa$^{(2)}$,\\
Alain Hertz$^{(3)}$\\ 
{\scriptsize $^{(1)}$ CEDRIC, CNAM, 292 rue Saint-Martin, 75003 Paris,
France}\\
{\scriptsize $^{(2)}$ ENSTA Paris-Tech, University of Paris-Saclay (and CEDRIC CNAM),  91762 Palaiseau Cedex, France}\\
{\scriptsize $^{(3)}$ GERAD and D\'epartement de math\'ematiques et g\'enie industriel, Polytechnique Montr\'eal, Canada}\\
}

\date{ }

\def\thefootnote{\fnsymbol{footnote}}



\maketitle

\vspace{-0.8cm}\begin{center}
\today
\end{center}

\begin{abstract}
Given a graph $G=(V,E)$ with a root $r\in V$, {\color{black}positive} capacities $\{c(e) | e\in E\}$, and {\color{black}non-negative} lengths $\{\ell(e) | e\in E\}$, the minimum-length (rooted) edge capacitated Steiner tree problem is to find a tree in $G$ of minimum total length, rooted at $r$, spanning a given subset $T\subset V$ of vertices, and such that, for each $e\in E$, there are at most $c(e)$ paths, linking $r$ to vertices in $T$, that contain $e$.
We study the complexity and approximability of the problem, considering several relevant parameters such as the number of terminals, the edge lengths and the {\color{black}minimum and maximum} edge capacities. {\color{black}For all but one combinations of assumptions regarding these parameters, we settle the question, giving a complete characterization that separates tractable cases from hard ones}. The only remaining open case is proved to be equivalent to a long-standing open problem. We also prove close relations between our problem and classical Steiner tree as well as vertex-disjoint paths problems.
\end{abstract}


\parindent=0.5cm


\vspace{-0.5cm}\section{Introduction}
\label{sec:intro}

The graphs in this paper can be directed or undirected.
 Consider a connected graph $G=(V,E)$ with a set $T \subset V$ of terminal vertices, or simply {\it terminals}, and a  length (or
  cost) function  $\ell: E \rightarrow \mathbb{Q}^+$. Let  $r \in
  V\setminus T$  be a root vertex (i.e. there is a path from $r$ to any vertex in $V$) if $G$ is directed or a special vertex called {\it root} if $G$ is undirected.
The (rooted) Steiner tree problem (\texttt{STEINER-TREE}) is to determine a directed tree $S$ in $G$, rooted at $r$, spanning all terminals of $T$ and having a minimum total length. The undirected Steiner tree problem, where one searches for a minimum-length tree spanning the terminals in an undirected graph, has been widely studied and the associated decision problem was one of Karp's 21 NP-complete problems \cite{garey,hwang,promel}. It also has many applications, as shown in \cite{cheng, du}. This problem is APX-hard \cite{bern}, but it can be solved in polynomial time when the number of terminals is fixed \cite{dreyfus,feldman,watel}, and admits constant ratio approximation algorithms otherwise \cite{byrka,robins}. There are less results about the directed version, which is a generalization of the undirected one and of {\color{black}the Set Cover problem}, and only non constant ratio approximation algorithms are known \cite{charikar,feige}. Directed problems occur  for instance in VLSI design \cite{cong} or  in multicast routing \cite{cheng}.

We consider in this paper a generalization of the (rooted) Steiner tree problem. Assume we are given a \textit{capacity} function $c: E \rightarrow \mathbb{N}^*$, where $c(e)$ is an upper bound on the number of paths containing $e$ and linking $r$ to terminals. Equivalently, for every $e=(u,v)$ in a tree $S$ rooted at $r$, the subtree of $S$ rooted at $v$ cannot contain more than $c(e)$ terminals.  Without loss of generality, we assume that $c(e) \leq K$ {\color{black}for each edge $e$}. The minimum-length capacitated (rooted) Steiner tree problem is defined as follows:\\

\noindent\textbf{Minimum-length (rooted) Capacitated Steiner Tree Problem}  (\texttt{ML-CAP-STEINER-TREE})

\noindent \emph{Input}. A connected graph $G = (V,E)$; a set $T=\{t_1,...,t_K\} \subset V$ of $K\geq 2$ terminals; a root vertex $r \in V\setminus T$; two functions on $E$: a nonnegative length function $\ell$ and a  positive capacity function $c$.

\noindent \emph{Objective}. Determine, {\color{black}if it exists},
 a minimum-length directed tree $S$ rooted at $r$, that spans all the vertices of $T$ and does not violate the capacity constraints.\\

If $G=(V,E)$ is undirected and $e=(u,v)$ is an arc of $S$, then $[u,v]$ must be an edge of $E$.
 Note that \texttt{STEINER-TREE} is the special case of \texttt{ML-CAP-STEINER-TREE} where $c(e) = K$ for all $e\in E$ {\color{black}(in this case, a feasible solution always exists)}. \texttt{ML-CAP-STEINER-TREE} appears naturally in several contexts, for example when designing a wind farm collection network \cite{hertz,pillai}, in the design of telecommunication networks \cite{lee} or in power distribution system optimization \cite{duan}.
When $\ell(e)=0$ for all $e\in E$, \texttt{ML-CAP-STEINER-TREE} turns into a decision problem, denoted by \texttt{CAP-STEINER-TREE}, and consisting of determining whether there exists or not a tree rooted at $r$, spanning all the terminals, and not violating the capacity constraints.

 When $K=n-1$,  i.e. a feasible solution is a spanning tree, \texttt{ML-CAP-STEINER-TREE} is solvable in polynomial time if  $c(e)=2$ for all $e\in E$, while it is NP-hard if  $c(e)=3$ for all $e\in E$ \cite{garey, papadimitriou}. Several authors propose models and methods based on mathematical programming to solve {\color{black}this capacitated} spanning tree problem for real-life applications such as telecommunication network design problems  \cite{bousba,uchoa,voss}. Their methods allow to solve the case where there is a positive integer demand at each vertex (instead of a unit demand as in \texttt{ML-CAP-STEINER-TREE}). In \cite{arkin,jothi}, the authors provide approximation algorithms for a variant of \texttt{ML-CAP-STEINER-TREE} where the capacities are uniform and the problem always admits a feasible solution, since it is assumed that a metric completion of the graph is available.
This paper addresses the problem where the demand is equal to 1 for each terminal vertex and $K \leq n-1$.

As will be made clear in the next sections, there are strong links between\break \texttt{ML-CAP-STEINER-TREE} and the two following famous problems, namely the minimum-length vertex-disjoint paths problem (\texttt{ML-VDISJ-PATH}) and the minimum-length edge-cost flow problem (\texttt{EDGE-COST-FLOW}).
	
\vspace{0.6cm}\noindent\textbf{Minimum-Length Vertex-Disjoint Paths Problem} (\texttt{ML-VDISJ-PATH})

\noindent\emph{Input}. A graph $G=(V,E)$; a nonnegative length function $\ell$ on $E$; $p$ disjoint vertex pairs $(s_1,s'_1),\ldots,(s_p,s'_p)$.

\noindent\emph{Objective}. Find $p$ mutually vertex-disjoint paths $\mu_1,\ldots,\mu_p$ of minimum total length  so that $\mu_i$ links $s_i$ to $s'_i$ $(i=1,\ldots,p)$.

\vspace{0.3cm}\noindent\textbf{Minimum Edge-Cost Flow Problem} (\texttt{EDGE-COST-FLOW})

\noindent\emph{Input}. A graph $G=(V,E)$; a positive integer $K$; two specified vertices $s$ and $t$; a nonnegative length function $\ell$ on $E$; a positive capacity function $c$ on $E$.

\noindent\emph{Objective}. Find a minimum-length feasible flow of $K$ units from $s$ to $t$, where the length of a flow is the sum of the lengths of the arcs/edges carrying a positive flow.

\vspace{0.4cm} When $\ell(e)=0$ for all $e\in E$, \texttt{ML-VDISJ-PATH} is known as the vertex-disjoint paths problem and will be denoted by \texttt{VDISJ-PATH}. It is NP-complete in directed and undirected graphs \cite{garey} and remains NP-complete for fixed $p$ in directed graphs \cite{fortune}, but it can be solved in polynomial time if $p$ is fixed and the graph is either undirected \cite{robertson} or a directed acyclic graph \cite{fortune}. The NP-hardness results for \texttt{VDISJ-PATH} apply to \texttt{ML-VDISJ-PATH} as well, but for this latter problem the complexity is unknown in the case where $p$ is fixed and the graph is undirected. However, a polynomial-time probabilistic algorithm for $p=2$ has been recently presented in \cite{bjorklund}.

For any graph theoretical terms not defined here, the reader is referred to \cite{west}.  We use the term {\em path} both for a chain when the graph is undirected,  and for a directed path when the graph is directed, i.e. when it is a digraph. If the graph is directed, recall that, in the definition of \texttt{ML-CAP-STEINER-TREE}, $r$ is assumed to be a root vertex. This is a trivial necessary condition for the existence of a feasible solution and can be easily checked.  Since all trees studied in this paper are directed from $r$ towards the terminals, we use the term {\em tree} instead of directed tree. For a vertex $v$ in a tree $S$, we denote by $S(v)$ the subtree of $S$ rooted at $v$.
For a subgraph $G'=(V',E')$ of $G$, we indifferently denote by $\ell(G')$ or $\ell(E')$ the sum of the lengths of the arcs/edges in $G'$.
 Also, for $e\in E$,  a rooted tree $S$ in $G$, and  two vertices $u$ and $v$  such that $v$ is a descendant of $u$ in $S$, we say that $u$ is $e$\textit{-linked} (resp. $\bar{e}$\textit{-linked}) to $v$ in $S$ if $e$ belongs (resp. does not belong) to the path  $\mu_{uv}$ from $u$ to $v$ in $S$. Similarly, when we write that $r$ is $\bar{e}$-linked to a subset $T'$ of terminals in $S$, this means that $e$ does not belong to the paths in $S$ that link $r$ to the terminals  of $T'$. The capacity constraints therefore impose that, for all $e\in E$, $r$ is $e$-linked to at most $c(e)$ terminals in any {\color{black}feasible} solution $S$ to an \texttt{ML-CAP-STEINER-TREE} {\color{black}instance}. Equivalently,
$S(v)$ contains at most $c(e)$ terminals for all $e=(u,v)$ in $S$.

The next section gives an overview of our results concerning \texttt{ML-CAP-STEINER-TREE} and explains how the paper is organized.

\vspace{-0.2cm}\section{Overview of the results}
 \label{sec:results}
\vspace{-0.2cm}
In this section we show that our results provide a complete characterization of the complexity of \texttt{ML-CAP-STEINER-TREE} that allows us to distinguish beween easy and hard cases of  the problem for  digraphs, directed acyclic graphs (called DAGs) and  undirected graphs. Notice that any undirected instance of \texttt{ML-CAP-STEINER-TREE} can be transformed into a directed one by replacing each edge by two opposite arcs having the same length and capacity. Hence, any positive result (existence of a polynomial-time algorithm or approximation result) for directed graphs is also true for undirected graphs, while any negative result for undirected graphs (NP-hardness or non-approximability result) is also true for directed graphs.

\enlargethispage{0.5cm}{\color{black}Apart from the assumption on the graph itself (undirected, directed or directed without circuits)}, the following parameters are considered: the number $K$ of terminal vertices, the {\color{black}minimum and maximum} edge capacities, and the edge lengths. More precisely, $K$ can be fixed or not; {\color{black}the minimum and maximum edge capacities can be non depending on $K$ (equal to 1 or not), they can be greater than or equal to $K-\kappa$ ($1 \leq \kappa \leq K-1$), and they can be equal (uniform capacity) or not}; the edge lengths can be all equal to $0$, all equal to a positive value (i.e. uniform), or non uniform. We settle all cases except one, namely the undirected case with uniform capacity and fixed $K \geq 3$, but we prove that \texttt{ML-CAP-STEINER-TREE} is then equivalent to \texttt{ML-VDISJ-PATH} {\color{black}in undirected graphs with fixed $p$}, whose complexity is a long-standing open problem {\color{black}in this case} \cite{kobayashi}.

\enlargethispage{1.5cm}Our results are summarized in four tables. Each line of each table corresponds to a specific case of  \texttt{ML-CAP-STEINER-TREE} and refers to the theorem  where the case is settled.
The first table contains  results that are valid for the three types of graphs, while the next three tables contain results that are specific to digraphs, undirected graphs, and DAGs, respectively. In these tables, we denote by $\rho$ the best possible approximation ratio for \texttt{STEINER-TREE}, and by $\rho'$ the best possible approximation ratio for \texttt{ML-VDISJ-PATH} with a fixed number of source-sink pairs.

The three trees drawn in Figure 1 provide another picture of the possible cases for the three types of graphs (digraphs, DAGs and undirected graphs). The numbers assigned to the leaves of these trees refer to the corresponding rows in the tables.  The values of the three parameters appear on the branches and each branching node corresponds to a partition of the possible cases: the value on a branch excludes the values on the branches to the left. For instance, in undirected graphs, the capacities can be either uniform equal to 1, or at least $K-1$, or uniform of value at least 2 and at most $K-2$, or, finally, any capacities not yet considered.

{\color{black}Moreover, if a leaf corresponds to a branch where the values of some parameters are unspecified, then this means that the associated result holds even in the {\color{red}most} general case (if it is a positive, i.e., tractability result) or in the {\color{red}most} specific case (if it is a negative, i.e., hardness result) with regard to the unspecified values. For instance, the NP-hardness result associated with Leaf 7 holds even if $K$ is fixed and if all lengths are 0 (since neither the value of $K$ nor the lengths appear on this branch), and the result associated with Leaf 11 holds for any lengths and any capacities (since only the assumption on $K$ being fixed appears on this branch).}

{\color{black}Therefore}, for digraphs, the branch ``any capacity" includes the case of uniform capacities between 2 and $K-2$  for $K$ fixed (or not).  Concerning the last line of Table 3, if the capacity is uniform and $K$ is fixed, then there exists some constant $\kappa$ such that all capacities are equal to $K-\kappa$: hence, in the tree dealing with undirected graphs in Figure \ref{fig:arbres}, the branch ``any capacity'', which leads to Case 8 of Table \ref{tabUndirectedGraphs}, excludes the case where $K$ is fixed.

\begin{table}[h!]
	\begin{center}
		\begin{tabular}
			{|p{10pt}|p{150pt}|p{150pt}|p{50pt}|} \hline
			\small{}  & \small{Condition} & \small{Complexity}& \small{Theorem} \\
			
			\hline \small{1}  & \small{Unit capacities} & \scriptsize{Polynomial} & \scriptsize{Theorem \ref{th:CSTP-cap1}}\\
			\hline  \small{2}  & \small{$K=2$} & \scriptsize{Polynomial} & \scriptsize{Theorem \ref{th:large-capas-fixedK}}\\
			\hline  \small{3}  & \small{Capacities $\geq K-\kappa$, for any constant $\kappa \geq 0$} & \scriptsize{NP-hard, even with lengths 1, even with uniform capacities} & \scriptsize{Theorem \ref{th:unit-cost-Steiner}}\\
			\hline  \small{4}  & \small{Capacities $\geq K-1$} & \scriptsize{Polynomial with lengths 0 (\texttt{CAP-STEINER-TREE}), and $(1+\rho)$-approximable otherwise} & \scriptsize{Theorem \ref{th:large-capas-non-fixedK}}\\
			\hline  \small{5}  & \small{Capacities $\geq K-1$, for fixed $K$} & \scriptsize{Polynomial} & \scriptsize{Theorem \ref{th:large-capas-fixedK} }\\
			\hline
		\end{tabular}
		\caption{General results for \texttt{ML-CAP-STEINER-TREE} in digraphs, DAGs, and undirected graphs.} \label{tabGeneralResults}
	\end{center}\end{table}
	
	\begin{table}[H]
		\begin{center}
			\begin{tabular}
				{|p{10pt}|p{150pt}|p{150pt}|p{50pt}|} \hline  \small{} &
				\small{Condition} & \small{Complexity}& \small{Theorem} \\
				\hline  \small{6}  &  \small{$K \geq 3$ (fixed or not)} & \scriptsize{NP-complete even if all lengths are 0 (\texttt{CAP-STEINER-TREE}), and even if the minimum capacity $c_{\min}$ and the maximum capacity $c_{\max} \geq c_{\min}$ are any fixed constants, with $c_{\min} \in \{1, \dots, K-2\}$} and $c_{\max} \geq 2$ & \scriptsize{Theorem \ref{th:CSTP-directed-unif-K-fixed}}\\
				\hline
			\end{tabular}
			\caption{Results for \texttt{ML-CAP-STEINER-TREE} in digraphs.} \label{tabDigraphs}
		\end{center}\end{table}

		\begin{table} [H]
			\begin{center}	
				\begin{tabular}
					{|p{10pt}|p{150pt}|p{150pt}|p{50pt}|} \hline
					\small{}  & \small{Condition} & \small{Complexity}& \small{Theorem} \\
					\hline \small{7}  &  \small{Non uniform capacities and $K \geq 3$ (fixed or not)} & \scriptsize{NP-complete even if all lengths are 0 (\texttt{CAP-STEINER-TREE}), and even if the minimum capacity $c_{\min}$ and the maximum capacity $c_{\max} > c_{\min}$ are any values, with $c_{\min} \in \{1, \dots, K-2\}$} & \scriptsize{Theorem \ref{th:CSTP-undirected-non-unif}}\\
					\hline \small{8}  &  \small{Uniform capacity (non unit and not depending on $K$)} & \scriptsize{NP-complete even if all lengths are 0 (\texttt{CAP-STEINER-TREE}), and even if the uniform capacity is any value $\geq 2$ not depending on $K$} & \scriptsize{Theorem \ref{th:unif-DAG-SAT}}\\
					\hline \small{9}  &  \small{Uniform capacity equal to $K-\kappa$, for any constant $\kappa \geq 0$} & \scriptsize{Polynomial if all lengths are 0 (\texttt{CAP-STEINER-TREE}), and $(\rho+\rho')$-approximable otherwise} & \scriptsize{Theorems \ref{th:undirected-unif-fixedK} and \ref{th:CSTP-large-capa-unif}}\\
					\hline \small{10}  &  \small{Uniform capacity and fixed $K \geq 3$} & \scriptsize{Equivalent to \texttt{ML-VDISJ-PATH} with fixed $p$, and hence open} & \scriptsize{Theorem \ref{th:undirected-unif-fixedK}}\\
					\hline
				\end{tabular}
				\caption{Results for \texttt{ML-CAP-STEINER-TREE} in undirected graphs.} \label{tabUndirectedGraphs}
			\end{center}\end{table}
			
			\begin{table}[H]
				\begin{center}\begin{tabular}
						{|p{10pt}|p{150pt}|p{150pt}|p{50pt}|} \hline \small{}  &
						\small{Condition} & \small{Complexity}& \small{Theorem} \\
						\hline \small{11}  & \small{Fixed $K$} & \scriptsize{Polynomial} & \scriptsize{Theorem \ref{th:DAG-fixedK}}\\
						\hline \small{12}  & \small{Non unit capacities not depending on $K$} & \scriptsize{NP-complete even if all lengths are 0 (\texttt{CAP-STEINER-TREE}), and even if the capacity is uniform and takes any value $\geq 2$ not depending on $K$} & \scriptsize{Theorem \ref{th:unif-DAG-SAT}}\\
						\hline \small{13}  & \small{Capacities larger than $K-\kappa$, for any constant $\kappa \geq 0$} & \scriptsize{Polynomial if all lengths are 0 (\texttt{CAP-STEINER-TREE}), and $(1+\rho)$-approximable otherwise} & \scriptsize{Theorem \ref{th:CSTP-large-capa-non-unif-DAGs}}\\
						\hline
					\end{tabular}
					\caption{Results for \texttt{ML-CAP-STEINER-TREE} in DAGs.} \label{tabDAGs}
				\end{center}\end{table}

We describe in Section 3 the structure of optimal solutions to an \texttt{ML-CAP-STEINER-TREE} instance. Section 4 is devoted to relations between \texttt{ML-CAP-STEINER-TREE} and \texttt{ML-VDISJ-PATH}. We prove in Section 5 some NP-hardness results for the general case, while special cases where the number $K$ of terminals is fixed, or where all capacities are almost equal to $K$, are studied in Sections 6 and 7.

\begin{figure} [h!]
\begin{center}
\includegraphics[scale=0.8]{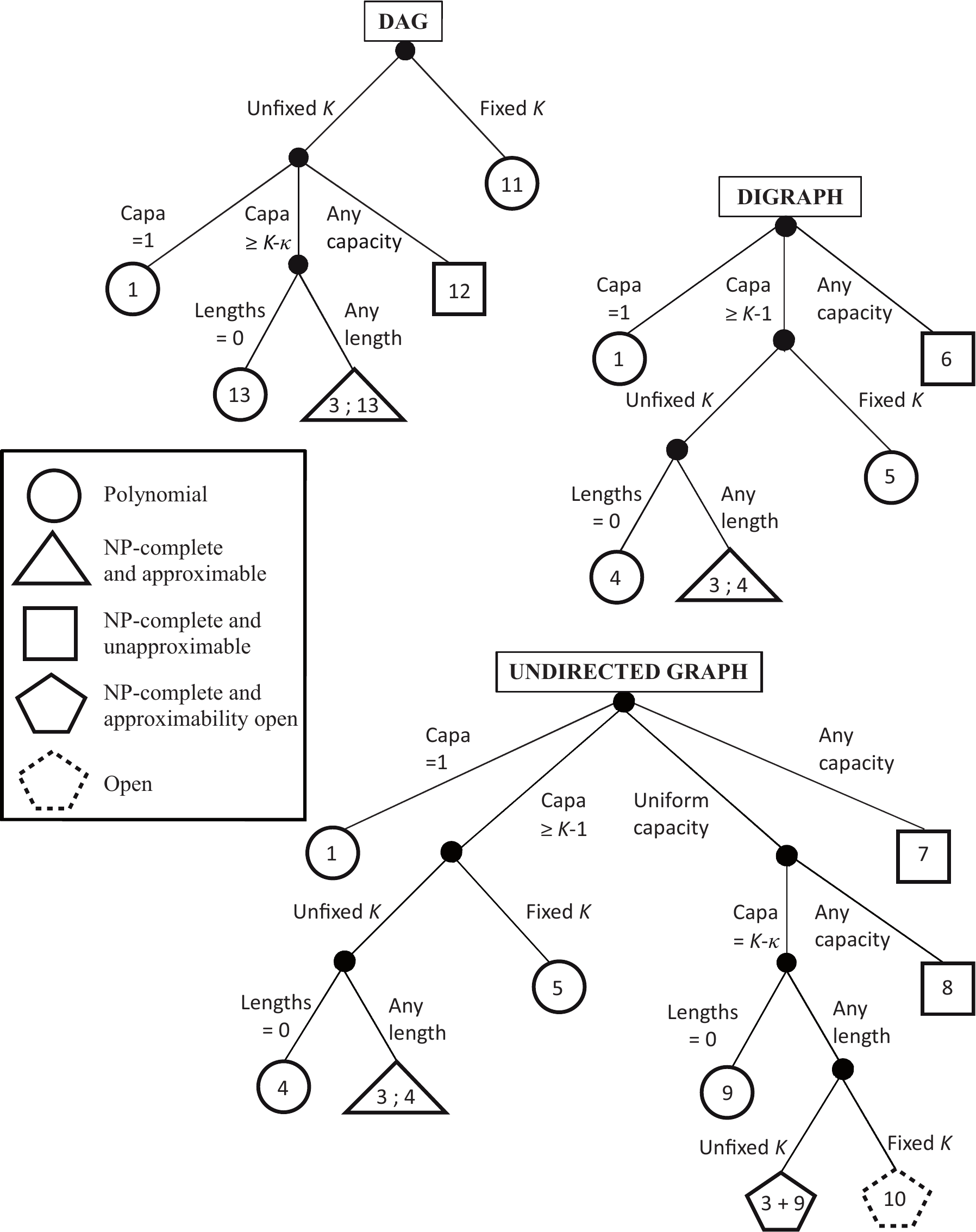}
\caption{Results for \texttt{ML-CAP-STEINER-TREE} (\texttt{CAP-STEINER-TREE} if all lengths\break are 0) in digraphs, DAGs and undirected graphs.}
\label{fig:arbres}
\end{center}
\end{figure}

\section{Structural properties of optimal solutions}
\label{sec:struc}

We can assume, without loss of generality, that there is a bijection between the set of 1-degree vertices (leaves) in $V\setminus \{r\}$ and $T$. Indeed, if $t \in T$ is not a leaf, we can add a new terminal vertex $t'$ and an edge $[t,t']$ (or an arc $(t,t')$) with capacity $1$ and length $0$, and replace $t$ by $t'$ in $T$. Moreover, if there is a leaf $v\notin T\cup \{r\}$ in $G$, then $v$ can be removed from $G$ since the removal of $v$ from a solution $S$ to an \texttt{ML-CAP-STEINER-TREE} instance gives a solution $S'$ which is at least as good as $S$.

A solution $S$ (if any) to an \texttt{ML-CAP-STEINER-TREE} instance is a tree rooted at $r$, and defines $K$ paths from $r$ to the $K$ terminals. The vertices with degree  at least 3 in $S \backslash \{r\}$ are called {\em junction vertices}.

To each junction vertex $v$, we associate the set $T_v\subseteq T$ of terminals in the subtree $S(v)$ rooted at $v$. Moreover, for an arc $e=(u,v)$ in $S$,  $|T_v|$ is the number of terminals to which $r$ is $e$-linked in $S$. If there is no directed path linking two vertices $v$ and $w$ in $S$, then $T_v\cap T_w= \emptyset$, otherwise $S$ would contain a cycle. 

 Given a tree $S$ {\color{black}spanning a set $T$ of terminals}, its \textit{skeleton} is the tree obtained from $S$ by iteratively contracting vertices $v \notin  T\cup \{r\}$ with exactly one incoming arc $(u,v)$ and exactly one outgoing arc $(v,w)$ (i.e., the path $(u,v,w)$ is replaced by an arc $(u,w)$).  {\color{black}This means that} there is an arc $(u,v)$ in the skeleton of $S$ if and only if there is a path from $u$ to $v$ in $S$, each internal vertex of this path being of degree 2 in $S$. 
 When all capacities are 1, the skeleton of a feasible solution is a star, since the root is the only possible vertex with degree $\geq 2$ in this skeleton. We now prove some properties which will be useful later.

\begin{prop}
\label{prop:numbvert}
The skeleton of an {\color{black}inclusion-wise minimal} tree $S$ {\color{black}rooted at $r$ and spanning $K$ terminals (all of degree 1)} contains at most $2K+1-d_r$ vertices, where $d_r$ is the degree of root $r$ in $S$.
\end{prop}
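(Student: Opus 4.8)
The plan is to bound the number of skeleton vertices by a direct double count, classifying the vertices according to their out-degree and then invoking the handshake identity for trees. First I would observe that, by the very definition of the skeleton, any vertex $v$ that survives the contraction and is neither the root $r$ nor a terminal must have out-degree at least $2$. Indeed, since $S$ is a tree rooted at $r$, such a $v$ has exactly one incoming arc; it cannot have out-degree $0$, because that would make it a leaf, and inclusion-wise minimality forces every leaf of $S$ to be a terminal (a non-terminal leaf could be deleted without affecting the spanning of $T$); and it cannot have out-degree $1$, since a degree-$2$ vertex outside $T\cup\{r\}$ is exactly what the contraction eliminates. Hence the skeleton vertices split into three disjoint groups: the root $r$, the $K$ terminals (each a leaf of out-degree $0$), and the junction vertices, each of out-degree at least $2$.

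Let $J$ denote the number of junction vertices, so the skeleton has $K+J+1$ vertices and, being a tree, exactly $K+J$ arcs. I would then recount the arcs as the sum of out-degrees over all vertices: the root contributes $d_r$, the terminals contribute $0$, and the junction vertices contribute at least $2J$, which gives
\[
d_r + 2J \le K + J.
\]
Rearranging yields $J \le K - d_r$, and substituting this back into the vertex count gives $K+J+1 \le 2K+1-d_r$, as claimed.

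The only delicate step is the characterization of the surviving vertices: one must argue carefully that inclusion-wise minimality (together with the standing assumption that terminals are the only leaves) rules out non-terminal leaves, so that the sole way for a non-root vertex to escape contraction is to have two or more children. Once this structural fact is in place, the remainder is a routine double count, and the bound is seen to be tight precisely when every junction vertex is binary (out-degree exactly $2$).
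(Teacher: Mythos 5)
Your proof is correct and follows essentially the same route as the paper: both arguments classify skeleton vertices into root, terminals, and junction vertices, then bound the number $J$ of junctions by a double count to get $J \le K - d_r$ and hence at most $2K+1-d_r$ vertices. The only cosmetic difference is that you sum out-degrees (equal to the number of arcs, $K+J$) while the paper sums total degrees (equal to $2(K+J)$, with junctions contributing at least $3$ each); these are the same computation.
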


\begin{proof}
Let $n_J$ be the number of junction vertices in the skeleton $R$ of $S$. Clearly, $R$ contains $n_R=K+1+n_J$ vertices and $n_R-1$ edges. Since the sum of the degrees of all vertices in $R$ is $2(n_R-1)=2K+2n_J$, we have $2K+2n_J\geq K +d_r+3n_J$,
which implies $n_J\leq K -d_r$ and $n_R\leq 2K+1-d_r$.
\end{proof}

\begin{prop}
\label{prop:pathlength}
{\color{black}Given an inclusion-wise minimal tree $S$ rooted at $r$ and spanning $K$ terminals (all of degree 1)}, the path with minimum number of vertices from root $r$ to a terminal in the skeleton of $S$ contains at most $O(\log (K))$ vertices.
\end{prop}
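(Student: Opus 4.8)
The plan is to exploit the branching structure of the skeleton $R$ of $S$ via a greedy descent toward the sparsest subtree. Two structural facts drive the argument. First, since $R$ is obtained from $S$ by contracting all degree-$2$ vertices, every internal vertex of $R$ other than $r$ has degree at least $3$; as $R$ is rooted at $r$ and each such vertex has exactly one incoming arc, it must have at least two children. Second, by the standing assumption that every leaf lies in $T$ (and, by inclusion-wise minimality, no useless leaves survive), each leaf of $R$ is a terminal, so the subtree of $R$ rooted at any vertex $v$ contains at least one terminal.

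First I would set up the descent. Starting at $v_0=r$, let $m_i$ denote the number of terminals in the subtree of $R$ rooted at $v_i$, so that $m_0=K$. Suppose the current vertex $v_i$ is not a terminal. If $v_i$ has a single child, move to it; otherwise $v_i$ has at least two children, and since the terminals below $v_i$ are partitioned among the subtrees of these children (no terminal sits at the internal vertex $v_i$), at least one child $c$ satisfies $m_c \leq m_i/2$. Move to such a child. This defines a path $r=v_0,v_1,\dots$ in $R$.

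Next I would track the terminal count. Along this path, $m_i$ is at least halved at every step through a vertex with two or more children, while a single-child step can occur only at $r$ (internal non-root vertices branch) and contributes at most one extra edge. Hence after at most $\lceil \log_2 K\rceil$ branching steps we reach a vertex $v_j$ with $m_j=1$. The key point is that such a vertex must be a leaf: an internal vertex of $R$ other than $r$ has at least two children, each of whose subtrees contains at least one terminal, which would force $m_j\geq 2$. Thus $v_j\in T$, and the path from $r$ to $v_j$ uses at most $\lceil \log_2 K\rceil+1$ edges, i.e. $O(\log K)$ vertices; this bounds from above the number of vertices on the shortest root-to-terminal path in the skeleton.

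The main obstacle, and really the only delicate point, is guaranteeing that the descent terminates exactly at a terminal rather than stalling at an internal vertex: this is precisely where the fact that every subtree contains a terminal is used, ruling out an internal vertex with $m_v=1$. The remaining care is bookkeeping at the root, whose degree may be $1$, and verifying that the halving step is valid because at least two children share a fixed total number of terminals, so the minimum among them is at most half the total.
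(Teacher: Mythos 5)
Your proof is correct, but it takes a genuinely different route from the paper's. The paper argues globally, by counting vertices level by level in the skeleton $R$: since every vertex strictly above the shallowest terminal is internal and hence branches (degree at least $3$, except possibly the root), the number of vertices at level $i$ grows geometrically, giving $n_R \geq 2^{l_{\min}-1}$ when $d_r=1$ and $n_R \geq 2^{l_{\min}}-1$ when $d_r \geq 2$; combining this with Property~\ref{prop:numbvert}, which bounds $n_R$ by $2K+1-d_r$, yields $l_{\min}=O(\log K)$. You instead argue locally, via a greedy descent into the child whose subtree carries the fewest terminals: at every branching vertex the terminal count at least halves (the minimum over at least two children is at most the average), a single-child step can occur only at the root, and a vertex whose subtree contains a single terminal must be a leaf (an internal non-root vertex of $R$ has at least two children, each subtree containing a terminal), so your constructed root-to-terminal path has at most $\lceil \log_2 K\rceil + 2$ vertices. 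Both proofs rest on the same two structural facts --- by inclusion-wise minimality every leaf of the skeleton is a terminal, and every internal non-root skeleton vertex has at least two children --- but your descent is self-contained: it never invokes Property~\ref{prop:numbvert}, avoids the two-case analysis on the root degree, and produces an explicit bound directly in terms of $K$ rather than through $n_R$. What the paper's counting buys in exchange is a bound on $l_{\min}$ expressed via $n_R$, reusing the vertex bound it has just established, which is the quantity it also needs elsewhere (e.g., in the proof of Theorem~\ref{th:large-capas-fixedK}, where paths of at most $\log_2(n_i+1)$ vertices in subtrees of given sizes are required). Either way the bound is tight up to constants, as witnessed by the complete binary tree remark following the property.
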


\begin{proof}
Let $R$ be the skeleton of $S$, {\color{black}$n_R$} its number of vertices, and $l_{\min}$ the minimum number of vertices on a path from $r$ to a terminal in $R$.
\begin{itemize}
	\vspace{-0.2cm}\item If $r$ has degree 1 in $S$, then $R$ contains one vertex at levels 1 and 2, and at least $2^{i-2}$ vertices at levels $i=3, \dots, l_{\min}$. Hence, $n_R\geq 2+\sum_{i=1}^{l_{\min}-2}2^{i}=2^{l_{\min}-1}$, which implies
$l_{\min} \leq \log_2(n_R)+1$.
\vspace{-0.2cm}\item If  $r$ has degree at least 2 in $S$, then $R$ contains at least $2^{i-1}$ vertices at levels $i=1, \dots, l_{\min}$. Hence, $n_R\geq \sum_{i=0}^{l_{\min}-1}2^{i}=2^{l_{\min}}-1$, which implies $l_{\min} \leq \log_2(n_R+1)$.
\end{itemize}
\vspace{-0.2cm}In both cases, it follows from Property \ref{prop:numbvert} that $l_{\min} = O(\log(K))$.
\end{proof}

{\color{black}Notice that, if $S$ is a complete binary tree, then $l_{\min} = \Omega(\log(K))$: therefore, up to a constant factor, the bound in the previous property cannot be improved.}

Given a graph $G$ with a root $r$ and $K$ terminals, a {\em potential skeleton} in $G$ is defined as a tree $P$, rooted at $r$, {\color{black}spanning the $K$ terminals}, and such that the only vertices without outgoing arc are the $K$ terminals, while any other vertex, except possibly $r$, has degree at least 3 in $P$ (and hence in $G$). While the skeleton of a solution to an \texttt{ML-CAP-STEINER-TREE} {\color{black}instance} is a potential skeleton, the reverse is not necessarily true, as illustrated in Figure \ref{fig:potential}.  If we select the left arc incident to $r$ in the first potential skeleton of the figure,  then there are no three vertex-disjoint paths from the left neighbor of $r$ to the terminals in $G$.

\begin{figure} [H]
	\begin{center}		\includegraphics[scale=0.85]{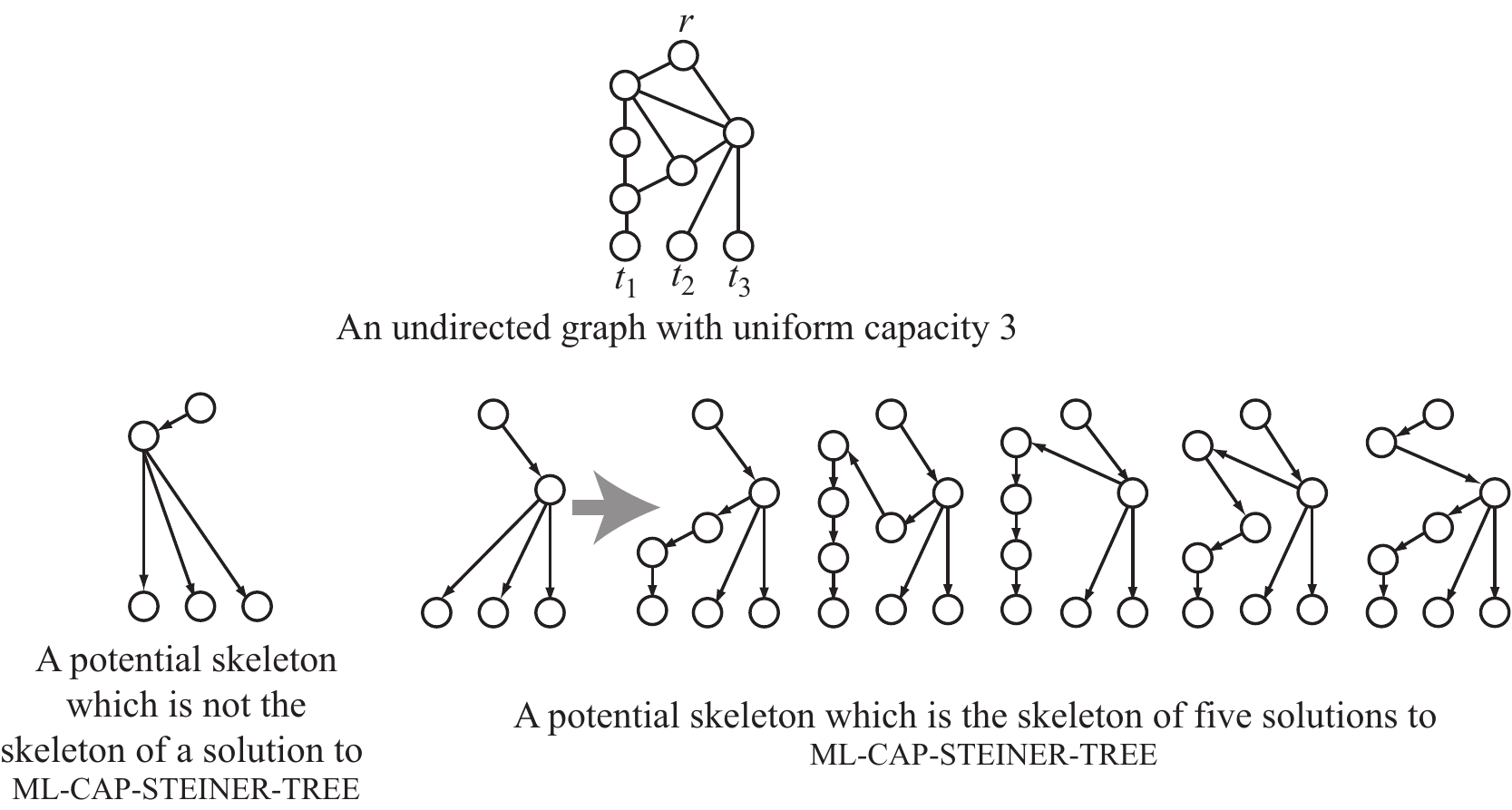}
		\caption{Potential skeletons in a graph $G$.}
		\label{fig:potential}
	\end{center}
\end{figure}

\begin{prop}
\label{enumskelet}
{\color{black}Given a graph $G$ with $n$ vertices, $K$ terminals, and a root vertex $r$, it is possible to enumerate in $O(n^{K-1}K^{O(K)})$ time all potential skeletons of inclusion-wise minimal trees rooted at $r$ and spanning the $K$ terminals in $G$.}
\end{prop}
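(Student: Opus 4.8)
The plan is to observe that a potential skeleton carries only two pieces of information — its combinatorial \emph{shape} and the identity of its \emph{junction vertices} — and to bound the number of choices for each factor separately, multiplying the two bounds.

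First I would record the structural bounds. The degree-counting argument of Property \ref{prop:numbvert} applies verbatim to any tree whose leaves are exactly the $K$ terminals and whose non-root internal vertices have degree at least $3$; hence a potential skeleton has $n_J \le K - d_r \le K-1$ junction vertices, and therefore at most $2K$ vertices and $2K-1$ super-arcs in total. Next I would separate shape from embedding. Call the \emph{shape} the abstract rooted tree in which the root is marked, the $K$ leaves are labelled by $t_1,\dots,t_K$, and the remaining junction nodes are anonymous and of degree at least $3$. A potential skeleton is obtained from a shape by injectively assigning its junction nodes to vertices of $V\setminus(\{r\}\cup T)$, the root mapping to $r$ and each terminal leaf to its terminal, with the super-arcs inherited from the shape; conversely every potential skeleton arises this way. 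The crucial point, to be stated explicitly, is that the super-arcs carry no further data — the question of realizing them by vertex-disjoint paths is deferred (cf. the discussion of Figure \ref{fig:potential}), not enumerated here. Thus it suffices to enumerate all (shape, assignment) pairs.

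For the counting I would argue as follows. For the number of shapes, since a shape has at most $2K$ nodes, treating all nodes as labelled and invoking Cayley's formula bounds the number of trees on $m \le 2K$ labelled nodes by $m^{m-2}\le (2K)^{2K}=K^{O(K)}$; designating the root, fixing which $K$ nodes are the terminal leaves and with which labels, and discarding trees that violate the degree-$\ge 3$ condition multiplies this only by a further $\binom{2K}{K}K!\,(2K) = K^{O(K)}$ factor, so there are $K^{O(K)}$ shapes. For the number of assignments, each shape has at most $K-1$ junction nodes, each mapped to one of the $n$ vertices, giving at most $n^{K-1}$ assignments per shape. Multiplying, the total number of potential skeletons is at most $n^{K-1}K^{O(K)}$. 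The algorithm then generates every shape (within the same $K^{O(K)}$ time) and, for each, every injective junction assignment, emitting the resulting tree; precomputing all vertex degrees once in linear time lets the degree-$\ge 3$ filter cost $O(1)$ per junction vertex, so each of the $O(K)$-vertex objects is produced with $\mathrm{poly}(K)$ overhead, absorbed into $K^{O(K)}$, for a total of $O(n^{K-1}K^{O(K)})$.

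I expect the main obstacle to be the bookkeeping in the shape count: one must combine the bound $n_J\le K-1$ from Property \ref{prop:numbvert} with the labelled-tree count carefully enough to be sure the result is genuinely $K^{O(K)}$ and not larger, and one must make precise that distinct (shape, assignment) pairs may yield the same potential skeleton (harmless overcounting). A secondary subtlety is the running-time accounting: landing exactly at $O(n^{K-1})$ rather than $n^{K-1}\cdot\mathrm{poly}(n)$ relies on the convention that writing down and checking an $O(K)$-vertex object costs $\mathrm{poly}(K)$ under unit-cost operations, which is why the per-object work is charged to the $K^{O(K)}$ factor and not to a polynomial in $n$.
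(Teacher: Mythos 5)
Your proposal is correct and follows essentially the same route as the paper: both bound the number of junction vertices by $K-1$ via Property~\ref{prop:numbvert}, invoke Cayley's formula to get a $K^{O(K)}$ bound on the tree structures over the root, the terminals, and the junction vertices, multiply by the $O(n^{K-1})$ choices of junction vertices, and filter out trees violating the potential-skeleton definition (orientation from $r$, degree conditions). The only difference is cosmetic ordering---you enumerate abstract shapes first and then inject junction nodes into $V$, whereas the paper picks the junction vertex set first and then enumerates labelled trees on it---together with some extra bookkeeping (harmless overcounting, per-object cost accounting) that the paper leaves implicit.
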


\begin{proof}
As shown in the proof of Property \ref{prop:numbvert}, the skeleton of such a tree contains at most $K-1$ junction vertices. There are $O(n^{K-1})$ ways of choosing at most $K-1$ junction vertices, and, for each such choice, it follows from Cayley's formula that there are at most $(2K)^{2K-2}$ different labelled trees containing only $r$, the $K$ terminals, and the chosen junction vertices. We can orient the edges of every labelled tree from the root $r$ towards the other vertices, which takes $O(K)$ time per tree, and reject the labelled rooted trees that do not satisfy the definition of a potential skeleton. The whole procedure therefore takes
$O(n^{K-1}K^{O(K)})$ time.
\end{proof}

\section{Links with vertex-disjoint paths problems}
\label{sec:linkpath}

We detail in this section several links between \texttt{ML-CAP-STEINER-TREE} and some vertex-disjoint paths problems. We begin with a simple complexity result in the case of unit capacities. In this case, an optimal solution to \texttt{ML-CAP-STEINER-TREE} necessarily consists of $K$ vertex-disjoint paths with minimum total length, each one linking $r$ to a terminal, and we obtain the following theorem.

\begin{theorem}
\label{th:CSTP-cap1}
 \texttt{ML-CAP-STEINER-TREE} is polynomial-time solvable if $c(e) = 1$  $\forall e \in E$.
\end{theorem}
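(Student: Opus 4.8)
The plan is to exploit the structural observation stated just before the theorem: when $c(e)=1$ for every $e\in E$, any feasible tree $S$ is forced to be a collection of $K$ paths from $r$ to the terminals that are pairwise vertex-disjoint except at $r$. First I would make this precise. In a tree, every vertex $v\neq r$ has a unique path from $r$, so if some non-root vertex $v$ were an ancestor of two terminals, then the subtree $S(v)$ would contain at least two terminals; but the arc $(u,v)$ entering $v$ has capacity $1$, which by the equivalent formulation of the capacity constraints forces $|T_v|\le 1$, a contradiction. Hence no vertex other than $r$ can be a junction vertex, the skeleton of $S$ is a star centered at $r$, and $S$ consists of $K$ internally vertex-disjoint paths $r\to t_1,\dots,r\to t_K$. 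Conversely, any such family of paths is a feasible tree. Thus minimizing $\ell(S)$ is exactly the problem of finding $K$ minimum-total-length paths from the single source $r$ to the prescribed terminals that are vertex-disjoint apart from $r$.

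Next I would solve this single-source vertex-disjoint paths problem by reduction to a minimum-cost flow computation, using the standard vertex-splitting device. I would replace each vertex $v\in V\setminus\{r\}$ by two copies $v_{\mathrm{in}},v_{\mathrm{out}}$ joined by an internal arc $(v_{\mathrm{in}},v_{\mathrm{out}})$ of capacity $1$ and length $0$ (so that at most one path uses $v$); turn each arc $(u,v)$ of $G$ into $(u_{\mathrm{out}},v_{\mathrm{in}})$ with capacity $1$ and length $\ell(u,v)$; and add a super-sink $t^\ast$ with an arc $((t_i)_{\mathrm{out}},t^\ast)$ of capacity $1$ and length $0$ for each terminal $t_i$. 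I would then seek a minimum-cost integral flow of value $K$ from $r$ to $t^\ast$. Because all capacities are integral, an integral optimum exists and can be computed in polynomial time, and since every arc has capacity $1$ the cost $\sum_e \ell(e)f(e)$ coincides with the sum of lengths of arcs carrying positive flow.

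Finally I would translate an integral optimal flow back into a solution. A flow of value $K$ exists if and only if a feasible tree exists, so infeasibility is detected. Given such a flow, a flow decomposition yields $K$ unit paths from $r$ to $t^\ast$, and the capacity-$1$ internal arcs guarantee that these paths share no vertex other than $r$; deleting $t^\ast$ recovers $K$ internally vertex-disjoint $r\to t_i$ paths forming a feasible tree of the same (minimum) length. I expect the only genuine subtlety to be the undirected case together with the argument that the decomposition produces \emph{simple, vertex-disjoint} paths: since lengths are nonnegative and every arc has capacity $1$, an optimal flow may be assumed free of directed cycles and, in the undirected model where each edge $[u,v]$ is replaced by the two arcs $(u_{\mathrm{out}},v_{\mathrm{in}})$ and $(v_{\mathrm{out}},u_{\mathrm{in}})$, free of the $2$-cycles that would correspond to traversing one edge in both directions; after cancelling these the flow decomposes into exactly the desired vertex-disjoint paths. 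Verifying this cycle-cancellation step carefully is the main, though essentially routine, obstacle.
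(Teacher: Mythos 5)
Your proposal is correct and follows essentially the same route as the paper: reduce the unit-capacity case to finding $K$ internally vertex-disjoint minimum-total-length paths from $r$ to the terminals (via a super-sink adjacent to all terminals), then solve this by vertex-splitting and a minimum-cost flow of $K$ units, handling the undirected case by replacing each edge with two opposite arcs. Your extra care about flow decomposition and cancelling $2$-cycles in the undirected model is a sound elaboration of the paper's brief remark that only one of two opposite arcs carries positive flow.
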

\begin{proof}
Assume the input graph $G$ is directed, and let us add to $G$ a new vertex $s$ and an arc $(t_k,s)$ of length 0 and capacity 1 for each terminal $t_1,\ldots,t_k$.  Solving \texttt{ML-CAP-STEINER-TREE} then amounts to finding $K$ internally vertex-disjoint paths from $r$ to $s$, with minimum total length. It is well-known that this can be done in polynomial time, but we briefly recall how. We consider the graph $H$ obtained from $G$ by replacing each vertex $v \notin  \{r,s,t_1,\ldots,t_k\}$ by an arc $(v',v'')$ of length 0, and each arc $(v_1,v_2)$ (resp. $(r,v),(v,t_i), i=1,...,k$) by an arc $(v_1'',v_2')$ (resp. $(r,v'),(v'',t_i), i=1,...,k$) having the same length as the original one. All capacities are set equal to 1. It is then sufficient to determine a minimum-cost flow of $k$ units from $r$ to $s$ in $H$ by using any min-cost flow algorithm \cite{gondran}. Recall that, if the graph $G$ is undirected, we can transform it into a directed one by replacing each edge by two opposite arcs. In this case, only one of two opposite arcs associated to an edge carries a positive flow in the solution.
\end{proof}

\noindent The following problem is a generalization of \texttt{ML-VDISJ-PATH}.\\

\noindent\textbf{Minimum-Length Labelled Vertex-Disjoint Paths Problem}  (\texttt{ML-LAB-VDISJ-PATH})

\noindent\emph{Input}. A graph $G=(V,E)$; an integer $k \geq 1$; a nonnegative length function $\ell$ on $E$; a label $\lambda(e)\in \{1,\dots,k\}$ on every $e\in E$; $p$ disjoint vertex  pairs ($s_i$, $s'_i$), each one being associated with a set $L_i \subseteq \{1,\dots,k\}$ of labels.

\noindent\emph{Objective}: find $p$ mutually vertex-disjoint paths $\mu_1,\ldots,\mu_p$ of minimum total length  so that $\mu_i$ links $s_i$ to $s'_i$ and all labels on $\mu_i$ belong to $L_i$ $(i=1,\ldots,p)$.
\\

 \noindent When $\ell(e)=0$ for all $e\in E$, \texttt{ML-VDISJ-PATH} (resp. \texttt{ML-LAB-VDISJ-PATH}) turns into a decision problem, denoted by \texttt{VDISJ-PATH} (resp. \texttt{LAB-VDISJ-PATH}).
Notice that \texttt{ML-VDISJ-PATH} is the special case of \texttt{ML-LAB-VDISJ-PATH} where $L_i = \{1,\dots,k\}$ for $i=1,\ldots,p$. We now show several links between \texttt{ML-CAP-STEINER-TREE} and some variants of \texttt{ML-VDISJ-PATH} and \texttt{ML-LAB-VDISJ-PATH}.

\begin{theorem}\label{th:reductionFromMVDP}
\texttt{ML-VDISJ-PATH} with $p$ source-sink pairs is polynomially reducible to \texttt{ML-CAP-STEINER-TREE} with $p(p+1)/2$ terminals.
\end{theorem}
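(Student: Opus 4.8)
The plan is to construct, from an arbitrary instance of \texttt{ML-VDISJ-PATH} with source-sink pairs $(s_1,s'_1),\dots,(s_p,s'_p)$, an instance of \texttt{ML-CAP-STEINER-TREE} whose optimal tree is forced to route exactly $p$ vertex-disjoint paths corresponding to the required linkages, while the extra terminals (we have $p(p+1)/2$ of them, i.e. $\binom{p+1}{2}$) and a carefully chosen capacity pattern serve only to ``pin down'' the structure of any feasible tree. First I would take the graph $G$ of the \texttt{ML-VDISJ-PATH} instance, add a fresh root $r$, and think of the quantity $p(p+1)/2 = 1+2+\dots+p$ as suggesting a partition of the terminals into $p$ groups of sizes $1,2,\dots,p$ (or a triangular array indexed by pairs $i\le j$); the grouping is what will let capacities distinguish the $p$ source-sink requirements from one another. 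I would attach to each source $s_i$ (and symmetrically route through each sink $s'_i$) a gadget of terminals whose count equals the prescribed group size, and I would set the capacity of the arcs entering each gadget so that the only way to feed all terminals of a group through the network is to send a single path along the intended $s_i$--$s'_i$ route.

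Next I would set up the lengths: all the gadget edges and the root-attachment edges get length $0$, so that the total length of a feasible tree equals exactly the total length of the $p$ paths it uses inside the original graph $G$. This gives the cost-preserving correspondence that makes the reduction not merely a reduction for the decision problem but a length-preserving (hence optimization-preserving) one, which is what the theorem needs. I would then argue the two directions of correctness. For the forward direction, given $p$ vertex-disjoint paths $\mu_1,\dots,\mu_p$ in $G$, I would assemble them together with the gadgets and root edges into a tree $S$, and check that at every arc the number of terminals in the subtree below it respects $c(e)$ — the capacities having been chosen precisely so that a correct routing is feasible and has length $\sum_i \ell(\mu_i)$.

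The reverse direction is where the real work lies, and it is the step I expect to be the main obstacle. I would need to show that \emph{any} feasible capacitated Steiner tree $S$, once restricted to $G$, decomposes into $p$ paths that are pairwise vertex-disjoint and that link $s_i$ to $s'_i$ for the correct indices. The vertex-disjointness must be extracted from the capacity constraints: because each arc $e=(u,v)$ satisfies $|T_v|\le c(e)$, two distinct required paths cannot share an internal vertex without overloading some arc above that vertex, so the capacity bounds translate directly into the disjointness of the routes. The delicate points are (i) ensuring the grouping forces each path to connect its \emph{own} source to its \emph{own} sink (rather than some crossed-up matching), which is exactly what the distinct group sizes $1,2,\dots,p$ buy us, since a subtree carrying a group of a given size can only be the one gadget with that many terminals; and (ii) ruling out that $S$ cheats by sending part of a group's demand around a different route, which the tightness of the capacities on the attachment arcs prevents. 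Once these structural facts are in place, the equality of optima in both directions follows immediately from the length bookkeeping, completing the reduction.
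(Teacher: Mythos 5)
Your high-level plan coincides with the paper's construction (a fresh root, $1+2+\dots+p$ terminals split into groups of the distinct sizes $1,\dots,p$, zero lengths on all attachment edges so that the correspondence is length-preserving), but as written the proposal has two genuine gaps. First, the construction itself is underspecified and, taken literally, wrong: you attach the terminal gadgets ``to each source $s_i$''. If the terminals hang off the sources, a feasible tree never needs to reach the sinks at all, and no $s_i$--$s'_i$ path is forced. In the paper's construction the $i$ terminals $t_{i_1},\dots,t_{i_i}$ of group $i$ are attached to the \emph{sink} $s'_i$ by capacity-$1$, length-$0$ edges, the root is joined to the \emph{source} $s_i$ by a length-$0$ edge of capacity exactly $i$, and every original edge of $G$ gets capacity $p$. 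The numerical fact that drives everything, and which you never state, is that the capacities of the root edges sum to exactly $p(p+1)/2=K$: consequently, in any feasible tree each edge $(r,s_i)$ is \emph{saturated}, i.e.\ $r$ is $(r,s_i)$-linked to exactly $i$ terminals. Your vaguer requirement that capacities be ``chosen so that the only way to feed all terminals of a group is a single path along the intended route'' is precisely the claim that needs proof, not a design principle one may assume.

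Second, in the reverse direction you extract vertex-disjointness from the wrong source. You argue that two required paths sharing an internal vertex would ``overload some arc above that vertex'', but no local overload need occur: for instance, the groups of sizes $1$ and $2$ could both be routed through the capacity-$3$ root edge, saturating it exactly. Disjointness is in fact automatic from the \emph{tree} structure of a feasible solution: subtrees hanging at distinct children of $r$ are vertex-disjoint, so once each group is routed through its own root edge, the extracted $s_i$--$s'_i$ paths cannot meet. What the capacities buy is the correct \emph{matching} of sources to sinks, and even that requires the saturation count combined with a top-down induction: since all $i$ terminals of group $i$ are pendant on the single vertex $s'_i$, the tree reaches them through one root edge; group $p$ needs a root edge carrying at least $p$ terminals, and only $(r,s_p)$ qualifies (carrying exactly $p$); then group $p-1$ must use $(r,s_{p-1})$, and so on with $i$ decreasing. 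Your statement that ``a subtree carrying a group of a given size can only be the one gadget with that many terminals'' gets this backwards --- without exact saturation, a small group could perfectly well ride a larger-capacity root edge, and it is only the downward induction that excludes it. So the mechanisms you invoke for points (i) and (ii) would not survive being written out; the paper's counting-plus-induction argument is the missing idea.
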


\begin{proof}
Assume first that the input graph $G=(V,E)$ of the  \texttt{ML-VDISJ-PATH} instance is undirected. Let $G'=(V',E')$ be defined as follows: $V'$ is obtained  by adding to $V$ a vertex $r$ and $K=p(p+1)/2$ terminals $t_{i_j}$, $1\leq j\leq i \leq p$; $E'$ is obtained from $E$ by adding an edge of capacity $i$ and length 0 between $r$ and  every $s_i$, $i=1,\ldots,p$, as well as edges of capacity 1 and length 0 between $s'_i$ and every $t_{i_j}$, $1\leq j\leq i \leq p$. The edges of $E$ keep their original length, while their capacity is fixed to $p$. We prove that solving \texttt{ML-VDISJ-PATH} in $G$ is equivalent to solving \texttt{ML-CAP-STEINER-TREE} in $G'$.  The construction of $G'$ from $G$ is illustrated in Figure \ref{fig:mndpcstp} for $p=3$, with the pair $(c(e),\ell(e))$ on every $e\in E'$.

\begin{figure} [htbp]
\begin{center}
\includegraphics[scale=0.8]{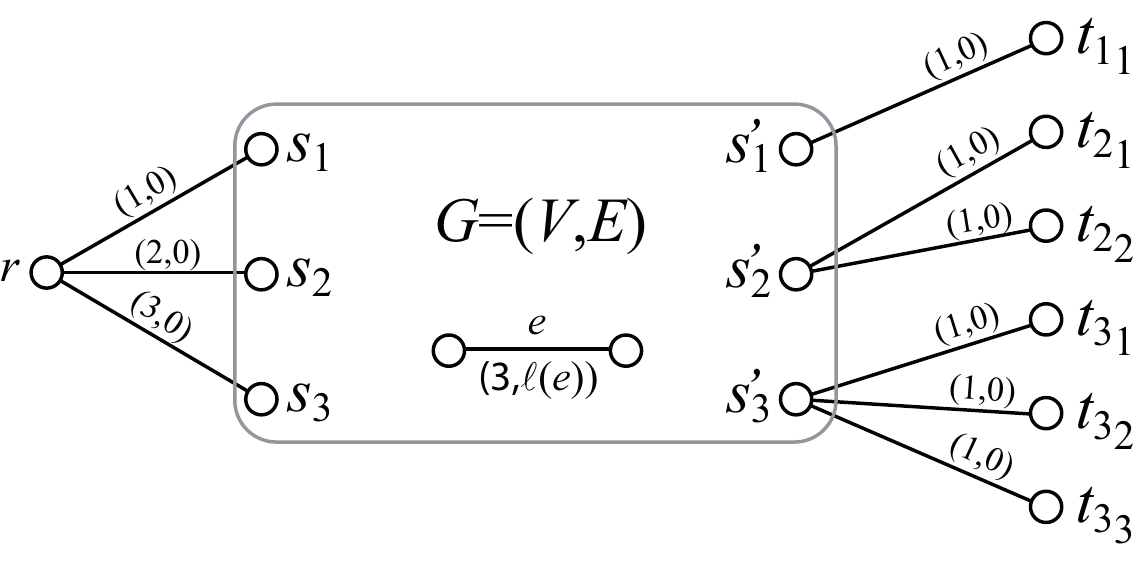}
\caption{From \texttt{ML-VDISJ-PATH}  in $G=(V,E)$ with $p=3$ to \texttt{ML-CAP-STEINER-TREE} in $G'=(V',E')$ with $K=6$.}

\label{fig:mndpcstp}
\end{center}
\end{figure}

Given a solution $S$ to \texttt{ML-VDISJ-PATH} in $G$, one can get a solution $S'$ to\break \texttt{ML-CAP-STEINER-TREE} in $G'$ of same total length by orienting all paths from $s_i$ to $s'_i$, $i=1,\ldots,p$, and then adding the $p$ arcs $(r,s_i)$, as well as the $p(p+1)/2$ arcs incident to the terminals.

Now, assume there is a solution $S'$ for \texttt{ML-CAP-STEINER-TREE} in $G'$. Since there are $p(p+1)/2$ terminals while the sum of the capacities of the  edges incident to $r$ is precisely this amount, we know that $r$ is $(r,s_i)$-linked to exactly $i$ terminals in $S'$, $i=1,\ldots,p$. In particular,
$r$ is $(r,s_p)$-linked to $p$ terminals, and these are necessarily $t_{p_1},\ldots,t_{p_p}$, otherwise $S'$ would contain a cycle. Using the same argument, with $i$ decreasing from $p$ to $1$, we get that $r$ is $(r,s_i)$-linked to $t_{i_1},\ldots,t_{i_i}$.  Notice that all paths from $r$ to $t_{i_j}$, $j=1,...,i$, use the same sub-path from $s_i$ to $s'_i$.  Hence, by removing from $S'$ all arcs incident to $r$ and to the terminals, we get a solution $S$ to \texttt{ML-VDISJ-PATH} with same total length.

The proof for digraphs is obtained by replacing ``edge" by ``arc" in the construction of $G'$.
\end{proof}

\begin{theorem} \label{cor:2MVDP-unif}
Given two integers $K$ and $c$ with $K \geq 4$ and $2 \leq c \leq K-2$, \texttt{ML-VDISJ-PATH} with $p=2$ source-sink pairs is polynomially reducible to\break \texttt{ML-CAP-STEINER-TREE} with $K$ terminals and uniform capacity $c$.
\end{theorem}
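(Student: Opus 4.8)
The plan is to adapt the gadget of Theorem~\ref{th:reductionFromMVDP} to the case $p=2$, but to engineer a \emph{uniform} capacity $c$ and, crucially, to break the symmetry between the two demands so that the correct pairing $(s_1,s_1')$, $(s_2,s_2')$ is enforced. Starting from an undirected \texttt{ML-VDISJ-PATH} instance $G=(V,E)$ with pairs $(s_1,s_1')$ and $(s_2,s_2')$ (the directed case being identical up to orienting the new edges away from $r$), I would build $G'$ by adding a root $r$, the two length-$0$ edges $[r,s_1]$ and $[r,s_2]$, a bundle of $c$ terminals each joined to $s_1'$ by a length-$0$ edge, a single terminal joined to $s_2'$, one extra terminal $d$ joined directly to $s_2$, and $K-c-2$ further terminals joined directly to $r$. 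Every new edge has length $0$, every edge of $E$ keeps its length, and \emph{all} capacities are set to $c$. The hypothesis $c\le K-2$ guarantees $K-c-2\ge 0$, so the total number of terminals is exactly $c+1+1+(K-c-2)=K$, and the bound $c\ge 2$ will be used below; the construction is clearly polynomial.

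For the forward direction, given two vertex-disjoint paths $\mu_1\colon s_1\to s_1'$ and $\mu_2\colon s_2\to s_2'$ of total length $L$, I orient them from $r$ and hang the gadget on top: the subtree below $s_1$ then contains exactly the $c$ terminals of the first bundle, while the subtree below $s_2$ contains the single $s_2'$-terminal together with $d$, i.e. $2\le c$ terminals, and all remaining terminals sit directly under $r$. Hence every capacity constraint holds and the resulting tree has length $L$.

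The heart of the proof, and the step I expect to be the main obstacle, is the converse together with the enforcement of the right pairing using only uniform capacities. Given a feasible tree $S'$ in $G'$, the $c$ terminals attached to $s_1'$ force $S(s_1')$ to contain exactly $c$ terminals, so the path from $r$ to $s_1'$ is \emph{dedicated}: each of its edges, and in particular its first edge out of $r$, carries exactly $c$ terminals and is therefore saturated. Since the only edges leaving $r$ into $G$ are $[r,s_1]$ and $[r,s_2]$, this path enters $G$ through $s_1$ or through $s_2$. Here the planted terminal $d$ does its job: if the dedicated path entered through $s_2$, then $S(s_2)$ would contain the $c$ bundle terminals together with $d$, so the edge $[r,s_2]$ would carry at least $c+1>c$ terminals, a contradiction. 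Thus the dedicated path enters through $s_1$, the edge $[r,s_1]$ is saturated by the first bundle alone, and $s_2$ must be reached directly from $r$; consequently $s_1$ and $s_2$ are distinct children of $r$, their subtrees are disjoint, and extracting the $r$--$s_1'$ and $r$--$s_2'$ tree-paths yields two vertex-disjoint paths $s_1\to s_1'$ and $s_2\to s_2'$ whose total length equals that of $S'$. Comparing optimal values in both directions then gives the reduction. The only delicate point to write carefully is that the large bundle must be \emph{exactly} of size $c$ (to saturate a single root edge) while the opposite demand must be kept small enough---here a single terminal plus the pendant $d$, using $c\ge 2$---so that the swapped pairing is the unique configuration ruled out by the capacity constraints.
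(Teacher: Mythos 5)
Your proof is correct and takes essentially the same approach as the paper's: both adapt the gadget of Theorem~\ref{th:reductionFromMVDP} by attaching a saturating bundle of $c$ terminals to one sink, a single symmetry-breaking pendant terminal near the other pair, and $K-c-2$ filler terminals at $r$, so that the uniform capacity $c$ forces the dedicated path and hence the correct pairing. The only difference is cosmetic: the paper places the bundle at $s'_2$ and hangs its blocker terminal $t_1$ on an auxiliary vertex $v$ inserted between $r$ and $s_1$ (so that $[r,v]$ has residual capacity $c-1$), whereas you place the bundle at $s'_1$ and attach the blocker $d$ directly to $s_2$ --- a mirror-image variant whose capacity-counting argument works equally well.
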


\begin{proof}
The proof is similar to the previous one. The main difference is the definition of $G'=(V',E')$. In the undirected case, $V'$ is obtained by adding to $V$ two vertices $r$ and $v$ and $K$ terminals $t_1,\ldots,t_K$; $E'$ is obtained from $E$ by adding the edges $[r,v], [r,s_2], [v,s_1], [v,t_1], [s'_1,t_2], [s'_2,t_i]$ for $i=3,\ldots,c+2$, and $[r,t_i]$ for\break $i=c+3,\ldots,K$. The edges of $E$ keep their original length while those in $E'\setminus E$ have length 0. All capacities are set equal to $c$. We then prove that \texttt{ML-VDISJ-PATH} on $G$ is equivalent to \texttt{ML-CAP-STEINER-TREE} in $G'$ in a similar way as in the previous theorem. The only path that goes from $r$ to $t_1$ contains $rv$, and hence the remaining capacity on this edge is $c-1$: this implies that the paths from $r$ to the $c$ terminals adjacent to $s'_2$ must contain $rs_2$, and the rest of the proof is unchanged. The
proof for digraphs is obtained by adding arcs instead of edges to obtain $G'$.
\end{proof}

\begin{theorem} \label{cor:pMVDP-unif}
\texttt{ML-VDISJ-PATH} with $p \geq 2$ source-sink pairs is polynomially reducible to \texttt{ML-CAP-STEINER-TREE} with $p^2$ terminals and uniform capacity $p$.
\end{theorem}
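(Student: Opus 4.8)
The plan is to follow the reduction of Theorem~\ref{th:reductionFromMVDP}, the single new issue being to force all capacities to equal $p$ rather than exploiting the distinct values $1,2,\dots,p$. Starting from an \texttt{ML-VDISJ-PATH} instance on $G=(V,E)$ with pairs $(s_1,s'_1),\dots,(s_p,s'_p)$, I would build $G'=(V',E')$ by adding a root $r$, an edge (an arc, in the directed case) $(r,s_i)$ for each $i$, the $i$ terminals hanging from $s'_i$ exactly as in Theorem~\ref{th:reductionFromMVDP}, and, to compensate for the uniformization of the capacities, $p-i$ extra \emph{padding} terminals attached directly to $s_i$. Every new edge gets length $0$, every edge of $E$ keeps its length, and all capacities are set to $p$. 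Each source $s_i$ then has to feed $i+(p-i)=p$ terminals, so $G'$ has $\sum_{i=1}^p p=p^2$ terminals and uniform capacity $p$, as required.

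For the direction from \texttt{ML-VDISJ-PATH} to \texttt{ML-CAP-STEINER-TREE}, given vertex-disjoint paths $\mu_1,\dots,\mu_p$ I would orient each $\mu_i$ from $s_i$ to $s'_i$, add the arcs $(r,s_i)$ and the arcs joining $s'_i$ (resp.\ $s_i$) to its $i$ (resp.\ $p-i$) terminals. This is a tree of the same total length (all added edges have length $0$); its subtree $S(s_i)$ carries exactly $p$ terminals, and along $\mu_i$ only the $i$ terminals hanging from $s'_i$ are descendants, so no capacity constraint is violated.

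Conversely, from a feasible tree $S'$ in $G'$ I would first observe that $r$ is adjacent only to $s_1,\dots,s_p$, so the $p$ edges leaving $r$ have total capacity $p^2$, which equals the number of terminals; hence every edge $(r,s_i)$ is present and saturated, the subtrees $S(s_1),\dots,S(s_p)$ partition $T$, and each contains exactly $p$ terminals. In particular no $s_i$ lies in $S(s_j)$ with $j\neq i$, so the $p-i$ padding terminals of $s_i$ are confined to $S(s_i)$ and the $i$ terminals of $s'_i$ are confined to $S(s_{\sigma(i)})$, where I write $\sigma(i)=j$ whenever $s'_i\in S(s_j)$. Counting terminals in the saturated subtree $S(s_j)$ then yields $(p-j)+\sum_{i:\,\sigma(i)=j} i=p$, that is $\sum_{i:\,\sigma(i)=j} i=j$ for every $j$.

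The step I expect to be the crux, and the one that replaces the distinct-capacity argument of Theorem~\ref{th:reductionFromMVDP}, is to deduce that $\sigma$ is the identity. This follows from a short counting argument: each equation $\sum_{i:\,\sigma(i)=j} i=j\geq 1$ forces at least one sink to be assigned to each of the $p$ sources, and since there are only $p$ sinks the assignment must place a single sink per source; the surviving equation $i=j$ for the unique $i$ with $\sigma(i)=j$ then gives $\sigma(i)=i$. Consequently each $s'_i$ lies in $S(s_i)$, the $s_i$--$s'_i$ subpath of $S'$ uses only edges of $G$ (the padding and terminal edges branch off at $s_i$ and $s'_i$, and $r$ is not on the path), these $p$ subpaths are vertex-disjoint because the subtrees $S(s_i)$ are, and their total length is at most $\ell(S')$. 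Together the two directions show that an optimal \texttt{ML-CAP-STEINER-TREE} solution in $G'$ and an optimal \texttt{ML-VDISJ-PATH} solution in $G$ have the same value, giving the reduction; the directed case is identical once edges are replaced by arcs.
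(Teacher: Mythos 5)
Your proof is correct and follows essentially the same reduction as the paper: pad each source's load up to exactly $p$ terminals via zero-length attachments, set all capacities to $p$, and recover the identity assignment of sinks to sources by counting terminals in the saturated subtrees hanging from $r$. The only (harmless) difference is that the paper attaches the $p-i$ padding terminals to intermediate vertices $v_i$ inserted between $r$ and $s_i$ rather than directly to $s_i$, and your explicit counting argument showing $\sigma=\mathrm{id}$ spells out what the paper dispatches as ``similar to Theorem~\ref{th:reductionFromMVDP}''.
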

\begin{proof}
Again, the proof is similar to the one  of Theorem \ref{th:reductionFromMVDP}. In this case, $G'=(V',E')$ is constructed as follows. $V'$ is obtained  by adding to $V$ a vertex $r$, $p-1$ vertices $v_1,\ldots,v_{p-1}$ and $p^2$ terminals $t_{i_j}$ with $1\leq i,j \leq p$; $E'$ is obtained from $E$ by adding the edges $[r,s_p]$, $[r,v_i]$ and $[v_i,s_i]$ for $i=1,\ldots,p-1$, as well as edges between $s'_i$ and every $t_{i_j}$ with $1\leq j\leq i \leq p$ and edges between $v_i$ and every $t_{i_j}$ with $1\leq i< j \leq p$. The edges of $E$ keep their original length while those in $E'\setminus E$ have length 0. All capacities are set equal to $p$. We then prove that \texttt{ML-VDISJ-PATH} on $G$ is equivalent to \texttt{ML-CAP-STEINER-TREE} in $G'$ in a similar way as in Theorem \ref{th:reductionFromMVDP}. Notice that, in this case, given any solution $S'$ to \texttt{ML-CAP-STEINER-TREE} in $G'$, $r$ is necessarily $(r,v_i)$-linked to terminals $t_{i_j}$ ($j=1,\ldots,p$) for all $i=1,\ldots,p-1$, and $r$ is $(r,s_p)$-linked to terminals $t_{p_j}$ ($j=1,\ldots,p$).
\end{proof}

\vspace{-0.4cm}\begin{rmk}
{\em The results stated in Theorems \ref{th:reductionFromMVDP}, \ref{cor:2MVDP-unif} and \ref{cor:pMVDP-unif} are also valid for \texttt{ML-VDISJ-PATH} and  \texttt{ML-CAP-STEINER-TREE} with strictly positive lengths, since the arcs or edges added to $G$ in order to obtain $G'$ can have arbitrary lengths. Indeed,
the total length of a solution $S$ to \texttt{ML-VDISJ-PATH} will then differ from the total length of the corresponding solution $S'$ to \texttt{ML-CAP-STEINER-TREE} by a value equal to the total length of the added arcs or edges.}
\end{rmk}

\vspace{-0.1cm}We next show that, when the number $K$ of terminals is fixed, \texttt{ML-CAP-STEINER-TREE} is polynomially reducible to \texttt{ML-LAB-VDISJ-PATH}.

\begin{theorem}\label{th:reductionToMLVDP}
When $K\geq 1$ is fixed, \texttt{ML-CAP-STEINER-TREE}  can be reduced in polynomial time to \texttt{ML-LAB-VDISJ-PATH} with a fixed number of source-sink pairs.
\end{theorem}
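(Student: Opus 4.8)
The plan is to exploit the fact that, since $K$ is fixed, an inclusion-wise minimal optimal solution has a skeleton with only a constant number of arcs, and that all candidate skeletons can be enumerated in polynomial time. Concretely, I would first invoke Property \ref{enumskelet} to list, in polynomial time, all potential skeletons $P$ of inclusion-wise minimal trees rooted at $r$ and spanning $T$; by Property \ref{prop:numbvert} each such $P$ has at most $2K+1$ vertices and hence at most $2K$ arcs, a constant. For each enumerated $P$ I would build one instance of \texttt{ML-LAB-VDISJ-PATH} with one source-sink pair per arc of $P$ (so at most $2K$ pairs, a fixed number), call the \texttt{ML-LAB-VDISJ-PATH} oracle on it, and finally return the cheapest capacitated tree obtained over all skeletons (reporting infeasibility if every call fails). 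Since there are polynomially many skeletons, this is a polynomial-time Turing reduction, provided each single instance faithfully computes a minimum-length realization of $P$ as a tree of $G$ respecting the capacities.

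The core of the argument is the construction of the graph $G'=(V',E')$ associated with a fixed skeleton $P$. Every vertex $x$ of $G$ that is neither the root, a terminal, nor a guessed junction is split into an entry copy $x'$ and an exit copy $x''$ joined by an internal arc $(x',x'')$ of length $0$, with every arc $(x,y)$ (resp. $(y,x)$) of $G$ redirected to leave $x''$ (resp. enter $x'$); as in the proof of Theorem \ref{th:CSTP-cap1}, this forces each such vertex to be used by at most one path. Each junction vertex $v$ of $P$, with in-arc from its parent and out-arcs toward its children $w_1,\dots,w_m$, is instead split into one entry port $v^{\leftarrow}$ carrying all the arcs of $G$ that enter $v$, and $m$ exit ports $v^{\rightarrow}_1,\dots,v^{\rightarrow}_m$, each carrying a private copy of all the arcs of $G$ that leave $v$; the root is split like a junction but with its entry port omitted. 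The source-sink pair associated with a skeleton arc $(a,b)$ is then $(a^{\rightarrow}_{j},\,b^{\leftarrow})$, where $a^{\rightarrow}_{j}$ is the exit port of $a$ dedicated to the child $b$ and $b^{\leftarrow}$ is the entry port of $b$ (a terminal being treated as an entry port only). Crucially, the ports of a junction are deliberately left mutually disconnected in $G'$: the branching is not realized inside $G'$ but is reconstituted afterwards by identifying all copies $v^{\leftarrow},v^{\rightarrow}_1,\dots,v^{\rightarrow}_m$ of each junction back to the single vertex $v$.

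Capacities are encoded through the labels. For the fixed skeleton $P$ the number $d_{ab}=|T_b|$ of terminals lying below the head $b$ of an arc $(a,b)$ is determined by $P$, and every edge on the path realizing $(a,b)$ must have capacity at least $d_{ab}$. I would therefore set the number of labels to $k=K$, label every edge inherited from $G$ (including those redirected to junction ports) by $\lambda(e)=c(e)$, give each internal split arc $(x',x'')$ length $0$ and the top label $K$, which lies in every admissible set since $d_{ab}\le K$, and attach to the pair of arc $(a,b)$ the admissible set $L_{(a,b)}=\{d_{ab},d_{ab}+1,\dots,K\}$. Then a path may traverse an edge $e$ of $G$ if and only if $c(e)\ge d_{ab}$, which is exactly the capacity requirement, and the total length of the paths equals the length of the reconstructed tree, since the internal arcs contribute $0$.

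It remains to check the two directions of the equivalence, and this is where the main difficulty lies. Given an optimal inclusion-wise minimal solution $S$, its skeleton is one of the enumerated $P$, and cutting $S$ at the junctions yields vertex-disjoint, label-feasible paths realizing the arcs of $P$, so the associated \texttt{ML-LAB-VDISJ-PATH} instance has a solution of length $\ell(S)$. The delicate converse is to argue that any family of vertex-disjoint, label-feasible paths in $G'$, after merging the junction ports, is again a genuine tree of $G$ of the same length satisfying the capacities. Here the vertex-splitting must be shown to do its job: each entry port carries only incoming arcs and each exit port only outgoing arcs, so no path can pass \emph{through} a junction port, which prevents stray paths from reusing a junction, while the internal arcs $(x',x'')$ guarantee that every ordinary vertex, and hence every neighbor of a junction actually used, is touched by a single path; consequently the last edges entering the various copies of $v$ and the first edges leaving them are pairwise distinct, so identification produces a vertex $v$ of the correct degree with no accidental coincidence. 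Since $P$ is a tree and the paths are internally disjoint, the merged union is connected and acyclic, hence a tree spanning $T$ rooted at $r$, and the per-arc label constraints give exactly the capacity constraints. The heart of the proof is thus to verify that tearing each branch point into independent source-sink pairs and reconnecting them by identification faithfully encodes the branching; once this is established, taking the best solution over all enumerated skeletons completes the reduction.
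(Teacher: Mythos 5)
Your proposal is correct and follows essentially the same route as the paper's proof: enumerate all potential skeletons via Property \ref{enumskelet}, create one source-sink pair per skeleton arc using per-arc copies of the skeleton vertices (so that paths sharing a junction in $G$ become genuinely vertex-disjoint in $G'$), encode capacities as labels with admissible sets $L_{uv}=\{K_v,\dots,K\}$, and return the best tree over all enumerated skeletons. The only differences are cosmetic: your splitting of ordinary vertices into entry/exit copies is redundant, since \texttt{ML-LAB-VDISJ-PATH} already requires mutually vertex-disjoint paths, and your directed entry/exit ports should, in the undirected case, be replaced by the paper's undifferentiated per-arc copies (each joined to all copies of the neighbors) so that the constructed instance stays in the same graph class --- there the ``no path passes through a port'' property follows, as in the paper, from every copy being a designated source or sink of some pair.
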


\vspace{-0.2cm}\begin{proof}
We first consider the undirected case. Let $I$ be an instance of\break \texttt{ML-CAP-STEINER-TREE} in a graph $G$ containing $K$ terminals. It follows from Property \ref{enumskelet} that the set of potential skeletons of {\color{black}optimal} solutions to $I$ can be enumerated in  $O(n^{K-1})$ time (since $K$ is a constant), where $n$ is the number of vertices in $G$.

To every such potential skeleton $S$, we associate a graph $G'=(V',E')$ constructed as follows, in order to deal with vertex-disjoint paths (and not internally vertex-disjoint paths). For each arc $(u,v)$ of $S$, we create a copy $u_v$ of $u$ and a copy $v_u$ of $v$; hence, every vertex $v$ of $S$ is replaced in $G'$ by $d_v$ copies of $v$, where $d_v$ is the degree of $v$ in $S$. All vertices of $G$ that do not appear in $S$ are also put in $V'$ (with only one copy of each). For each edge $[u,v]$ of $G$ we put an edge {\color{black}of same length} in $G'$ between each copy of $u$ and each copy of $v$. This construction is illustrated in Figure \ref{fig:mlndp}.

We then create a source-sink pair  $(u_v,v_u)$ in $G'$ for all arcs $(u,v)$ in $S$. From Property \ref{prop:numbvert}, $S$ contains at most $2K$ vertices, and there are therefore at most $2K-1$ such pairs. For each $v \in V$, let $K_v$ denote the number of terminals in the subtree $S(v)$ of $S$ rooted at $v$, and let $L_{uv}$ be the set of labels associated with the source-sink pair $(u_v,v_u)$. We set
$L_{uv}=\{K_v,\ldots,K\}$. In the example of Figure \ref{fig:mlndp}, we  have
$L_{rb}=\{3\}, L_{be}=\{2,3\}$, and $L_{bt_1}=L_{et_2}=L_{et_3}=\{1,2,3\}$. The label $\lambda(e)$ associated with an edge $e$ in $G'$ is the capacity of the corresponding edge in $G$.

An optimal solution to \texttt{ML-LAB-VDISJ-PATH} in $G'$ (if any) corresponds to a minimum-length solution to \texttt{ML-CAP-STEINER-TREE} in $G$ having $S$ as skeleton.
Since we enumerate all potential skeletons, the best solution to \texttt{ML-CAP-STEINER-TREE} obtained during this enumeration is an optimal solution for $I$.

\noindent The proof is similar for digraphs, by replacing edge by arc.
\end{proof}

\begin{figure} [htbp]
	\begin{center}
		\includegraphics[scale=0.58]{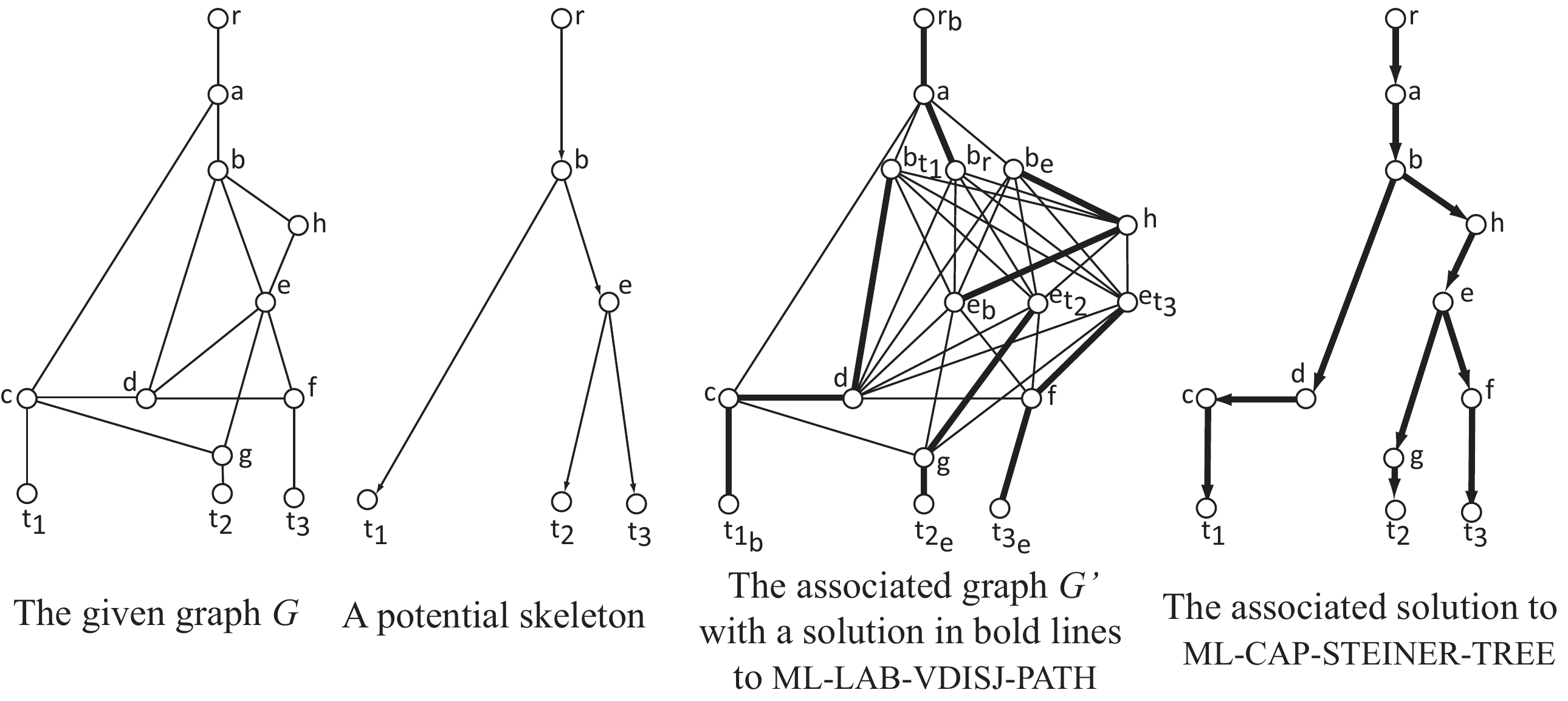}
		\caption{From \texttt{ML-CAP-STEINER-TREE} in $G=(V,E)$ to \texttt{ML-LAB-VDISJ-PATH} in $G'=(V',E').$ {\color{black}(The length of $[b,e]$ is equal to 3, and all other lengths are equal to 1.)}}	
		\label{fig:mlndp}
	\end{center}
\end{figure}

\begin{theorem} \label{cor:MCSTPuniftoMVDP}
When $K\ge 1$ is fixed, \texttt{ML-CAP-STEINER-TREE} (resp. \texttt{CAP-STEINER-TREE}) with uniform capacity is polynomially reducible to \texttt{ML-VDISJ-PATH} (resp. \texttt{VDISJ-PATH}) with a fixed number of source-sink pairs.
\end{theorem}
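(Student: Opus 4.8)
The plan is to build directly on the reduction constructed in the proof of Theorem \ref{th:reductionToMLVDP}, observing that uniform capacities render the labels redundant. Recall that in that reduction, for each potential skeleton $S$ we construct a graph $G'$, create one source-sink pair $(u_v,v_u)$ for each arc $(u,v)$ of $S$, assign to every edge of $G'$ the label equal to the capacity of the corresponding edge of $G$, and let the label set allowed for the pair $(u_v,v_u)$ be $L_{uv}=\{K_v,\ldots,K\}$, where $K_v$ is the number of terminals in $S(v)$.

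First I would note that when all capacities equal a common value $c$ (with $c\leq K$ by the standing assumption), every edge of $G'$ carries the same label $c$. Consequently, a path joining $u_v$ to $v_u$ satisfies the label constraint if and only if $c\in L_{uv}=\{K_v,\ldots,K\}$, which—since $c\leq K$ always holds—is equivalent to the single inequality $K_v\leq c$. The crucial point, and the heart of the argument, is that this condition no longer depends on which edges the path actually uses: for a fixed skeleton $S$, either $K_v\leq c$, in which case every edge of $G'$ is admissible for that pair and the labels impose no restriction whatsoever, or $K_v>c$, in which case no admissible path exists and the skeleton cannot support any feasible solution.

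Hence the reduction proceeds as follows. I would enumerate the potential skeletons as in Property \ref{enumskelet}, and for each one check in $O(K)$ time whether $K_v\leq c$ holds for every arc $(u,v)$ of $S$. Skeletons failing this test are discarded, since under uniform capacity $c$ they admit no feasible solution; for each surviving skeleton, the labelled problem on $G'$ collapses to an ordinary \texttt{ML-VDISJ-PATH} instance with the same at most $2K-1$ source-sink pairs (this bound coming from Property \ref{prop:numbvert}), because the labels no longer constrain the paths. As $K$ is fixed, the number of pairs is fixed and, by Property \ref{enumskelet}, there are only polynomially many skeletons, so we obtain a polynomial-time reduction to \texttt{ML-VDISJ-PATH} with a fixed number of source-sink pairs. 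Taking the best solution over all surviving skeletons yields an optimal solution to the original \texttt{ML-CAP-STEINER-TREE} instance; setting all lengths to zero gives the analogous reduction from \texttt{CAP-STEINER-TREE} to \texttt{VDISJ-PATH}.

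I expect the only real subtlety, rather than a genuine obstacle, to be the verification that discarding capacity-infeasible skeletons and dropping the labels on the remaining ones preserves the solution correspondence established in Theorem \ref{th:reductionToMLVDP}. This follows precisely because, under uniform capacities, the label constraint is equivalent to a purely combinatorial feasibility condition on the skeleton itself (namely $K_v\leq c$ at every arc), and not to any restriction on the routing of the individual paths; once this equivalence is made explicit, the rest of the construction is inherited verbatim from the previous theorem.
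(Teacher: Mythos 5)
Your proposal is correct and follows essentially the same route as the paper: the paper likewise reuses the construction of Theorem \ref{th:reductionToMLVDP}, observes that under uniform capacity $c$ the labels become unnecessary, rejects skeletons violating the capacity condition, and solves a plain \texttt{ML-VDISJ-PATH} instance per surviving skeleton. The only cosmetic difference is that the paper tests $K_v \leq c$ only at the successors $v$ of the root $r$, which suffices since $K_v$ is non-increasing along root-to-leaf paths, whereas you test it at every arc of the skeleton --- an equivalent, merely redundant, check.
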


\begin{proof}
The proof is similar to the proof of Theorem \ref{th:reductionToMLVDP}. However, since the capacities are all equal to a constant $c$, we do not have to use labels. Consider any potential skeleton $S$: if $r$ has at least one successor $v$ such that the number of terminals in the subtree $S(v)$ of $S$ rooted at $v$ is strictly larger than $c$, then $S$ can be rejected since it cannot correspond to the skeleton of a tree satisfying the capacity constraints. Otherwise, an optimal {\color{black}(resp. a feasible)} solution to \texttt{ML-VDISJ-PATH} is a minimum-length {\color{black}(resp. a feasible)} solution to \texttt{ML-CAP-STEINER-TREE} having $S$ as skeleton.
\end{proof}

\begin{figure} [htbp]
	\begin{center}
		\includegraphics[scale=1.0]{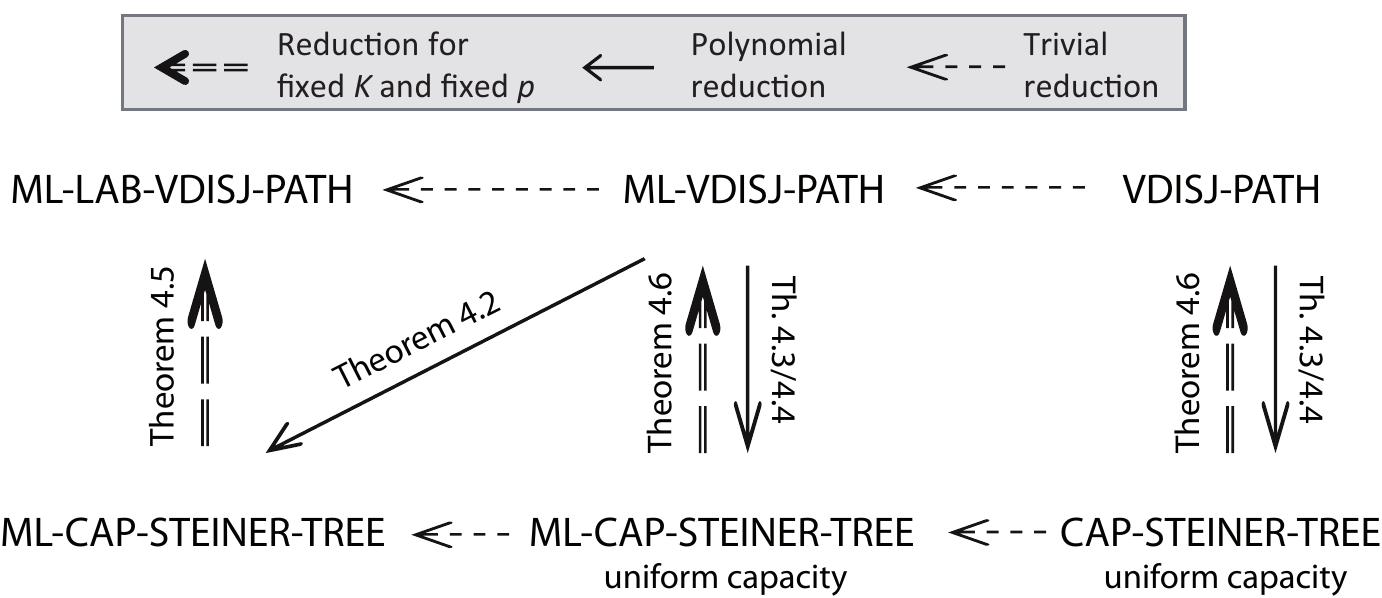}
		
		\caption[Polynomial-time reductions between capacitated Steiner tree and vertex-disjoint paths problems.]{Polynomial-time reductions between capacitated Steiner tree and vertex-disjoint paths problems.}
		\label{fig:pathsteiner}
	\end{center}
\end{figure}

The relations proved in this section are summarized in Figure \ref{fig:pathsteiner}, where a trivial polynomial reduction corresponds to a generalization of a special case. We recall that \texttt{VDISJ-PATH} is NP-complete in digraphs, even with $p=2$ souce-sink pairs \cite{fortune},  while it is polynomial-time solvable in undirected graph \cite{robertson} and DAGs \cite{fortune} when $p$ is fixed.

 We close this section by mentioning that the reductions given in Theorems  \ref{th:reductionFromMVDP} and \ref{cor:pMVDP-unif} are FPT-reductions  \cite{downey} with parameters $p$ and $K = O(p^2)$.

\section{NP-hardness of the general case}

In this section, we prove some NP-hardness and NP-completeness results. We first show that \texttt{CAP-STEINER-TREE} in digraphs is NP-complete even if $K\geq 3$ is fixed,  the minimum capacity $c_{\min}$  is any value in $\{1, \dots, K-2\}$, while the maximum capacity $c_{\max}$\ is at least 2.

\begin{theorem}\label{th:CSTP-directed-unif-K-fixed}
\texttt{CAP-STEINER-TREE} is NP-complete in digraphs, even if $K \geq 3$ is fixed, for any $c_{\min}\in \{1, \dots, K-2\}$ and $c_{\max} \geq 2$, with  $c_{\min} \leq  c_{\max}$.
\end{theorem}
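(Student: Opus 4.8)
The membership of \texttt{CAP-STEINER-TREE} in NP is immediate: a candidate subgraph can be checked in polynomial time to be a tree rooted at $r$ that spans $T$, and the capacity condition amounts to verifying, for each arc $e=(u,v)$, that the subtree below $v$ contains at most $c(e)$ terminals. For the hardness part the plan is to reduce from the directed vertex-disjoint paths problem \texttt{VDISJ-PATH} with $p=2$ source-sink pairs $(s_1,s'_1),(s_2,s'_2)$, which is NP-complete \cite{fortune}. The whole difficulty is, for each admissible pair $(c_{\min},c_{\max})$ and each fixed $K$, to build an instance whose minimum (resp.\ maximum) arc capacity is \emph{exactly} $c_{\min}$ (resp.\ $c_{\max}$) and whose feasibility is equivalent to the existence of two vertex-disjoint paths. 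I would split the argument according to whether the capacity is uniform or not.

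For the uniform case $c_{\min}=c_{\max}=c$, the hypotheses $c_{\min}\leq K-2$ and $c_{\max}\geq 2$ force $2\leq c\leq K-2$, and hence $K\geq 4$ (so $K=3$ never falls here); this is precisely the range of Theorem \ref{cor:2MVDP-unif}. Setting all lengths to $0$ in that reduction turns \texttt{ML-VDISJ-PATH} into \texttt{VDISJ-PATH} and \texttt{ML-CAP-STEINER-TREE} into \texttt{CAP-STEINER-TREE}, so the NP-completeness of the uniform digraph case follows at once, and no new construction is needed.

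The genuinely new work is the non-uniform case $c_{\min}<c_{\max}$, which in particular contains every instance with $K=3$ (there $c_{\min}=1<2\leq c_{\max}$). From a \texttt{VDISJ-PATH} instance on $G=(V,E)$ I would build $G'$ by adding a root $r$ and $K$ terminals, keeping the arcs of $E$ with capacity $c_{\min}+1$, and adding: an arc $(r,s_1)$ of capacity $c_{\min}$; an arc $(r,s_2)$ of capacity $c_{\min}+1$; one arc from $s'_1$ to a single terminal; $c_{\min}+1$ arcs from $s'_2$ to $c_{\min}+1$ distinct terminals; and $K-c_{\min}-2$ arcs from $r$ directly to the remaining terminals (this count is non-negative exactly because $c_{\min}\leq K-2$). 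Every terminal arc receives capacity $c_{\min}$, \emph{except} that one of the $c_{\min}+1$ arcs leaving $s'_2$ is given capacity $c_{\max}$. Since a terminal is a leaf, such an arc always carries a single unit, so it realises the maximum capacity $c_{\max}$ without affecting any routing, while $(r,s_1)$ realises the minimum capacity $c_{\min}$; as $K,c_{\min},c_{\max}$ are constants, $G'$ has only $O(1)$ more vertices and arcs than $G$, and the reduction is polynomial and produces instances with exactly $K$ terminals.

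The equivalence would then be proved in the spirit of Theorem \ref{cor:2MVDP-unif}. In the forward direction, two vertex-disjoint paths give a feasible tree by orienting them from $s_i$ to $s'_i$, attaching the terminals, and hanging the $K-c_{\min}-2$ remaining terminals directly on $r$; the capacity checks reduce to $1\leq c_{\min}$ below $s_1$ and $c_{\min}+1\leq c_{\min}+1$ below $s_2$. For the converse, the direct terminals are leaves of $r$, so every terminal attached to $s'_1$ or $s'_2$ must reach $r$ through $s_1$ or $s_2$; because $s'_2$ drags $c_{\min}+1$ terminals and $(r,s_1)$ has capacity only $c_{\min}$, the vertex $s'_2$ must lie below $s_2$, and then the capacity $c_{\min}+1$ of $(r,s_2)$ forbids the extra terminal of $s'_1$ from also lying below $s_2$, so $s'_1$ lies below $s_1$; the tree structure finally makes the $s_1$- and $s_2$-branches vertex-disjoint, recovering the two paths. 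The hard part will be this capacity bookkeeping: one must choose the branch capacities $c_{\min},c_{\min}+1$ and the terminal counts $1,c_{\min}+1$ so that all three ``bad'' routings (crossed, both through $s_1$, both through $s_2$) are blocked at once while the intended one stays feasible, and one must ensure that the value $c_{\max}$ — which may well exceed $K$ and would otherwise let a single branch swallow every terminal — is introduced only on a harmless leaf arc.
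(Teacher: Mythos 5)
Your proposal is correct and follows essentially the same route as the paper: the paper's proof is a two-line corollary that reduces from \texttt{VDISJ-PATH} with $p=2$ in digraphs \cite{fortune} via Theorem \ref{th:reductionFromMVDP} (for $K=3$) and Theorem \ref{cor:2MVDP-unif} (for $K\geq 4$), and then realizes the prescribed $c_{\min}$ and $c_{\max}$ exactly as you do, by assigning these values to arcs incident to terminals, which are harmless since each such arc carries a single unit. Your explicit non-uniform gadget (branches of capacity $c_{\min}$ and $c_{\min}+1$ from $r$, with $1$ and $c_{\min}+1$ terminals attached to $s'_1$ and $s'_2$, plus $K-c_{\min}-2$ padding terminals on $r$) is just the padded $p=2$ instance of Theorem \ref{th:reductionFromMVDP}, so the counting argument you sketch is the same one the cited theorems already carry out.
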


\begin{proof}
This is a direct consequence of Theorem \ref{th:reductionFromMVDP} (for $K=3$, $c_{\min}=1$ and $c_{\max}=2$) and Theorem \ref{cor:2MVDP-unif} (for $K \geq 4$). Indeed, \texttt{VDISJ-PATH} is NP-complete in digraphs with $p=2$ source-sink pairs \cite{fortune}, and the two theorems show how to polynomially reduce \texttt{VDISJ-PATH} in this case to \texttt{CAP-STEINER-TREE} with the right number of terminals. Notice that we can fix the values of $c_{\min}$ and $c_{\max}$ in the constructed \texttt{CAP-STEINER-TREE} instances (with $c_{\min}\in \{1, \dots, K-2\}$, and $c_{\max} \geq 2$) by assigning these two values to two different arcs incident to terminals.
\end{proof}

\vspace{-0.3cm}\begin{coro}
 \texttt{CAP-STEINER-TREE} with uniform capacity $c\in\{2, \dots,K-2\}$ is NP-complete in digraphs, even if $K \geq 4$ is fixed.
\end{coro}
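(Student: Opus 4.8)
The plan is to obtain this as a direct specialization of the machinery already in place, rather than to run a new construction. The crucial observation is that a \emph{uniform} capacity $c$ is exactly the case $c_{\min}=c_{\max}=c$ of Theorem \ref{th:CSTP-directed-unif-K-fixed}. Imposing both $c_{\min}=c\in\{1,\dots,K-2\}$ and $c_{\max}=c\geq 2$ simultaneously yields precisely $c\in\{2,\dots,K-2\}$, and this interval is nonempty only when $K\geq 4$, which is why the corollary carries that restriction on $K$.

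Concretely, I would first recall that the hardness source is \texttt{VDISJ-PATH} with $p=2$ source-sink pairs, which is NP-complete in digraphs by \cite{fortune}. I would then invoke Theorem \ref{cor:2MVDP-unif}, which polynomially reduces \texttt{ML-VDISJ-PATH} with $p=2$ to \texttt{ML-CAP-STEINER-TREE} on $K$ terminals with \emph{uniform} capacity $c$, for every pair $(K,c)$ with $K\geq 4$ and $2\leq c\leq K-2$. Specializing to zero edge lengths collapses each \texttt{ML-}version to its decision counterpart, so the same construction reduces \texttt{VDISJ-PATH} to \texttt{CAP-STEINER-TREE} with uniform capacity $c$; this establishes NP-hardness for every fixed $K\geq 4$ and every uniform $c\in\{2,\dots,K-2\}$. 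Membership in NP is immediate: a candidate tree $S$ rooted at $r$ has polynomial size, and one can check in polynomial time that it spans all $K$ terminals and that every arc $(u,v)\in S$ satisfies the capacity bound, i.e.\ that $S(v)$ contains at most $c$ terminals.

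I do not anticipate a genuine difficulty, since the required reduction is already available. The only point demanding care is to use the uniform-capacity reduction of Theorem \ref{cor:2MVDP-unif}, in which \emph{all} arcs receive capacity $c$, rather than the device employed inside the proof of Theorem \ref{th:CSTP-directed-unif-K-fixed}, which realizes prescribed and possibly distinct values $c_{\min}$ and $c_{\max}$ by placing them on two terminal arcs and therefore does not, by itself, yield a uniform instance. Keeping the untweaked construction of Theorem \ref{cor:2MVDP-unif} is exactly what delivers the uniform capacity required here.
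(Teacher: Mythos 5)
Your proposal is correct and follows essentially the same route as the paper, which derives this corollary directly from Theorem \ref{cor:2MVDP-unif} (the uniform-capacity reduction with all capacities set to $c$) applied with zero lengths to \texttt{VDISJ-PATH} with $p=2$ source-sink pairs, NP-complete in digraphs by \cite{fortune}. Your cautionary remark about not using the two-terminal-arc device from the proof of Theorem \ref{th:CSTP-directed-unif-K-fixed} is exactly the right distinction, and your NP-membership check is standard and correct.
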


\noindent For undirected graphs, we have the following result:

\begin{theorem}\label{th:CSTP-undirected-non-unif}
\texttt{CAP-STEINER-TREE} is NP-complete in undirected graphs, even if $K \geq 3$ is fixed and if the minimum capacity $c_{min}$ and the maximum capacity $c_{\max}$ are two fixed constants, with $c_{\min} \in \{1, \dots, K-2\}$ and $c_{\min} < c_{\max}$.
\end{theorem}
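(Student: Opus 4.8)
The plan is to prove NP-completeness by reduction, keeping the same overall strategy as Theorem \ref{th:CSTP-directed-unif-K-fixed} but changing the source problem. Membership in NP is immediate: a candidate tree has polynomial size, and the capacity constraints are checked in polynomial time by counting, for each arc $(u,v)$, the number of terminals in the subtree $S(v)$. The real work is the hardness reduction, and the first observation that shapes it is a negative one: we \emph{cannot} simply feed an undirected \texttt{VDISJ-PATH} instance into the constructions of Theorems \ref{th:reductionFromMVDP} or \ref{cor:2MVDP-unif}, because \texttt{VDISJ-PATH} with a fixed number of pairs is polynomial in undirected graphs \cite{robertson}. Hence, with $K$ fixed, all the combinatorial difficulty must be created by the capacities themselves; this is exactly why the statement restricts to non-uniform capacities ($c_{\min}<c_{\max}$), the uniform case being precisely the one left open (Theorem \ref{th:undirected-unif-fixedK}). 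Accordingly, I would reduce from \texttt{VDISJ-PATH} with $p=2$ in \emph{directed} graphs, which is NP-complete \cite{fortune}, and use the capacity values to supply the orientation information that the undirected host graph cannot carry on its own.

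Starting from a directed instance $D$ with pairs $(s_1,s_1')$ and $(s_2,s_2')$, I would build an undirected graph $G'$ by node-splitting $D$ (replacing each vertex $w$ by an edge between an ``in'' copy $w_{in}$ and an ``out'' copy $w_{out}$, and each arc $(u,v)$ by an edge between $u_{out}$ and $v_{in}$), then attaching a root $r$ to the source side and a fixed number of terminals to the sink side, exactly in the spirit of Theorems \ref{th:reductionFromMVDP} and \ref{cor:2MVDP-unif}. The capacities at $r$ and at the terminal-incident edges are set so that a feasible capacitated tree splits into two streams oriented away from $r$, one carrying only a single terminal and the other carrying $c_{\max}$ terminals, matching the two commodities $(s_1,s_1')$ and $(s_2,s_2')$. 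Since every stream is oriented away from $r$, the only remaining freedom is whether each edge that models an arc of $D$ is crossed forward or backward. The heart of the construction is a constant-size, \emph{terminal-free} direction gadget placed on each split edge, whose edges receive the two values $c_{\min}$ and $c_{\max}$ so that, in any tree that does not violate the capacities, the gadget can be traversed only from $u$ to $v$: a backward crossing would push more than $c_{\min}$ terminals below a capacity-$c_{\min}$ edge. With this valve in place, feasible capacitated trees in $G'$ correspond exactly to pairs of vertex-disjoint \emph{directed} paths in $D$.

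Granting the gadget, the equivalence is proved in the two usual directions. Given two disjoint directed paths in $D$, orient them away from $r$, prepend the root edges and append the terminal edges, and verify that each edge carries at most its capacity many terminals, so the resulting tree is feasible. Conversely, given a feasible tree in $G'$, the capacity budget at $r$ forces the two streams to be routed through $s_1$ and $s_2$ respectively and to end at the prescribed terminals (the same ``$r$ is $e$-linked to exactly $i$ terminals'' argument as in Theorem \ref{th:reductionFromMVDP}); the valve then guarantees that each stream, read along the tree orientation, is a directed path of $D$, and vertex-disjointness in $G'$ yields vertex-disjointness in $D$. Finally, I would realise any prescribed constants $c_{\min}\in\{1,\dots,K-2\}$ and $c_{\max}>c_{\min}$ by assigning these two values to two edges incident to terminals, exactly as at the end of the proof of Theorem \ref{th:CSTP-directed-unif-K-fixed}, while keeping $K$ at any fixed value $\ge 3$; because the gadgets are terminal-free and of constant size, the number of terminals stays fixed and $|G'|$ is polynomial in $|D|$.

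The main obstacle is undoubtedly the design and the rigorous correctness proof of the capacity diode. The subtlety is that capacities are symmetric on an undirected edge, so the asymmetry needed to forbid the backward orientation has to come from the interaction between the fixed total number of terminals, the two distinct capacity values, and the fact that the tree is rooted at $r$; I expect that making the backward crossing provably infeasible will require carefully accounting for how many terminals are forced below each gadget edge in \emph{every} possible rooted tree, and that this is precisely where the hypotheses $c_{\min}\le K-2$ and $c_{\min}<c_{\max}$ are consumed. A secondary point to check is that inserting the gadgets does not create unintended feasible trees whose skeletons (in the sense of Property \ref{enumskelet}) bypass the valves; ruling these out is routine once the gadget's capacity bookkeeping is settled, but it must be verified.
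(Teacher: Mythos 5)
Your proposal has a genuine gap, and it sits exactly where you locate it yourself: the ``capacity diode'' is never constructed, and in fact no such gadget can exist in the form you specify. Consider a terminal-free, constant-size gadget between $u$ and $v$ whose edges carry capacities $c_{\min}$ and $c_{\max}$. In any (inclusion-wise minimal) feasible tree, a single stream crossing a terminal-free gadget occupies a simple path inside it, and every edge of that path has exactly the stream's load --- the number of terminals the stream serves --- below it, \emph{regardless of which endpoint is closer to the root}. Since undirected capacities are symmetric and any $u$--$v$ path is also a $v$--$u$ path, the constraint ``each edge on the path has capacity at least the load'' is direction-independent: a backward crossing that would ``push more than $c_{\min}$ terminals below a capacity-$c_{\min}$ edge'' is infeasible forward as well, and any crossing that avoids the capacity-$c_{\min}$ edges is feasible in both directions. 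So capacities alone cannot implement a local one-way valve for a single stream, and your reduction from directed \texttt{VDISJ-PATH} with $p=2$ collapses: the node-split undirected graph admits spurious backward traversals, which is precisely the classical obstruction to simulating directed disjoint paths in undirected graphs.

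The paper avoids this trap by not encoding direction at all: it reduces from \texttt{SAT}, and the forcing is global rather than local, arising from the \emph{interaction of two streams with different loads}. One path $\pi_2$ must carry two terminals ($t_2,t_3$ attached to $u_2^\nu$) and is therefore confined to capacity-$2$ edges, while the load-$1$ path $\pi_1$ must thread all variable gadgets choosing $\mu_i$ or $\bar\mu_i$ (the truth assignment); capacity-$1$ edges interleaved along the variable paths prevent $\pi_2$ from travelling along them, forcing it to hop through the clause gadgets and use, in each clause, a capacity-$2$ edge of a literal path that $\pi_1$ has left free --- which is exactly clause satisfaction. Vertex-disjointness (needed to avoid cycles in the tree) plays the role you wanted your valve to play. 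The generalization to arbitrary $c_{\min}\in\{1,\dots,K-2\}$ and $c_{\max}>c_{\min}$ then attaches $c_{\min}+1$ terminals to $u_2^\nu$ and $K-c_{\min}-2$ terminals to $r$, rescales capacities $1,2$ to $c_{\min},c_{\min}+1$, and places $c_{\max}$ on a terminal-incident edge --- your closing remark captures only this last cosmetic step. Note also that your framing slightly misreads the statement: nonuniformity here need not supply ``orientation information''; it supplies two load classes, which is a weaker and achievable requirement. To repair your write-up you would have to abandon the diode and rebuild the argument around a two-load interleaving of the kind the paper uses.
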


\begin{proof}
We give a polynomial-time reduction from \texttt{SAT}. Assume first that $K=3$ and all edge capacities are equal to 1 or 2.
Let $X=\{x_1,...,x_{\xi}\}$ be the set of variables and let  $C=\{C_1,...,C_\nu\}$ be the set of clauses in an arbitrary instance $I$ of \texttt{SAT}. For each variable $x_i$, we denote by $o_i$ (resp. $\bar{o}_i$) the number of occurrences of $x_i$ (resp. $\bar{x}_i$) in the clauses. We can assume, without loss of generality, that $o_i \geq \bar{o}_i$ for every $i$ (by exchanging $x_i$ and $\bar{x_i}$ everywhere in the clauses, if necessary). The following instance $I'$ of \texttt{CAP-STEINER-TREE} is associated to $I$.

For each variable $x_i$, we construct a \emph{variable gadget} as follows: we add two vertices $v^i_0$ and $v^i_{2 o_i+1}$, and two vertex-disjoint paths between them. The first one, $\mu_i$, corresponding to literal $x_i$, is $v^i_0, v^i_1, \ldots, v^i_{2 o_i+1}$, where, for each $j \in \{1, \dots, o_i\}$, the edge $v^i_{2j-1}  v^i_{2j}$ has capacity 2, and, for each $j \in \{0, \dots, o_i\}$, the edge $v^i_{2j} v^i_{2j+1}$ has capacity 1. The second path, $\bar{\mu}_i$, corresponding to literal $\bar{x}_i$, is $v^i_0, \bar{v}^i_1, \ldots,\bar{v}^i_{2 \bar{o}_i}, v^i_{2 o_i+1}$, where, for each $j \in \{1, \dots, \bar{o}_i\}$, the edge $\bar{v}^i_{2j-1}  \bar{v}^i_{2j}$ has capacity 2, and, for each $j \in \{1, \dots, \bar{o}_i-1\}$, the edge $\bar{v}^i_{2j} \bar{v}^i_{2j+1}$ has capacity 1. The edges $v^i_0 \bar{v}^i_1$ and $\bar{v}^i_{2 \bar{o}_i} v^i_{2 o_i+1}$ also have capacity 1. The variable gadgets are linked together as follows: for each $i \in \{1, \dots, \xi-1\}$, there is an edge $v^i_{2 o_i+1} v^{i+1}_{0}$ of capacity 1.

For each clause $C_j$, we construct a \emph{clause gadget} as follows: we add two vertices $u^j_1, u^j_2$, and, for each literal $x_i$ (or $\bar{x}_i$) contained in $C_j$, we add edges $u^j_1 v^i_{2\ell-1}$ (or $u^j_1 \bar{v}^i_{2\ell-1}$) and $v^i_{2\ell} u^j_2$ (or $\bar{v}^i_{2\ell} u^j_2$) of capacity 2, if this literal occurs $\ell-1$ times in clauses $C_1, \dots, C_{j-1}$. The clause gadgets are linked together as follows: for each $j \in \{1, \dots, \nu-1\}$, there is an edge $u^j_2 u^{j+1}_1$ of capacity 2.

We complete the construction of $I'$ by adding a root $r$ and 3 terminals $t_1,t_2,t_3$, as well as 3 edges $v^{\xi}_{2 o_{\xi}+1} t_1$, $u^\nu_2 t_2$ and $u^\nu_2 t_3$ of arbitrary capacity (1 is fine, but any value fits), an edge $r v^1_0$ of capacity 1, and an edge $r u^1_1$ of capacity 2. The construction of $I'$ is illustrated in Figure \ref{fig:satcstp} for $I$ with $X=\{x_1,x_2,x_3\}$ and $C=\{x_1\bar{x}_2,x_1x_2x_3,\bar{x}_1x_2\bar{x}_3\}$. Solid lines have capacity 2 while dotted edges have capacity 1.

\begin{figure} [htbp]
	\begin{center}
		\includegraphics[scale=0.85]{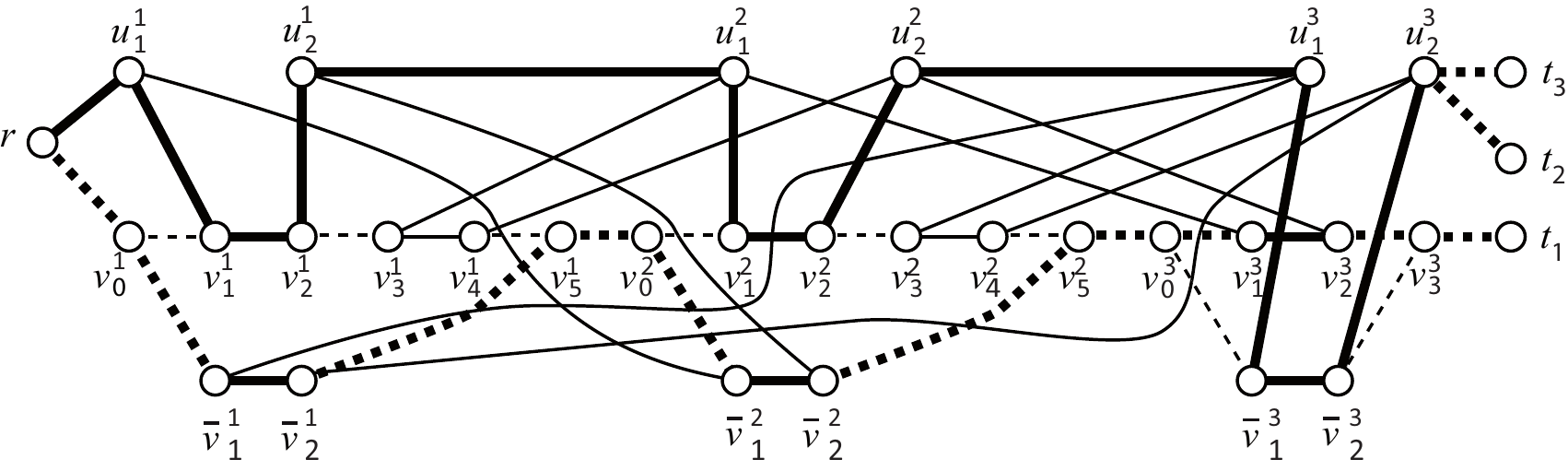}
		\caption{From \texttt{SAT} to  \texttt{CAP-STEINER-TREE}.}
		\label{fig:satcstp}
	\end{center}
\end{figure}

Let $S$ be a feasible solution to $I'$ (if any). To avoid cycles, the paths from $r$ to $t_2$ and $t_3$ must use the same sub-path $\pi_2$ from $r$ to $u_2^\nu$, and thus all edges of $\pi_2$ must have capacity 2.  So, $\pi_2$ starts with the edge $ru_1^1$. Then, the only possibility is to use the edge $u^1_1 v^i_1$ (or $u^1_1 \bar{v}^i_1$) for some $i$, and then the edges $v^i_1 v^i_2$ and $v^i_2 u^1_2$ (or $\bar{v}^i_1 \bar{v}^i_2$ and $\bar{v}^i_2 u^1_2$). The next step is to use the edge $u^1_2 u^2_1$. Using similar arguments with increasing values of $j$, we get that $\pi_2$ necessarily contains all edges $u^j_2 u^{j+1}_1$ with $1\le j <\nu$, and  ends at $u^\nu_2$ (which is adjacent to $t_2$ and $t_3$).

Since $\pi_2$ starts with the edge $ru_1^1$, the path $\pi_1$ from $r$ to $t_1$ starts with the edge $rv_0^1$. Moreover, to avoid cycles, $\pi_1$ and $\pi_2$ are internally vertex-disjoint. Since $u_1^j$ and $u_2^j$ belong to $\pi_2$ for all $j$, we conclude that, for each $i$, either  $\pi_1$ contains $\mu_i$ and then  $\pi_2$ may contain only edges of $\bar{\mu}_i$, or  $\pi_1$ contains $\bar{\mu}_i$ and then  $\pi_2$ may contain only edges of $\mu_i$.
This means that, for each $j$, there is a subpath of three edges of $\pi_2$, from $u_1^j$ to $u_2^j$, containing one edge of $\mu_i$ (resp. $\bar{\mu}_i$) for $i$ such that $x_i$ (resp. $\bar{x}_i$) is one of the literals contained in clause $C_j$, and there is no $k$ for which the subpath from $u_1^k$ to $u_2^k$ contains one edge of $\bar{\mu}_i$ (resp. $\mu_i$). We can therefore define a satisfying truth assignment  $\tau: X \rightarrow \{true,false\}$ as follows: for each $i$, if $\mu_i$ is a subpath of $\pi_1$, then  $\tau(x_i)=false$, else $\tau(x_i)=true$.

Conversely, if there is a satisfying truth assignment $\tau$ for $I$, we construct a feasible solution for $I'$ as follows. The path $\pi_1$ from $r$ to $t_1$ begins with the edge $rv_0^1$ and, for each $i$, $\pi_1$ has $\mu_i$ as a subpath if $\tau(x_i)=false$, and it has $\bar{\mu}_i$ as a subpath if $\tau(x_i)=true$. The path $\pi_2$ from $r$ to $u^\nu_2$ begins with the edge $ru_1^1$ and can be constructed sequentially by using edges not contained in $\pi_1$ (this is always possible, since $\tau$ is a satisfying truth assignment for $I$). The solution to $I'$ is then obtained by adding edges $v^{\xi}_{2 o_\xi +1} t_1$, $u^\nu_2 t_2$ and $u^\nu_2 t_3$ to $\pi_1\cup\pi_2$. The solution to $I'$ corresponding to the truth assignment $\tau(x_1,x_2,x_3)=(true,true,false)$ for $I$ is represented in Figure \ref{fig:satcstp} with bold lines.

In order to generalize this reduction to any $K \geq 3$, any $c_{\min} \in \{1, \dots, K-2\}$, and any $c_{\max}>c_{\min}$, we attach $c_{\min}+1$ terminals to $u_2^\nu$ (instead of 2) and $K-c_{\min}-2$ terminals to $r$ (instead of 0): the edges with capacity 1 and 2 become edges with capacity $c_{\min}$ and $c_{\min}+1$, respectively. Moreover, since the edges incident to the terminals can have any capacity, we can set the capacity of one of them to $c_{\max}$.
\end{proof}

Notice that the previous result is not valid for DAGs since the graphs constructed in the above proof possibly contain circuits. We now prove a complexity result for \texttt{LAB-VDISJ-PATH}, i.e. \texttt{ML-LAB-VDISJ-PATH} with lengths 0.

\begin{coro}\label{cor:hardness-MLVDP-undirected}
\texttt{LAB-VDISJ-PATH} with $p$ source-sink pairs is NP-complete in undirected graphs, and hence in digraphs, for any fixed $p \geq 2$, even if
$L_i$ contains all labels for every $i<p$ and $L_p$ contains all labels but one.
\end{coro}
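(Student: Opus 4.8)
The plan is to obtain the corollary by combining the NP-completeness of \texttt{CAP-STEINER-TREE} established in Theorem~\ref{th:CSTP-undirected-non-unif} with the vertex-splitting reduction of Theorem~\ref{th:reductionToMLVDP}, and then to reach every fixed $p\geq 2$ by padding with trivial pairs. Membership in NP is immediate: given $p$ candidate paths, one checks in polynomial time that they are pairwise vertex-disjoint, that $\mu_i$ links $s_i$ to $s'_i$, and that every label on $\mu_i$ lies in $L_i$. Hence it remains to prove NP-hardness, and I would first settle the case $p=2$.

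For $p=2$, I would take an instance $I'$ of \texttt{CAP-STEINER-TREE} built by the reduction in the proof of Theorem~\ref{th:CSTP-undirected-non-unif} with $K=3$ and all capacities in $\{1,2\}$. As shown there, every feasible solution of $I'$ has the same forced skeleton: a path $\pi_1$ from $r$ to $t_1$ starting with $rv^1_0$, and a path $\pi_2$ from $r$ to the junction vertex $u^\nu_2$ starting with $ru^1_1$, with $t_2,t_3$ hanging from $u^\nu_2$ by dedicated edges. Applying the construction of Theorem~\ref{th:reductionToMLVDP} to this skeleton (in particular splitting $r$ into the two copies adjacent to $v^1_0$ and to $u^1_1$) turns the requirement that $\pi_1,\pi_2$ be internally vertex-disjoint into genuine vertex-disjointness in the resulting graph $G'$. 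I would set $k=2$, label each edge of $G'$ by the capacity of the corresponding edge of $I'$, and create the two source-sink pairs $(r_1,t_1)$ and $(r_2,u^\nu_2)$. Since $\pi_1$ serves a single terminal whereas $\pi_2$ must serve the two terminals $t_2,t_3$, the associated label sets are $L_1=\{1,2\}$ (all labels) and $L_2=\{2\}$ (all labels but one), exactly the required structure; crucially, the edges $u^\nu_2 t_2$ and $u^\nu_2 t_3$ need not produce additional source-sink pairs, because the capacity requirement they impose on $\pi_2$ is already encoded by $L_2$. By Theorem~\ref{th:CSTP-undirected-non-unif} this \texttt{LAB-VDISJ-PATH} instance is NP-hard, which proves the case $p=2$.

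For an arbitrary fixed $p>2$, I would pad the instance with $p-2$ dummy source-sink pairs, each realized by a fresh pair of vertices joined by a single edge carrying an arbitrary label and assigned the full label set $\{1,2\}$. These pairs are independently and trivially satisfiable, do not interfere with $\pi_1,\pi_2$, and may be indexed $2,\dots,p-1$; re-indexing the constrained pair as the $p$-th one yields $L_i=\{1,2\}$ for every $i<p$ and $L_p=\{2\}$, as claimed. The statement for digraphs then follows from the general principle recalled in Section~\ref{sec:results} that any NP-hardness result for undirected graphs carries over to directed graphs by replacing each edge with two opposite arcs of the same label.

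The step requiring the most care is the $p=2$ reduction, specifically the legitimacy of \emph{omitting} the trivial terminal-pairs $(u^\nu_2,t_2)$ and $(u^\nu_2,t_3)$ that a literal application of Theorem~\ref{th:reductionToMLVDP} would create. I would verify both directions explicitly: that any two vertex-disjoint, label-respecting paths $(r_1,t_1)$ and $(r_2,u^\nu_2)$ in $G'$ can be completed into a feasible solution of $I'$ by re-attaching $t_2,t_3$ at $u^\nu_2$, and conversely that any feasible solution of $I'$ yields such a pair of paths. This reduces to confirming that the single label set $L_2=\{2\}$ faithfully encodes the constraint that $\pi_2$ be routable through capacity-$2$ edges only, so that it can carry two terminals.
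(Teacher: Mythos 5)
Your proposal is correct and takes essentially the same route as the paper: the paper's proof likewise reuses the \texttt{SAT} gadget from Theorem~\ref{th:CSTP-undirected-non-unif}, identifies each edge's label with its capacity, takes two source-sink pairs with $L_1=\{1,2\}$ and $L_2=\{2\}$, and adds dummy pairs for larger fixed $p$. The only cosmetic difference is that instead of invoking the vertex-splitting of Theorem~\ref{th:reductionToMLVDP} on $r$ and arguing that the terminal pairs may be omitted, the paper simply places the sources one edge past the root, choosing $(s_1,s'_1)=(v_0^1,v^{\xi}_{2o_\xi+1})$ and $(s_2,s'_2)=(u_1^1,u_2^\nu)$, so that the internal vertex-disjointness of $\pi_1,\pi_2$ established in that proof becomes full vertex-disjointness automatically.
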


\begin{proof}
For $p=2$, this follows from the proof of Theorem \ref{th:CSTP-undirected-non-unif}. Indeed, consider the two source-sink pairs $(s_1,s'_1)=(v_0^1,v^{\xi}_{2 o_\xi+1})$ and $(s_2,s'_2)=(u_1^1,u_2^\nu)$, identify the label of each edge with its capacity, and set $L_1=\{1,2\}$ and $L_2=\{2\}$. We have shown that there is a feasible solution to the instance $I$ of \texttt{SAT} if and only if there are two vertex-disjoints paths $P_1$ and $P_2$ linking $s_1$ to $s'_1$ and $s_2$ to $s'_2$, and such that $P_1$ uses edges of label 1 or 2, while $P_2$ uses only edges with label 2.
For larger values of $p$, we simply add dummy source-sink pairs.
\end{proof}

Note that this is in contrast with \texttt{VDISJ-PATH}, which is polynomial-time solvable in undirected graphs \cite{robertson} and in DAGs \cite{fortune} when $p$ is fixed. A similar problem has been studied in \cite{wu}. More precisely, \texttt{TWOCOL-VDISJ-PATH} is defined as follows: given an undirected graph in which every edge has color 1 or 2, and two source-sink pairs $(s_1, s'_1)$ and $(s_2, s'_2)$, determine whether $G$ contains two vertex-disjoint paths, the first one from $s_1$ to $s'_1$ using only edges of color 1, and the second one from $s_2$ to $s'_2$ using only edges of color 2.
\texttt{TWOCOL-VDISJ-PATH} is the special case of \texttt{ML-LAB-VDISJ-PATH} where there are $p=2$ source-sink pairs, all lengths are 0, and $L_i=\{i\}$ for each $i \in \{1,2\}$. It is proved in
 \cite{wu} that this problem is NP-complete.
The reduction given in the proof of Theorem \ref{th:CSTP-undirected-non-unif} yields an alternative proof of the NP-completeness of \texttt{TWOCOL-VDISJ-PATH}: indeed, as in the previous corollary, we can fix $(s_1,s'_1)=(v_0^1,v^{\xi}_{2 o_{\xi}+1})$ and $(s_2,s'_2)=(u_1^1,u_2^\nu)$, and identify the colors of the edges with their capacities. Also, for every variable gadget, we
add a path consisting of two edges of capacity (color) 1 between the endpoints of the edges of capacity 2 so that $P_1$ (the path that goes through edges of capacity 1 and 2) can avoid edges with color 2.\\

\noindent The next result deals with DAGs and undirected graphs with uniform capacity.

\begin{theorem}
\label{th:unif-DAG-SAT}
\texttt{CAP-STEINER-TREE} is NP-complete  in DAGs and undirected graphs, even in the case of uniform capacity $c$, for any $c \geq 2$ (not depending on $K$).
\end{theorem}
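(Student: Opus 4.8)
The plan is to establish membership in NP and then prove NP-hardness by a reduction from \texttt{SAT}, designed so that the produced instance is simultaneously a DAG (which will give the DAG result) and can be read as an undirected instance (which will give the undirected result). Membership in NP is immediate: a candidate tree $S$ rooted at $r$ has polynomial size, and one checks in polynomial time that it spans all terminals and that $|T_v| \le c$ for every arc $(u,v)$ of $S$. I would first settle the case of uniform capacity $c=2$ and then lift the construction to arbitrary constant $c \ge 2$.

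For the reduction I would start from the \texttt{SAT} gadgets of the proof of Theorem~\ref{th:CSTP-undirected-non-unif}, in which a single-terminal ``assignment'' path $\pi_1$ chooses, inside each variable gadget, between the rail $\mu_i$ (encoding $x_i$) and the rail $\bar\mu_i$ (encoding $\bar x_i$), while a two-terminal ``clause'' path $\pi_2$ must thread through every clause gadget using, in each clause, a literal edge left free by $\pi_1$. That construction uses two distinct capacities ($1$ and $2$) and may contain circuits, so two features must be added. To obtain \emph{uniform} capacity, I would replace the role played by the capacity-$1$ edges — namely, forbidding the two-terminal path $\pi_2$ from traversing them — by \emph{preloading} each such edge with a private filler terminal: subdividing the edge by a vertex $m$ and attaching a pendant terminal to $m$ consumes one unit of capacity, so that under uniform capacity $2$ the edge still admits the single terminal of $\pi_1$ but not the two terminals of $\pi_2$. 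All edges then have capacity $2$, the number of filler terminals is polynomial, and $K$ grows while $c$ stays equal to $2$.

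To obtain \emph{acyclicity}, I would lay out all gadget vertices along one global topological order and duplicate the shared literal edges per clause, so that no edge is ever forced to be used in two opposite directions; the only subtle point is then to re-enforce, within this single order, that each variable receives the \emph{same} truth value in all clauses in which it occurs. Once the instance is a DAG, reading it as an undirected graph yields the undirected result as well: the placement of $r$ and of the pendant and filler terminals, together with the capacity bookkeeping, should force any feasible undirected tree to orient every edge consistently with the topological order, so that a feasible undirected solution coincides with a feasible directed one. I would then verify the two directions of the equivalence exactly as in Theorem~\ref{th:CSTP-undirected-non-unif} — a satisfying assignment yields a feasible capacitated tree, and conversely the capacity bottleneck on $\pi_2$ forces $\pi_1$ to realise a satisfying assignment — and finally generalise to any fixed $c\ge2$ by preloading $c-1$ (resp. $c-2$) filler terminals on the edges that must keep effective residual capacity $1$ (resp. $2$), which keeps $c$ independent of $K$.

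The hard part will be the acyclic enforcement of variable consistency under uniform capacity. In the cyclic construction of Theorem~\ref{th:CSTP-undirected-non-unif}, consistency comes for free because a single rail per literal is shared by the assignment path and all clause paths; that very sharing, combined with $\pi_1$ running in variable order and $\pi_2$ running in clause order, is exactly what creates circuits. De-sharing the literal edges removes the circuits but also removes the mechanism that ties the different occurrences of a variable together, so the delicate step is to recover consistency by a gadget compatible with one fixed topological order, while checking that the filler terminals introduced for uniformisation do not open unintended routes that would let $\pi_2$ bypass a clause, or let an undirected solution escape the intended orientation.
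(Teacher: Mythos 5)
Your plan has two genuine gaps, and the second one you name yourself. First, the preloading mechanism does not do what you claim under uniform capacity. A filler terminal attached to a subdivision vertex $m$ is spanned, in any feasible tree, by a path from $r$ through $m$, and this path consumes one unit of capacity on \emph{every} edge between $r$ and $m$, not only on the subdivided edge. On the assignment path $\pi_1$ of the construction of Theorem~\ref{th:CSTP-undirected-non-unif} there are $\Theta(\xi+\sum_i o_i)$ capacity-$1$ edges in series; if each is preloaded, the first such edge must carry $t_1$ plus one unit for each downstream filler, so its load grows linearly with the instance size and no fixed uniform capacity $c$ can absorb it. Giving each filler an alternative route from $r$ is no fix: then nothing forces the filler through the target edge and the preload vanishes. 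Saturation arguments of this kind only work when they are local, which is exactly why the paper's proof abandons the path-threading construction altogether. It reduces from \texttt{3-SAT}$_3$ and uses a depth-$3$ gadget: for each variable $x_i$, the $c$ terminals $TV_{i,1},\dots,TV_{i,c}$ hang below a vertex $s_i$ that has two possible parents $v_i$ and $\bar v_i$; choosing the parent of $s_i$ saturates the single arc $(r,v_i)$ or $(r,\bar v_i)$, and the bounded-occurrence property of \texttt{3-SAT}$_3$ (each literal occurs in at most two clauses) guarantees that the unsaturated literal vertex can serve its at most two clause terminals within capacity $c\ge 2$. All root-to-terminal paths have length at most $3$, so no accumulation can occur.

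Second, the consistency gadget you defer (``recover consistency by a gadget compatible with one fixed topological order'') is the crux of the reduction, not a detail: once you de-share the literal rails to kill the circuits, nothing in your construction ties the occurrences of a variable together, and you propose no candidate mechanism. In the paper's construction this problem never arises, because the truth value of $x_i$ is encoded by a single local decision (which parent $s_i$ takes) rather than by which rail a long path uses, and acyclicity is automatic since every arc goes from $r$ down towards the terminals in three levels. The undirected case is also not free, contrary to your ``should force'' claim: the paper proves it by an explicit exchange argument, rewiring any feasible undirected tree (replacing arcs such as $(s_i,v_i)$ or $(TC_h,v_i)$ by arcs from $r$) into a spanning tree with the same structure as in the directed analysis. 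So your strategy matches the paper only at the top level (NP membership plus a satisfiability reduction); the uniformisation step fails as stated, the consistency step is left unsolved, and both would have to be replaced by something like the paper's local saturation gadget.
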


\begin{proof}
Let \texttt{3-SAT}$_3$ be the satisfiability problem in which every clause contains at most 3 variables, every variable appears in at most 3 clauses, and every litteral (a variable or its complement) appears in at most 2 clauses. \texttt{3-SAT}$_3$ is known to be NP-complete \cite{tovey}. We show how to polynomially reduce \texttt{3-SAT}$_3$ to \texttt{CAP-STEINER-TREE} with uniform capacity $c \geq 2$.

Let $I=(X,C)$ be an instance of \texttt{3-SAT}$_3$ with $X=\{x_1,...,x_{\xi}\}$ as set of variables and $C=\{C_1,...,C_\nu\}$ as set of clauses. To obtain an instance $I'$ of \texttt{CAP-STEINER-TREE} with uniform capacity $c$, we construct the following graph $G=(V,E)$: for every variable $x_i \in X$, we create three vertices $v_i,\bar{v}_i,s_i$ and $c$ terminals $TV_{i,1},\ldots, TV_{i,c}$; for every clause $C_j \in C$ we create a terminal $TC_j$; finally we add a vertex $r$. For the directed case, we consider the following arcs:  for every $i=1,...,{\xi}$, we create the arcs $(r,v_i),(r,\bar{v}_i),(v_i,s_i),(\bar{v}_i,s_i),(s_i,TV_{i,1}),\ldots, (s_i, TV_{i,c})$; for every $j=1,...,\nu$ we create the arcs $(v_i,TC_j)$ (resp. $(\bar{v}_i,TC_j)$) if $x_i$ (resp. $\bar{x}_i$) is in $C_j$. All the arcs have capacity $c$. The resulting graph $G$ can clearly be obtained in polynomial time and is a DAG. For the undirected case, we consider the same graph, but we replace each arc by an edge. The construction is illustrated in Figure \ref{fig:sat3} for $c=2$ and the \texttt{3-SAT}$_3$ instance where $X=\{x_1,x_2,x_3,x_4\}$ and $C=\{x_1 \bar{x}_2 x_3,\bar{x}_2 \bar{x}_3 x_4\}$, the arcs being oriented from $r$ down to the terminals in the directed case.

\begin{figure}[htbp]
\begin{center}
\includegraphics[scale=0.4]{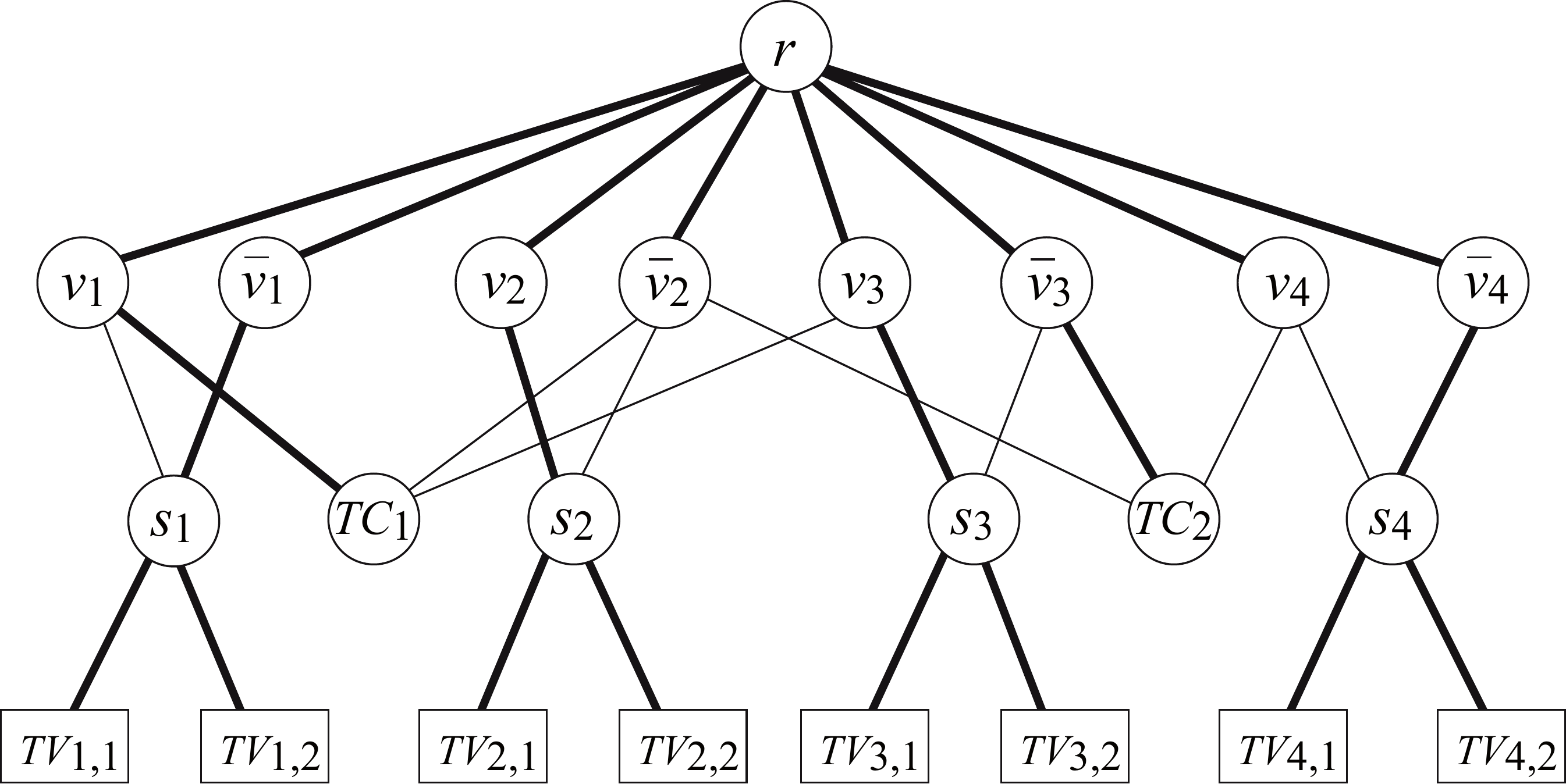} 
\caption{From  \texttt{3-SAT}$_3$ to \texttt{CAP-STEINER-TREE}.}
\label{fig:sat3}
\end{center}
\end{figure}

Assume there is a truth assignment $\tau: X \rightarrow \{true,false\}$ for $I$. We construct a feasible solution $S$ to $I'$ as follows. If $\tau (x_i)=false$ (resp. $\tau (x_i)=true$) then we include $(v_i,s_i)$ (resp.  $(\bar{v}_i,s_i)$) in $S$. For each clause $C_j$, we choose one of the true litterals in $C_j$, say $x_i$ (resp. $\bar{x_i}$), and add the arc $(v_i,TC_j)$ (resp. $(\bar{v}_i,TC_j)$) to $S$. Finally, for all $i=1,...,{\xi}$, we add the arcs $(r,v_i)$, $(r,\bar{v}_i)$, $(s_i,TV_{i,1}),\ldots,(s_i, TV_{i,c})$ to $S$. Clearly, $S$ is a tree rooted at $r$ and spanning all the terminals; in fact, $S$ is a spanning tree. For every $i=1,\ldots,\xi$ with $\tau (x_i)=false$ (resp. $\tau (x_i)=true$), $r$ is $(r,v_i)$-linked (resp. $(r,\bar{v}_i)$-linked) to $TV_{i,1},\ldots,TV_{i,c}$, and is
$(r,\bar{v}_i)$-linked (resp. $(r,v_i)$-linked) to at most two terminals $TC_j$ associated with clauses containing $\bar{x_i}$ (resp. $x_i$), since there are at most two such clauses. Since $c\geq 2$, all capacity constraints are satisfied. The solution $S$ associated with the truth assignment $\tau (x_1,x_2,x_3,x_4)=(true,false,false,true)$ is represented in Figure \ref{fig:sat3} with bold lines.

Now, let $S$ be a feasible solution to $I'$. Consider first the directed case. For all $j=1,...,\nu$, there is at least one index $i$ such that either  $(v_{i},TC_j) \in S$ and we then set $\tau (x_{i})=true$ or $(\bar{v}_{i},TC_j)\in S$ and we then set $\tau (x_{i})=false$. If a variable did not get any value, we arbitrarily choose one, say $true$. This gives a truth assignment satisfying each clause $C_j$. Let us verify that we have not assigned simultaneously values $true$ and $false$ to some variable. Notice first that, since $S$ has no cycle,
$r$ is either $(r,v_i)$-linked or $(r,\bar{v}_i)$-linked to the $c$ terminals $TV_{i,1},\ldots, TV_{i,c}$. Since all capacities equal $c$, this means that either $v_{i}$ or $\bar{v}_i$ has no $TC_j$ $(j=1,\ldots,\nu)$ as successor, which means that we do not assign both values $true$ and $false$ to a variable $x_i$.

Consider now the undirected case. If there is an index $i$ such that $(s_i,v_i) \in S$, then
$(\bar{v}_i,s_i) \in S$  since, in this case, $r$ must be $(\bar{v}_i,s_i)$-linked to $TV_{i,1},\ldots,TV_{i,c}$ in $S$. Hence, $(r,v_i) \notin S$ (otherwise, there would be a cycle in $S$) and we can replace
$(s_i,v_i)$ by $(r,v_i)$. Similarly, if $(s_i,\bar{v} _i) \in S$ for some index $i$, we replace this arc by $(r,\bar{v} _i)$. Assume now that $(v_i,TC_j) \in S$ and $(r,v_i) \notin S$ for some  $i\in\{1,\ldots,\xi\}$ and $j\in \{1,\ldots,\nu\}$. Then, we have $(TC_{h},v_i)\in S$ for some index $h\neq j$ and we replace $(TC_{h},v_i)$ by $(r,v_i)$. We make a similar exchange if $(\bar{v}_i,TC_j)\in S$ and $(r,\bar{v}_i) \notin S$. After having performed all these replacements, we get a new spanning tree $S'$, since each vertex (except the root) still has exactly one incoming arc.  Moreover, $S'$ has the same structure as the tree $S$ analyzed in the directed case. We can therefore obtain a satisfying truth assignment using the same rules as above.
\end{proof}

Remember (see Section 1) that \texttt{EDGE-COST-FLOW} consists in determining a\break minimum-length feasible flow of $K$ units from $s$ to $t$ in a given graph, where the length of a flow is the total length of the arcs/edges carrying a positive flow. \texttt{EDGE-COST-FLOW} is very close to \texttt{ML-CAP-STEINER-TREE}. Indeed, given an instance of \texttt{ML-CAP-STEINER-TREE} in a graph $G$ with $K$ terminals, we can construct a graph $G'$ obtained from $G$ by adding a vertex $r'$ and linking every terminal to this new vertex. Solving \texttt{EDGE-COST-FLOW} in $G'$ with $s=r$ and $t=r'$ is then equivalent to solving \texttt{ML-CAP-STEINER-TREE} in $G$, except that a feasible solution is not required to be a tree. The proof of Theorem \ref{th:unif-DAG-SAT} provides the following corollary:

\begin{coro}
\texttt{EDGE-COST-FLOW} is NP-hard in DAGs and undirected graphs, even if all lengths are 1 and all capacities are 1 or 2.
\end{coro}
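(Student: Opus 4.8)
The plan is to combine the transformation sketched just before the statement---turning an \texttt{ML-CAP-STEINER-TREE} instance on $G$ into an \texttt{EDGE-COST-FLOW} instance on $G'$ by adding a sink $r'$ joined to every terminal---with the hardness construction of Theorem~\ref{th:unif-DAG-SAT}. Concretely, I would start from the DAG (resp.\ undirected) instances built by the reduction from \texttt{3-SAT}$_3$ in that proof, specialised to $c=2$, so that every edge of $G$ has capacity $2$, and form $G'$ by appending $r'$ together with one edge of capacity $1$ from each terminal to $r'$. Giving length $1$ to every edge of $G'$, the inherited edges keep capacity $2$ while the new edges have capacity $1$; hence all capacities lie in $\{1,2\}$ and all lengths equal $1$, as required. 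Because the $K$ edges entering $r'$ each have capacity $1$, a feasible flow of $K$ units from $r$ to $r'$ must saturate all of them, and therefore delivers exactly one unit to each terminal, so that the capacity-$2$ inherited edges play the role of the uniform Steiner capacity.

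The forward direction is immediate: from a satisfying assignment, Theorem~\ref{th:unif-DAG-SAT} produces a feasible (spanning-tree) Steiner solution, which after orientation and the addition of the $K$ sink-edges becomes a feasible flow of value $K$, its length being simply the number of edges carrying flow. I would then claim that \texttt{3-SAT}$_3$ is satisfiable if and only if $G'$ admits a feasible flow of value $K$ whose length is at most a suitably chosen threshold $B$.

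The crux---and the main obstacle---is the backward direction, precisely the caveat that \emph{a feasible flow need not be a tree}. Unlike a tree, a flow may enter a variable gadget through \emph{both} literal vertices $v_i$ and $\bar v_i$, routing one unit to $s_i$ along each; such a \emph{split} leaves one free unit of capacity on each of $(r,v_i)$ and $(r,\bar v_i)$, so that $x_i$ and $\bar x_i$ can simultaneously serve clauses---exactly what a consistent assignment forbids. Consequently split flows can be feasible for \emph{unsatisfiable} formulas, and flow feasibility alone does not encode satisfiability; this is why the unit-length objective is essential. The observation that rescues the reduction is that splits are penalised by length: routing the two units of $s_i$ through a single literal uses one fewer incoming edge at $s_i$ than splitting them, so each split strictly enlarges the edge set carrying flow. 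A minimum-length flow should therefore be non-split at every variable, and a non-split flow decodes into a genuine truth assignment exactly as in the directed analysis of Theorem~\ref{th:unif-DAG-SAT}; for the undirected case one first fixes the direction of any unit traversing an inherited edge ``from below'' (replacing, e.g., a unit reaching $s_i$ upward by the edge $(r,v_i)$), mirroring the edge-exchange argument used there. The delicate point I expect to fight with is the exact choice of $B$: since a flow may omit literal vertices that serve no clause, distinct non-split flows have distinct lengths, so $B$ must be fixed so that every non-split (hence consistent) flow lands at or below it while every flow containing at least one split---necessarily the only feasible flows in the unsatisfiable case---exceeds it. Quantifying the per-split surcharge against the length variation caused by unused literal vertices, so as to guarantee this separation, is where the real work lies.
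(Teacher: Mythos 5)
Your scaffold matches the paper's (same base construction from Theorem~\ref{th:unif-DAG-SAT} with $c=2$, same sink $r'$ with capacity-$1$ arcs, same identification of split flows as the obstacle), but the step you defer---choosing a threshold $B$ that separates split from non-split flows under uniform length $1$---is precisely where the argument breaks, and with all lengths equal to $1$ on the original gadget graph no such $B$ exists in general. A quick count shows why: every feasible flow uses the $2\xi+\nu$ sink arcs, the $2\xi$ arcs from the $s_i$ to the $TV$-terminals, and at least one serving arc per clause; beyond that, a non-split variable gadget contributes either $2$ or $3$ arcs (namely $(r,v_i)$ and $(v_i,s_i)$, plus $(r,\bar v_i)$ only when the true literal actually serves a clause), while a split gadget contributes $4$. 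So the per-split surcharge is only $1$ or $2$ arcs, whereas the variation coming from idle true literals can be as large as $\xi$: a flow with one split and many idle literals has length about $6\xi+2\nu+2$, strictly below the value $7\xi+2\nu$ that some satisfiable instances need. Worse, the minimum length of a non-split flow depends on \emph{which} satisfying assignments exist, so it cannot serve as a polynomial-time computable threshold. Your closing sentence (``quantifying the per-split surcharge against the length variation \dots is where the real work lies'') is thus not a technicality left to the reader but an unfixable flaw of the uniform-length formulation.

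The paper's proof avoids this by a two-step device you do not mention: \emph{weight first, subdivide afterwards}. It assigns length $1$ only to the $4\xi$ ``forced'' arcs (those incident to $r$ and the arcs $(s_i,TV_{i,1}),(s_i,TV_{i,2})$) and length $4\xi+1$ to every other arc, keeping capacity $2$ everywhere except capacity $1$ on the arcs into $r'$. Every feasible flow must carry positive flow on at least $3\xi+2\nu$ expensive arcs (the $2\xi+\nu$ sink arcs, at least one of $(v_i,s_i),(\bar v_i,s_i)$ per variable, at least one serving arc per clause), costing $(4\xi+1)(3\xi+2\nu)=L-4\xi$; since all cheap arcs together cost only $4\xi$, the budget $L=(4\xi+1)(3\xi+2\nu)+4\xi$ admits \emph{no} additional expensive arc. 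A split would require a second expensive internal arc $(\bar v_i,s_i)$, so exactly one of $(v_i,s_i),(\bar v_i,s_i)$ carries flow, and the capacity-$2$ constraint on $(r,v_i)$ then prevents the saturated side from serving any clause---yielding a consistent satisfying assignment independently of how many cheap arcs are used. Finally, the statement's ``all lengths are $1$'' is obtained by replacing each arc $e$ by a path of $\ell(e)$ unit-length arcs, which preserves capacities in $\{1,2\}$ and the threshold $L$. This weighting-plus-subdivision mechanism is the missing idea; without it your reduction does not go through.
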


\begin{proof}
 Consider first the directed acyclic case. Given an instance $I=(X,C)$ of \texttt{3-SAT}$_3$, let us construct a graph $G=(V,E)$, as in the proof of Theorem \ref{th:unif-DAG-SAT}, setting $c=2$. We then add a vertex $r'$ and link $TV_{i,1},TV_{i,2}$ $(i=1,\ldots,\xi)$ and  $TC_j$ $(j=1,\ldots,\nu)$ to $r'$. For all arcs $e$ incident to $r$, as well as those linking $s_i$ to $TV_{i,1}$ and $TV_{i,2}$, we set $\ell(e)=1$, while $\ell(e)$ is set to $4\xi+1$ for all other arcs $e$. All arcs have capacity 2, except those incident to $r'$ which have capacity 1. We then solve an \texttt{EDGE-COST-FLOW} instance, looking for a flow of $K=2 \xi +\nu$ units from $r$ to $r'$. Let $I'$ be this instance. We prove that $I$ is satisfiable if and only if the total length of an optimal solution to $I'$ is at most $L=(4\xi+1)(3\xi+2\nu)+4\xi$.

 If there is a satisfying truth assignment  for $I$, we construct a solution $S$ to $I'$ as in the proof of Theorem \ref{th:unif-DAG-SAT}, except that we add an arc from every terminal to $r'$. It is not difficult to check that the total length of such a solution is at most $L$.

 Consider now a solution $S$ to $I'$ of total length at most $L$. Since the $2 \xi +\nu$ arcs incident to $r'$ have capacity 1, they all carry a positive flow.
 Hence, for every $j=1,\ldots,\nu$, there is at least one index $i$ such that $(v_i,TC_j)$ or $(\bar{v}_i,TC_j)$ carries a positive flow in $S$. Moreover, for every $i=1,\ldots,\xi$, at least one of the arcs $(v_i,s_i)$ and $(\bar{v}_i,s_i)$ carries a positive flow in $S$. Therefore, the total length of $S$ is at least $(4\xi+1)(3 \xi+2\nu)= L-4\xi$, which means that no other arc $e$ with $\ell(e)=4\xi+1$ can belong to $S$. In particular, for every $j=1,\ldots,\nu$, there is {\em exactly} one arc in $S$ with a positive flow linking a vertex in $\{v_1,\bar{v}_2, v_2,\bar{v}_2,\ldots,v_{\xi},\bar{v}_{\xi}\}$ to $TC_j$, and, for every $i=1,...,\xi$, {\em exactly} one of the arcs $(v_i,s_i)$ and $(\bar{v}_i,s_i)$ carries a positive flow flow in $S$. If $(v_i,s_i)$ (resp. $(\bar{v}_i,s_i)$) carries a positive flow in $S$, then there is no flow on the arcs linking $v_i$ (resp. $\bar{v}_i$) to a $TC_j$ since $(r,v_i$) (resp. $(r,\bar{v}_i)$) has capacity 2 while 2 units of flow are used to reach $TV_{i,1}$ and $TV_{i,2}$. Hence the structure of $S$ is the same as in the proof of
 Theorem \ref{th:unif-DAG-SAT}, and we can set $x_i=false$ if $(v_i,s_i)$ carries a positive flow in $S$, and $x_i=true$ otherwise, to obtain a satisfying truth assignment for $I$.

 To obtain a graph with uniform length 1, we replace each arc $e$ by a path with $\ell(e)$ arcs of length 1.

 The proof is similar for the undirected case.
\end{proof}

\noindent We close this section by considering a final uniform case for \texttt{ML-CAP-STEINER-TREE}.

\begin{theorem}
\label{th:unit-cost-Steiner}
 \texttt{ML-CAP-STEINER-TREE} is NP-hard in any graph, even if all capacities are equal to $K-\kappa$ for any positive constant $\kappa$ and if all lengths are 1.
\end{theorem}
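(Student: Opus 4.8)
The plan is to reduce from \texttt{STEINER-TREE} with uniform unit lengths, which is NP-hard for undirected graphs, digraphs and DAGs alike (the undirected version is one of Karp's problems, and the directed version already captures Set Cover through the usual three-layer construction, which is itself a DAG with unit arc lengths). The guiding observation is that, when every length equals $1$, the total length of a solution is exactly its number of arcs, so \texttt{ML-CAP-STEINER-TREE} becomes the problem of spanning the terminals with as few arcs as possible --- precisely the Steiner objective --- provided the capacity constraint can be made inactive. The whole difficulty is therefore to neutralise the uniform capacity $K-\kappa$, which is very close to $K$ and hence only forbids a single arc from carrying all terminals.

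Given a \texttt{STEINER-TREE} instance consisting of a graph $G$, a root $r\notin T$, and a terminal set $T$ with $|T|=m\geq 2$ and unit lengths, I would build an \texttt{ML-CAP-STEINER-TREE} instance as follows: add $\kappa$ new terminal vertices $t'_1,\dots,t'_\kappa$, join each of them to $r$ by a single arc (edge) of length $1$, set $K=m+\kappa$, and give every arc (edge) length $1$ and uniform capacity $c=m=K-\kappa$. This transformation preserves the type of graph (adding pendant arcs at $r$ keeps a DAG acyclic and an undirected graph undirected), so one reduction serves all three settings. Because it only adds pendant terminals at the root, the number of terminals strictly below any arc of the original part is at most $m=K-\kappa$, while each new arc $(r,t'_i)$ carries exactly one terminal; hence the capacity constraint is satisfied by \emph{every} tree of this shape and plays no restricting role.

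It then remains to match optima. In one direction, any Steiner tree $S$ for $(G,r,T)$ becomes a feasible capacitated tree by appending the $\kappa$ pendant arcs $(r,t'_i)$, increasing the length by exactly $\kappa$ and, by the previous paragraph, respecting all capacities. In the other direction, each $t'_i$ is adjacent only to $r$, so every feasible capacitated tree $S'$ must contain all $\kappa$ arcs $(r,t'_i)$ as pendant leaves; deleting them yields a tree spanning $T$ and rooted at $r$, i.e.\ a Steiner tree, whose length is that of $S'$ minus $\kappa$. Consequently the minimum capacitated length equals the minimum Steiner length plus the constant $\kappa$, so the reduction is correct and computable in polynomial time.

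Since \texttt{STEINER-TREE} with unit lengths is NP-hard in each of the three graph classes, this establishes the NP-hardness of \texttt{ML-CAP-STEINER-TREE} with all lengths equal to $1$ and uniform capacity $K-\kappa$, for every fixed positive $\kappa$. The only point requiring care --- and the step I would double-check most closely --- is the claim that the capacity is inactive: it hinges on the exact equality $c=K-\kappa=m$ together with the fact that the padding terminals are leaves hanging off $r$, so that no arc of the ``real'' part can ever see more than the $m$ original terminals below it.
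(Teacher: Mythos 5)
Your reduction is exactly the paper's: both add $\kappa$ pendant terminals at $r$ with unit-length arcs/edges and set the uniform capacity to $m=K-\kappa$, so that the capacity constraint becomes inactive and optima differ by exactly $\kappa$. Your write-up merely makes explicit the verification (pendant arcs forced, no subtree sees more than $m$ terminals) that the paper compresses into ``clearly, solving $I'$ is equivalent to solving $I$,'' so the proposal is correct and essentially identical to the paper's proof.
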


\begin{proof}
It is well-known that \texttt{STEINER-TREE} is NP-hard even in the case of unit lengths \cite{garey}. Given any positive number $\kappa$, we show how to polynomially reduce \texttt{STEINER-TREE} with unit lengths to \texttt{ML-CAP-STEINER-TREE} with uniform capacity $K-\kappa$ and unit lengths. Let $I'$ be an instance of \texttt{STEINER-TREE} in a graph $G'=(V',E')$ with $K'$ terminals and $\ell(e)=1$ for all $e\in E$. We construct a graph $G=(V,E)$ from $G'$ by adding $\kappa$ terminals and linking them to $r$ with edges/arcs of length 1. We then set all capacities to $K'$ to obtain an instance $I$ of \texttt{ML-CAP-STEINER-TREE} with $K=K'+\kappa$ terminals, $\ell(e)=1$ and $c(e)=K-\kappa$ for all $e\in E$. Clearly, solving $I'$ is equivalent to solving $I$, and $I$ can be built from $I'$ in polynomial time.
\end{proof}

\section{{\normalsize ML-CAP-STEINER-TREE} with a fixed number of terminals}

In this section, we assume that the number $K$ of terminals is fixed. The first theorem deals with undirected graphs having uniform capacity, and complements Theorems \ref{th:CSTP-undirected-non-unif} and \ref{th:unif-DAG-SAT}.

\begin{theorem}\label{th:undirected-unif-fixedK}
In undirected graphs with a fixed number $K$ of terminals and uniform capacity, \texttt{CAP-STEINER-TREE} is solvable in polynomial time, and  \texttt{ML-CAP-STEINER-TREE} is polynomially equivalent to \texttt{ML-VDISJ-PATH} with a fixed number of source-sink pairs.
\end{theorem}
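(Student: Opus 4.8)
The plan is to obtain both assertions by assembling the reductions already proved in Section~\ref{sec:linkpath}, combined with the classical tractability of vertex-disjoint paths in undirected graphs with a fixed number of pairs.

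First I would settle \texttt{CAP-STEINER-TREE}. By the ``resp.'' part of Theorem~\ref{cor:MCSTPuniftoMVDP}, \texttt{CAP-STEINER-TREE} with uniform capacity and fixed $K$ is polynomially reducible to \texttt{VDISJ-PATH} with a fixed number $p$ of source-sink pairs; in fact $p\le 2K-1$ by Property~\ref{prop:numbvert}, so $p$ is constant whenever $K$ is. Since \texttt{VDISJ-PATH} is solvable in polynomial time in undirected graphs for fixed $p$ \cite{robertson}, the polynomial-time solvability of \texttt{CAP-STEINER-TREE} follows at once.

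Next I would establish the polynomial equivalence for the minimum-length versions by exhibiting reductions in both directions. The forward direction is again Theorem~\ref{cor:MCSTPuniftoMVDP}: a uniform-capacity \texttt{ML-CAP-STEINER-TREE} instance with fixed $K$ reduces, via enumeration of potential skeletons, to polynomially many \texttt{ML-VDISJ-PATH} instances, each with at most $2K-1$ pairs, the best of whose solutions yields the optimum. The reverse direction is Theorem~\ref{cor:pMVDP-unif}: for $p\ge 2$, an \texttt{ML-VDISJ-PATH} instance with $p$ pairs reduces to an \texttt{ML-CAP-STEINER-TREE} instance with $p^2$ terminals and uniform capacity $p$; when $p$ is fixed, both $K=p^2$ and the uniform capacity $c=p$ are constants, so the target is exactly an instance of the problem at hand (the case $p=1$ being a mere shortest-path computation).

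The only thing requiring care --- rather than a genuine obstacle --- is checking that the two reductions stay inside the stated parameter regimes, namely that fixing one of $K$ or $p$ forces the other parameter produced by the reduction to be bounded; the explicit bounds $p\le 2K-1$ and $K=p^2$ make this immediate. One should also note that the forward reduction is a polynomial-time Turing reduction, with one oracle call per potential skeleton, rather than a many-one reduction, which is harmless for the claimed equivalence.
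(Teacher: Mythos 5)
Your proposal is correct and follows essentially the same route as the paper: the first part via Theorem~\ref{cor:MCSTPuniftoMVDP} together with the polynomial-time solvability of \texttt{VDISJ-PATH} for fixed $p$ in undirected graphs \cite{robertson}, and the equivalence via Theorems~\ref{cor:MCSTPuniftoMVDP} and~\ref{cor:pMVDP-unif}. Your added remarks on the explicit parameter bounds ($p\le 2K-1$, $K=p^2$) and on the forward reduction being a Turing reduction are accurate refinements of details the paper leaves implicit.
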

\begin{proof}
The first part of the theorem is a direct consequence of Theorem \ref{cor:MCSTPuniftoMVDP}, since \texttt{VDISJ-PATH} is polynomial-time solvable --and even FPT-- in undirected graphs when the number of source-sink pairs is fixed \cite{robertson}. The second part comes from Theorems \ref{cor:pMVDP-unif} and \ref{cor:MCSTPuniftoMVDP}.
\end{proof}

On the one hand, recall that, for a fixed number $p$ of source-sink pairs, the complexity of  \texttt{ML-VDISJ-PATH} in undirected graphs is open for a long time \cite{kobayashi} (and so determining the one of \texttt{ML-CAP-STEINER-TREE} in this case is as hard as settling this open problem); however, there exists a probabilistic polynomial-time algorithm to solve the case with two source-sink pairs \cite{bjorklund}, although no deterministic one is known yet. On the other hand, Corollary \ref{cor:hardness-MLVDP-undirected} shows that  \texttt{ML-LAB-VDISJ-PATH} with lengths 0 is already NP-complete in undirected graphs when $p \geq 2$ is fixed. The next theorem shows that \texttt{ML-LAB-VDISJ-PATH} is tractable in DAGs if $p$ is fixed, which will come in handy for proving that \texttt{ML-CAP-STEINER-TREE} is solvable in polynomial time in DAGs if $K$ is fixed.

\begin{theorem}
\label{th:CDS-DAG-K-fixed}
 In DAGs, \texttt{ML-LAB-VDISJ-PATH} is solvable in polynomial time for any fixed number of source-sink pairs.
\end{theorem}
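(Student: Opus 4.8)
The plan is to extend the classical product-graph (configuration-graph) technique of Fortune, Hopcroft and Wyllie \cite{fortune} for vertex-disjoint paths in DAGs so that it simultaneously respects the label constraints and optimizes total length. First I would fix a topological numbering $\sigma:V\to\{1,\dots,n\}$ with $\sigma(u)<\sigma(v)$ for every arc $(u,v)$, so that every directed path is strictly increasing for $\sigma$. I then build an auxiliary graph $H$ whose vertices are the $p$-tuples $(w_1,\dots,w_p)$ of \emph{pairwise distinct} vertices of $G$; such a tuple records the current endpoint (the ``head'') of each of the $p$ paths under construction. The initial configuration is $(s_1,\dots,s_p)$ and the target configuration is $(s'_1,\dots,s'_p)$; both have distinct entries because the $2p$ terminals are pairwise distinct. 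A token $i$ with $w_i=s'_i$ is declared \emph{done} and is never moved again.

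The moves allowed in $H$ follow a single discipline: from a configuration we may only advance the \emph{active} token whose head is smallest for $\sigma$. Formally, if $i$ is the active token minimizing $\sigma(w_i)$, then for every arc $(w_i,w)$ of $G$ with $\lambda((w_i,w))\in L_i$ and $w\notin\{w_1,\dots,w_p\}$, we put in $H$ an arc from $(w_1,\dots,w_p)$ to the tuple obtained by replacing $w_i$ with $w$, and we give this $H$-arc the weight $\ell((w_i,w))$. Restricting the moves of token $i$ to arcs with label in $L_i$ enforces the label constraint, and the weights make the total weight of an $H$-walk equal to the total length of the traced paths. Since each move strictly increases the potential $\Phi=\sum_{i}\sigma(w_i)$, the graph $H$ is itself acyclic, so a minimum-weight directed path from $(s_1,\dots,s_p)$ to $(s'_1,\dots,s'_p)$ can be computed in time linear in the size of $H$ by a single pass in topological order.

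The heart of the argument, and the step I expect to be the main obstacle, is to prove that this ``advance-the-smallest-head'' discipline captures exactly the feasible solutions, i.e. that any directed path in $H$ from the initial to the target configuration yields $p$ genuinely vertex-disjoint paths (and conversely). The nontrivial direction is disjointness, because the construction only forbids two heads from coinciding at a single instant, whereas we need that no vertex is ever used by two different tokens over the whole history. I would prove this by contradiction: if some vertex $v$ were visited by tokens $i\neq j$, then (since occupied vertices are distinct at every step) one token, say $i$, leaves $v$ strictly before the other, say $j$, reaches it. At the step where token $i$ leaves $v$ it is the active token of smallest $\sigma$-value, so $\sigma(v)\le\sigma(w_k)$ for every active token $k$; but token $j$ is still active and will later move \emph{onto} $v$ along an arc, hence at that earlier step it already had a head of $\sigma$-value strictly smaller than $\sigma(v)$ (heads only increase for $\sigma$), a contradiction. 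The same computation also shows that no token can pass through another token's sink, which is precisely why freezing a token at $s'_i$ is harmless.

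Finally I would assemble the length-preserving correspondence. Given any feasible family of vertex-disjoint paths, tracing them by repeatedly advancing the active head of smallest $\sigma$-value produces a valid directed path in $H$ whose total weight equals the family's length (every token traverses precisely the arcs of its own path, all with labels in $L_i$); conversely, every $H$-path gives, by the disjointness lemma, a feasible family of the same length. Hence the minimum-weight $H$-path has exactly the optimal value, and from it one reads off the desired paths. For the complexity, there are at most $n^{p}$ configurations and each has out-degree at most the out-degree in $G$ of its smallest active head, so $H$ has $O(n^{p+1})$ arcs and is built and searched in polynomial time for every fixed $p$; this proves the theorem.
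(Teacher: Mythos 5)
Your proposal is correct and is essentially the paper's own argument: both solve the problem by dynamic programming over $p$-tuples of current path endpoints processed in a topological order of the DAG, filtering each token's moves by its label set $L_i$ and weighting moves by arc lengths, over the same polynomial-size configuration space. The differences are cosmetic rather than structural --- you advance the minimum-numbered active head forward in an explicit product DAG, while the paper retracts the maximum-numbered moved head backward in its recurrence for $f(v_1,\dots,v_p)$ --- and your pebble-game argument for vertex-disjointness merely makes explicit the correctness step that the paper's recurrence leaves implicit.
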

\begin{proof}

We solve \texttt{ML-LAB-VDISJ-PATH} by using a dynamic programming algorithm. More precisely, consider a DAG $G=(V,E)$ with a label $\lambda(e)$ on each arc $e\in E$, and with $p$ vertex-disjoint pairs $(s_i, s'_i)$ and their associated label sets $L_i$. We first order the vertices of $G$ using a topological ordering, and denote by $num(v)$ the position of each vertex $v$ in such an ordering (i.e., $num(u) < num(v)$ for all $(u,v)\in E$).

Let $\mathcal{P}$ be the set of $p$-tuples $(v_1, \dots, v_p)$ of vertices of $G$, and let $f:\mathcal{P} \rightarrow \mathbb{N}$ be the function such that $f(v_1, \dots, v_p)$ is the minimum total length of a set of $p$ vertex-disjoint paths in $G$ such that the $i$th one goes from $s_i$ to $v_i$ and uses only arcs with labels in $L_i$.
If $num(v_i) \leq num(s_i)$ for all $i=1,\ldots,p$, then $f(v_1,\dots,v_p)=0$ if $v_i=s_i$ for all $i$, and
$f(v_1,\dots,v_p)=+\infty$ otherwise.
Consider now a $p$-tuple $(v_1, \dots, v_p)$ such that $num(v_i) > num(s_i)$ for at least one index $i$, and let $h$ be the index such that $num(v_h)=\max_{i:num(v_i)>num(s_i)}\{num(v_i)\}$.
If $v_h=v_i$ for some $i \neq h$, then $f(v_1, \dots, v_p)=+\infty$. Otherwise, let $\mathcal{F}$ be the set of vertices $v$ such that $num(v)\geq num(s_h)$ and there exists an arc $(v,v_h)$ whose label is in $L_h$. If $\mathcal{F}=\emptyset$ then
$f(v_1, \dots, v_p)=+\infty$; otherwise, we have
$$f(v_1, \dots, v_{h-1}, v_h, v_{h+1}, \dots, v_p)=\min_{v \in \mathcal{F}} \{\ell(v,v_h)+f(v_1, \dots, v_{h-1}, v, v_{h+1}, \dots, v_p)\}.$$
The dynamic programming algorithm works as follows. For $val$ from 2 to $n$, we enumerate all $p$-tuples $v_1, \dots, v_p$ with $\max_{i:num(v_i)>num(s_i)}\{num(v_i)\}=val$ and, for each of them, we compute the corresponding value of $f$. The number of enumerated $p$-tuples is thus in $O(n^{p+1})$ and the value of each one can be computed in $O(n)$, which yields an $O(n^{p+2})$-time algorithm. The optimal {\color{black}value} to the \texttt{ML-LAB-VDISJ-PATH} instance is then equal to $f(s'_1,\dots,s'_p)$.
\end{proof}

\begin{theorem}
\label{th:DAG-fixedK}
\texttt{ML-CAP-STEINER-TREE} is solvable in polynomial time in DAGs if $K$ is fixed.
\end{theorem}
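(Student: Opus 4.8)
The plan is to combine the two reductions established earlier in the excerpt. Theorem \ref{th:reductionToMLVDP} shows that, when $K$ is fixed, \texttt{ML-CAP-STEINER-TREE} can be reduced in polynomial time to \texttt{ML-LAB-VDISJ-PATH} with a fixed number of source-sink pairs. Theorem \ref{th:CDS-DAG-K-fixed} shows that \texttt{ML-LAB-VDISJ-PATH} is solvable in polynomial time in DAGs when the number of source-sink pairs is fixed. So the natural strategy is simply to chain these two results: given a DAG instance of \texttt{ML-CAP-STEINER-TREE} with fixed $K$, apply the reduction of Theorem \ref{th:reductionToMLVDP} to obtain (polynomially many) instances of \texttt{ML-LAB-VDISJ-PATH}, each with at most $2K-1$ source-sink pairs, and solve each of them in polynomial time by Theorem \ref{th:CDS-DAG-K-fixed}.

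First I would check that the reduction of Theorem \ref{th:reductionToMLVDP} preserves the acyclicity needed to invoke Theorem \ref{th:CDS-DAG-K-fixed}. This is the step I expect to require the most care, and it is really the only obstacle. The construction in Theorem \ref{th:reductionToMLVDP} builds, for each enumerated potential skeleton $S$, a graph $G'$ by splitting each skeleton vertex $v$ into $d_v$ copies and reconnecting copies according to the edges of $G$. I would verify that, if the original graph $G$ is a DAG, then each constructed graph $G'$ is again a DAG. Since the vertex-splitting operation replaces a single vertex by several copies but only reconnects copies using arcs already present in $G$ (each arc $[u,v]$ of $G$ becoming arcs between copies of $u$ and copies of $v$, oriented as in $G$), no new directed cycle can be created: a directed cycle in $G'$ would project down to a closed directed walk in $G$, contradicting acyclicity. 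Hence $G'$ is a DAG whenever $G$ is.

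Granting this, the argument is immediate. By Property \ref{enumskelet}, the $O(n^{K-1}K^{O(K)})$ potential skeletons can be enumerated in polynomial time (since $K$ is fixed). For each potential skeleton $S$, the reduction of Theorem \ref{th:reductionToMLVDP} produces in polynomial time a DAG $G'$ together with at most $2K-1$ source-sink pairs and their label sets, and an optimal solution to the resulting \texttt{ML-LAB-VDISJ-PATH} instance corresponds to a minimum-length solution to \texttt{ML-CAP-STEINER-TREE} having $S$ as skeleton. By Theorem \ref{th:CDS-DAG-K-fixed}, each such \texttt{ML-LAB-VDISJ-PATH} instance on a DAG with a fixed number of source-sink pairs is solvable in polynomial time. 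Taking the best solution over all enumerated skeletons yields an optimal solution to the original \texttt{ML-CAP-STEINER-TREE} instance. Since both the number of skeletons and the time per skeleton are polynomial (for fixed $K$), the overall running time is polynomial, which completes the proof.
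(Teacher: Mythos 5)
Your proof is correct and follows essentially the same route as the paper, whose entire proof is the one-line observation that the statement is a direct consequence of Theorems \ref{th:reductionToMLVDP} and \ref{th:CDS-DAG-K-fixed}. Your explicit verification that the vertex-splitting construction of Theorem \ref{th:reductionToMLVDP} preserves acyclicity (a directed cycle in $G'$ would project to a closed directed walk in $G$) is a worthwhile detail that the paper leaves implicit, but it does not change the approach.
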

\begin{proof}
This is a direct consequence of Theorems \ref{th:reductionToMLVDP} and \ref{th:CDS-DAG-K-fixed}.
\end{proof}

Notice that \texttt{ML-LAB-VDISJ-PATH} in DAGs cannot be FPT in $p$ (unless FPT=W[1]), since \texttt{VDISJ-PATH} (i.e., the special case where all arcs have zero length and $L_i=\{1,\dots,k\}$ for each $i$) is W[1]-hard with respect to $p$ in DAGs \cite{slivkins}. Moreover, \texttt{ML-CAP-STEINER-TREE} cannot be FPT in $K$ (unless FPT=W[1]), since Theorem \ref{cor:pMVDP-unif} shows that \texttt{ML-VDISJ-PATH} can be FPT-reduced to \texttt{ML-CAP-STEINER-TREE} (with respective parameters $p$ and $K=p^2$).

\section{{\normalsize ML-CAP-STEINER-TREE} with large capacities}

In this section, we study the case where all capacities are almost equal to the number of terminals.
We first consider the case where the minimum capacity $c_{\min}$ is at least equal to $K-\kappa$, where $\kappa \geq 0$ is an arbitrary constant. In what follows, we denote by  $\rho$  the best possible approximation ratio for \texttt{STEINER-TREE} ($\rho \leq 1.39$ in undirected graphs \cite{byrka}), and by $\rho'$ the best possible approximation ratio for \texttt{ML-VDISJ-PATH} with a fixed number of source-sink pairs. As mentioned in the previous section, $\rho'=1$ in DAGs, and determining whether $\rho'=1$ or not in undirected graphs is a long-standing open problem.

Without loss of generality, we assume in this section that $\ell(e) \in \mathbb{N}^*$ for all $e \in E$. If this is not the case, we modify the lengths as follows: for all $e \in E$, we multiply $\ell(e)$ by $D|E|$ if $\ell(e)>0$, where $D$ is the lowest common multiple of the denominators of the lengths $\ell(e)$, and we set $\ell(e)=1$ if $e$ has length zero.

\subsection{{\normalsize ML-CAP-STEINER-TREE} with $c_{\min} \geq K-\kappa$ for any constant $\kappa \geq 0$}

The first result obtained in this section complements  Theorem \ref{th:unit-cost-Steiner} and generalizes the first part of Theorem \ref{th:undirected-unif-fixedK}. {\color{black}Notice that, from Theorem \ref{th:CSTP-directed-unif-K-fixed}, \texttt{CAP-STEINER-TREE} is NP-complete in digraphs with uniform capacity $c=K-\kappa$ for any constant $\kappa\geq 2$.}

\begin{theorem}\label{th:CSTP-large-capa-unif}
{\color{black}In DAGs and undirected graphs having uniform capacity $c=K-\kappa$, \texttt{CAP-STEINER-TREE} is solvable in polynomial time and \texttt{ML-CAP-STEINER-TREE} can be approximated within a ratio of $\rho'+\rho$, for any constant $\kappa\geq 0$.}
\end{theorem}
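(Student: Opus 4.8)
The plan is to prove the two assertions of Theorem~\ref{th:CSTP-large-capa-unif} separately, building on the structural results of Section~3 and the reductions of Section~4. Throughout, we work with uniform capacity $c=K-\kappa$ for a fixed constant $\kappa\geq 0$, in DAGs and in undirected graphs. The overarching idea is that when the capacity is so close to $K$, the capacity constraint only ``bites'' on the arcs incident to the root: on any arc $e=(u,v)$ of a feasible tree $S$, the subtree $S(v)$ may contain up to $c=K-\kappa$ terminals, so a violation can occur only at an arc carrying more than $K-\kappa$ terminals, and by Property~\ref{prop:numbvert} the number of junction vertices, hence the places where the tree branches, is bounded by $K-d_r\leq\kappa$. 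This lets us reduce to controlling only a bounded-size skeleton near $r$.

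For the first assertion (polynomiality of \texttt{CAP-STEINER-TREE}), I would argue that since $c$ is uniform, a feasible capacitated tree exists if and only if one can route the $K$ terminals so that no arc carries more than $K-\kappa$ of them. Because an inclusion-wise minimal feasible tree has at most $K-d_r\leq\kappa$ junction vertices (Property~\ref{prop:numbvert}), its skeleton has at most $2K+1-d_r$ vertices and can be enumerated in polynomial time for fixed $K$ via Property~\ref{enumskelet}; but here $K$ is \emph{not} fixed, so instead I would exploit uniformity directly. The capacity is violated only if some subtree hanging off a single arc incident to $r$ contains more than $K-\kappa$ terminals; equivalently, $r$ must be adjacent, through distinct arcs, to subtrees each holding at most $K-\kappa$ terminals. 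I would therefore reduce feasibility to checking whether $r$ can reach all $K$ terminals through at least $\lceil K/(K-\kappa)\rceil$ distinct out-arcs partitioning the terminals appropriately, a condition testable by a flow/connectivity computation. In the undirected case this reduces, via Theorem~\ref{cor:MCSTPuniftoMVDP}, to \texttt{VDISJ-PATH} with a bounded number of pairs, which is polynomial by Robertson--Seymour; in the DAG case \texttt{VDISJ-PATH} with few pairs is polynomial by Fortune--Hopcroft--Wyllie.

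For the second assertion (the $(\rho'+\rho)$-approximation for \texttt{ML-CAP-STEINER-TREE}), the plan is to split an optimal capacitated tree $S^\ast$ into a ``core'' part near the root, where capacity matters, and an ``outer'' part, where it does not. Concretely, because only $O(\kappa)$ junction vertices and at most $\kappa$ ``extra'' terminals beyond the $K-\kappa$ threshold force branching, I would guess the constant-size top of the skeleton of $S^\ast$ and realize it by vertex-disjoint paths from $r$ (this is where the factor $\rho'$ enters: routing this bounded collection of paths of minimum total length is exactly \texttt{ML-VDISJ-PATH} with a fixed number of source-sink pairs, approximable within $\rho'$, and $\rho'=1$ in DAGs). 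The remaining terminals, once the capacity-critical core is fixed, can be connected by an ordinary Steiner tree with no active capacity constraint, contributing the factor $\rho$. Summing the two independently approximated pieces and choosing the best guess for the bounded core yields total length at most $(\rho'+\rho)$ times the optimum.

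The main obstacle I anticipate is making the core/outer decomposition precise and charging its cost correctly: one must show that the optimal capacitated tree really does decompose into a bounded family of root-disjoint paths plus a capacity-free Steiner remainder, so that the two approximation guarantees add rather than multiply, and that the polynomially many guesses for the core exhaust the relevant structure. Establishing the length accounting---proving $\ell(\text{core})+\ell(\text{remainder})\leq\ell(S^\ast)$ under the decomposition, so that $\rho'\cdot\ell(\text{core})+\rho\cdot\ell(\text{remainder})\leq(\rho'+\rho)\,\ell(S^\ast)$---is the delicate step, since paths and the Steiner remainder may a priori share structure; the integrality reduction on lengths assumed at the start of Section~7 and the bound $O(\kappa)$ on junction vertices are the tools I would lean on to keep the guessing polynomial and the charging clean.
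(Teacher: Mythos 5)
Your high-level intuition (a constant-size capacity-critical core near the root, realized via \texttt{ML-VDISJ-PATH} and then completed by an ordinary Steiner tree, with ratios adding) matches the paper's strategy in outline, but the load-bearing ideas are missing, and the substitutes you offer are incorrect. First, the structural bound you lean on is wrong: Property \ref{prop:numbvert} gives $n_J\leq K-d_r$, but nothing forces $K-d_r\leq\kappa$ --- with uniform capacity $K-\kappa$ the root degree $d_r$ can be as small as $2$, so a feasible tree can have $\Theta(K)$ junction vertices, and since $K$ is not fixed here, neither your skeleton enumeration nor your appeal to Theorem \ref{cor:MCSTPuniftoMVDP} (which requires fixed $K$) is available. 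The paper's key device, which you never formulate, is the notion of a \emph{reduced tree}: a tree rooted at $r$ spanning a subset $T'\subseteq T$ such that for every edge $e$ incident to $r$, the root is $\bar{e}$-linked to at least $\kappa$ terminals of $T'$. This is exactly the condition making your claim ``the remaining terminals can be connected with no active capacity constraint'' true: once every edge keeps at least $\kappa$ terminals of $T'$ off its subtree, any greedy completion to a full Steiner tree respects the uniform capacity $K-\kappa$; without it, attaching the remaining $K-O(\kappa)$ terminals can push a root edge over capacity. Moreover, a minimal reduced tree spans at most $2\kappa$ terminals (each subtree hanging at a child of $r$ holds at most $\kappa$), so its skeleton has at most $4\kappa$ vertices; this is what makes the enumeration polynomial \emph{with $K$ unbounded} ($O(K^{2\kappa})$ choices of terminals, $O(n^{2\kappa-1})$ skeletons) and reduces each feasibility check to \texttt{VDISJ-PATH} with at most $4\kappa-1$ pairs, polynomial in undirected graphs and DAGs. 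Your alternative feasibility test by ``a flow/connectivity computation'' does not work: flows do not enforce that the routed subtrees are vertex-disjoint trees, which is precisely why the disjoint-paths machinery is needed.

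On the cost accounting, the ``delicate step'' you flag --- proving $\ell(\mathrm{core})+\ell(\mathrm{remainder})\leq\ell(S^{*})$ so the guarantees add --- is neither needed nor achievable as stated, since the core your algorithm finds is in general not a subtree of $S^{*}$, so piecewise comparison against disjoint parts of $S^{*}$ fails. The paper instead bounds each piece against the \emph{full} optimum: every feasible solution contains a reduced tree, so the best extended skeleton $S^{1}$ found over all enumerated cores (using a $\rho'$-approximation for the associated \texttt{ML-VDISJ-PATH} instances) satisfies $\ell(S^{1})\leq\rho'\,\ell(S^{*})$; and since $S^{*}$ itself spans all terminals missed by $S^{1}$, the $\rho$-approximate Steiner tree $S^{2}$ on those terminals satisfies $\ell(S^{2})\leq\rho\,\ell(S^{*})$. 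Pruning $S^{1}\cup S^{2}$ to a tree (feasible by the reduced-tree property) then yields length at most $(\rho'+\rho)\,\ell(S^{*})$. So the additive ratio comes from two independent comparisons to $\ell(S^{*})$, not from a decomposition of $S^{*}$; filling in the reduced-tree definition, the $2\kappa$-terminal bound on minimal reduced trees, and this accounting is what your proposal is missing.
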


\begin{proof}
We  first state and prove some useful properties.
Let $I$ be an instance of \texttt{CAP-STEINER-TREE}.

\begin{claim}
\label{fact7.1}
There is a feasible solution $S$ for $I$ if and only if  there is a tree $S^R$ (called \emph{reduced} tree) rooted at $r$, spanning a subset $T'\subseteq T$ of terminals, and such that, for every edge $e$ incident to $r$ in $S^R$, $r$ is $\bar{e}$-linked to at least $\kappa$ terminals in $S^R$.
\end{claim}

\begin{proof}
A feasible solution for $I$ is clearly such a tree. So, assume that such a tree $S^R$ exists. We iteratively complete $S^R$ to obtain a Steiner tree $S$ spanning also the terminals in $T \setminus T'$ as follows. For every terminal $t \notin T'$, we consider any path $\mu$ from $r$ to $t$ in $G$, and we add to $S^R$ the subpath of $\mu$ from $v$ to $t$, where $v$ is  the vertex in $S^R \cap \mu$ the closest to $t$ on $\mu$. From the hypothesis, given any edge $e$ of $G$ (including those not in $S^R$), we know that $r$ is $\bar{e}$-linked to at least $\kappa$ terminals in $S^R$. Hence, $S$ does not violate the capacity constraints and is therefore a feasible solution to $I$.
\end{proof}

\begin{claim}
	\label{fact7.2}
If $S^R$ is a \emph{minimal}, i.e. inclusion-wise minimal, reduced tree, then each of its subtrees rooted at a vertex distinct from $r$ contains at most $\kappa$ terminals.
\end{claim}

\begin{proof}
Let $S^R$ be a minimal reduced tree, and assume it contains a vertex $v\neq r$ such that $S^R(v)$ contains at least $\kappa+1$ terminals. {\color{black}We can assume without loss of generality that $v$ is a child of $r$}. Since all terminals are leaves, we can remove one of {\color{black}the terminals of $S^R(v)$} from $S^R$ to obtain a smaller reduced tree, a contradiction.
\end{proof}

\begin{claim}
	\label{fact7.3}
A minimal reduced tree contains at most $2\kappa$ terminals.
\end{claim}

\begin{proof}
Assume a minimal reduced tree $S^R$ contains at least $2\kappa+1$ terminals, and consider any child $v$ of $r$ in $S^R$. It follows from the previous claim that $r$ is $\overline{(r,v)}$-linked to at least $\kappa+1$ terminals in $S^R$. {\color{black}All terminals being leaves}, we can therefore delete any terminal {\color{black}from $S^R$} to obtain a smaller reduced tree, a contradiction.
\end{proof}

We can now prove Theorem \ref{th:CSTP-large-capa-unif}. We first consider \texttt{CAP-STEINER-TREE}. According to Claim \ref{fact7.3} and Property \ref{prop:numbvert}, the undirected skeleton of a minimal reduced tree can contain up to $4\kappa$ vertices. We therefore enumerate all labelled trees (including potential skeletons of reduced trees) on at most $4\kappa$ vertices. We then orient each of them from the root to the leaves, and for each such rooted tree we try to replace the arcs by vertex-disjoint paths in a similar way as in Theorem \ref{cor:MCSTPuniftoMVDP}. If such a replacement is possible, we test whether the extended skeleton is a reduced tree: in such a case, we stop the enumeration since we know from Claim \ref{fact7.1} that the \texttt{CAP-STEINER-TREE} instance $I$ has a feasible solution. If no potential skeleton can be extended to a reduced tree, then $I$ has no solution: indeed, if such a solution $S$ exists, it contains a minimal reduced tree, whose skeleton is necessarily considered in our enumeration and then extended to a reduced tree, which leads to a contradiction.

{\color{black}There are $O(K^{2\kappa})$ ways of choosing at most $2 \kappa$ terminals among $K$. For each such choice of at most $2 \kappa$ terminals}, it then follows from Property \ref{enumskelet} that the set of potential skeletons of {\color{black}minimal reduced trees spanning these terminals} can be enumerated in {\color{black}$O(n^{2\kappa-1})$ time} (since $\kappa$ is a constant). {\color{black}Finally}, at most $4\kappa-1$ arcs must be replaced by vertex-disjoint paths in every potential skeleton (recall that this can be done in polynomial time in DAGs and undirected graphs, since $\kappa$ is a constant). Hence, the whole process takes a polynomial time.

Consider now a \texttt{ML-CAP-STEINER-TREE} instance $I'$. We proceed as above, but instead of choosing arbitrary vertex-disjoint paths, we solve the associated \texttt{ML-VDISJ-PATH} instance with a $\rho'$-approximation algorithm, and, instead of stopping the enumeration when a reduced tree is found, we enumerate all of them and store the best one, denoted by $S^1$.
Notice that the total length of $S^1$ is at most $\rho'$ times larger than the total length of an optimal solution to $I'$ since such an optimal solution contains a reduced tree. We then use a $\rho$-approximation algorithm to determine a minimum-length Steiner tree $S^2$ spanning all the terminals not already spanned by $S^1$. Clearly, the total length of $S^2$ is at most $\rho$ times larger than the total length of an optimal solution to $I'$.
We finally build a solution $S$ to $I'$ by removing from $S^1 \cup S^2$ all arcs of $S^2$ entering a vertex with in-degree 2 in $S^1 \cup S^2$. This yields  a $(\rho'+\rho)$-approximation algorithm to \texttt{ML-CAP-STEINER-TREE} (with $\rho'=1$ for DAGs).
\end{proof}

\vspace{-0.3cm}It follows from Theorem \ref{th:CSTP-undirected-non-unif} that \texttt{CAP-STEINER-TREE} with non-uniform capacities is NP-complete in undirected graphs {\color{black}when the minimum capacity $c_{\min}$ equals $K-\kappa$}, for any constant $\kappa \geq 2$. We show however that Theorem \ref{th:CSTP-large-capa-unif} can be extended to DAGs with non-uniform capacities.

\begin{theorem}\label{th:CSTP-large-capa-non-unif-DAGs}
In DAGs with $c_{\min}\geq K-\kappa$, \texttt{CAP-STEINER-TREE} is solvable in polynomial time and \texttt{ML-CAP-STEINER-TREE} can be approximated within a ratio of $1+\rho$, for any constant $\kappa \geq 0$.
\end{theorem}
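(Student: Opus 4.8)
The plan is to mirror the proof of Theorem~\ref{th:CSTP-large-capa-unif}, replacing the uniform threshold $\kappa$ by the edge-dependent quantity $K-c(e)$ and exploiting that $\rho'=1$ in DAGs (vertex-disjoint, and even labelled vertex-disjoint, paths can be routed optimally by Theorem~\ref{th:CDS-DAG-K-fixed}). First I would redefine a \emph{reduced tree} as a tree $S^R$ rooted at $r$ spanning a subset $T'\subseteq T$ such that, for \emph{every} edge $e$ of $G$, $r$ is $\bar e$-linked to at least $K-c(e)$ terminals in $S^R$; for an edge $e=(u,v)$ of $S^R$ this reads $|S^R(v)|\le c(e)-(K-|T'|)$, and for an edge outside $S^R$ it forces $|T'|\ge K-c(e)$, hence $|T'|\ge K-c_{\min}$. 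I would then prove the analogue of Claim~\ref{fact7.1}: $I$ has a feasible solution iff such a reduced tree exists. The nontrivial direction extends $S^R$ exactly as in Claim~\ref{fact7.1}, attaching each remaining terminal through the last $S^R$-vertex on some path; since the $\ge K-c(e)$ terminals certified outside each subtree are genuine $T'$-terminals, they remain outside after the extension, so every edge $e$ keeps $|S(v)|\le c(e)$.

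The second step is the size bound, the analogue of Claims~\ref{fact7.2}--\ref{fact7.3}: a minimal reduced tree spans at most $2\kappa$ terminals. The argument is that, by minimality, deleting any leaf $t$ must violate the defining inequality on some edge $e$; either $e\notin S^R$, which forces $|T'|=K-c(e)\le\kappa$, or $e=(u,v)\in S^R$ is \emph{tight}, meaning $t\notin S^R(v)$ and $|S^R(v)|=|T'|-(K-c(e))\ge|T'|-\kappa$. If $|T'|>2\kappa$ the latter subtrees all exceed $\kappa$ terminals, so they pairwise intersect and are hence nested; the deepest one still contains a leaf, which cannot then be outside any of them, a contradiction. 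With at most $2\kappa$ terminals, Property~\ref{prop:numbvert} bounds the skeleton of a minimal reduced tree by $O(\kappa)$ vertices, hence $O(\kappa)$ arcs.

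Next I would turn this into an algorithm. By Property~\ref{enumskelet}, applied to the $O(K^{2\kappa})$ choices of at most $2\kappa$ terminals, all potential skeletons of minimal reduced trees are enumerable in polynomial time since $\kappa$ is constant; I keep only those with $|T'|\ge K-c_{\min}$. For each skeleton, realizing its $O(\kappa)$ arcs by vertex-disjoint paths on which every edge $e'$ carries capacity $c(e')\ge|S^R(v)|+(K-|T'|)$ is precisely an \texttt{ML-LAB-VDISJ-PATH} instance (labels $=$ capacities, $L_{uv}=\{|S^R(v)|+K-|T'|,\dots,K\}$) with a fixed number $O(\kappa)$ of source-sink pairs, which by Theorem~\ref{th:CDS-DAG-K-fixed} is solved \emph{exactly} in a DAG. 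For \texttt{CAP-STEINER-TREE} this decides feasibility in polynomial time. For \texttt{ML-CAP-STEINER-TREE}, the best reduced tree $S^1$ found this way has length at most that of an optimal solution (which contains a minimal reduced tree); I then compute a $\rho$-approximate Steiner tree $S^2$ on the uncovered terminals and clean up $S^1\cup S^2$ as in Theorem~\ref{th:CSTP-large-capa-unif}. The remaining terminals number $K-|T'|\le c_{\min}$, so no subtree created by $S^2$ can exceed any capacity, and the total length is at most $(1+\rho)$ times the optimum.

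The main obstacle, and the place where the non-uniform case genuinely differs from Theorem~\ref{th:CSTP-large-capa-unif}, is guaranteeing feasibility of the final extension: with distinct capacities one must ensure both that the paths realizing the reduced tree avoid edges of insufficient capacity (handled by the label sets and the exactness of \texttt{ML-LAB-VDISJ-PATH} in DAGs) and that completing the tree with an uncontrolled Steiner tree $S^2$ cannot overload a low-capacity edge. The latter is exactly why the reduced-tree condition is stated with the true total $K$ rather than $|T'|$: it forces $|T'|\ge K-c_{\min}$, so that the at most $c_{\min}$ terminals left for $S^2$ fit under every edge. Verifying the $2\kappa$ size bound and this slack condition simultaneously is the crux; the restriction to DAGs is essential, since it is what makes $\rho'=1$ and thus yields the ratio $1+\rho$ (in undirected graphs the non-uniform case is already NP-hard by Theorem~\ref{th:CSTP-undirected-non-unif}).
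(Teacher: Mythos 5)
Your proposal is correct, and it follows the same overall pipeline as the paper: a non-uniform notion of reduced tree, an equivalence between feasibility and the existence of such a tree, a constant bound on the size of minimal reduced trees, enumeration of their potential skeletons, realization of the skeleton arcs via \texttt{ML-LAB-VDISJ-PATH} with labels encoding the capacity thresholds $K-|T'|+|T_v|$ (exactly the paper's ``capacity $\geq K-x$'' condition), and the final $S^1\cup S^2$ merge giving ratio $1+\rho$. Where you genuinely diverge is in the structural lemma. The paper does not prove your $2\kappa$ bound: it instead shows that a minimal reduced tree has out-degree at most $\kappa+1$ (Claim \ref{fact:dagnonunif}) and skeleton height at most $\kappa+1$ (Claim \ref{fact:height}), which only yields $\Lambda=(\kappa+1)^{\kappa+1}$ terminals and a skeleton on at most $2\Lambda$ vertices. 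Your tight-edge argument --- every terminal whose deletion is blocked certifies either an outside edge forcing $|T'|\leq\kappa$, or a tight subtree with at least $|T'|-\kappa$ terminals not containing that terminal; for $|T'|>2\kappa$ these tight subtrees pairwise share terminals, hence are nested (incomparable subtrees of a rooted tree are disjoint), and a terminal inside the innermost one contradicts its own tightness condition --- is a clean generalization of Claim \ref{fact7.3} from the uniform case and gives the exponentially sharper bound $2\kappa$, improving the enumeration from $n^{O((\kappa+1)^{\kappa+1})}$-type to $n^{O(\kappa)}$-type. Two further small differences, both sound: you fold the paper's separate clause ``spanning at least $\kappa$ terminals'' into the all-edges-of-$G$ condition (yielding $|T'|\geq K-c_{\min}$, which is what the completion step actually needs, since the $\leq K-|T'|\leq c_{\min}$ remaining terminals then fit under every edge of $S^1\cup S^2$), and as a consequence you do not need the paper's preliminary dispatch of the case $K<\kappa$ via Theorem \ref{th:DAG-fixedK}. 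Both approaches buy the same theorem; yours buys a better constant-degree polynomial and a more uniform treatment, while the paper's degree-plus-height route is perhaps quicker to verify since it reuses the deletion argument of Claim \ref{fact:dagnonunif} twice without any nesting analysis.
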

\begin{proof}
As was the case for the previous theorem, we start with some claims. In particular, we extend the definition of a reduced tree to take into account the non-necessarily uniform capacities. {\color{black}Let $I$ be an instance of \texttt{CAP-STEINER-TREE} in a DAG with $c_{\min} \geq K-\kappa$ for some constant $\kappa\geq 0$. If $K < \kappa$, then the result follows from Theorem \ref{th:DAG-fixedK}. So, assume $K \geq \kappa$.}

\vspace{-0.3cm}\begin{claim}
\label{fact:reducedtree}
There is a feasible solution $S$ to $I$ if and only if there is a tree $S^R$ (called a \emph{reduced} tree) rooted at $r$, spanning at least $\kappa$ terminals, and such that, for each arc $a$ with capacity $c(a)$ (not only those incident to $r$), $r$ is $\bar{a}$-linked to at least $K-c(a)$ terminals in $S^R$.
\end{claim}

\vspace{-0.3cm}\begin{proof}
A solution for $I$ is clearly such a tree. Now, assume the existence of a reduced tree $S^R$ that spans a subset $T'$ of {\color{black} at least $\kappa$} terminals. We complete $S^R$ to obtain a Steiner tree $S$ spanning also the terminals in $T \setminus T'$, as in Claim \ref{fact7.1}. From the hypothesis, for each arc $a$ of $S^R$, $r$ is $\bar{a}$-linked to at least $K-c(a)$ terminals in $S^R$, and, for each arc $b$ of $G$ not in $S^R$, $r$ is $\bar{b}$-linked to at least $\vert T' \vert \geq \kappa\ge K-c(a)$ terminals of $S^R$. Hence, $S$ does not violate the capacity constraints and is thus a feasible solution to $I$.
\end{proof}

\vspace{-0.3cm}\begin{claim}
\label{fact:dagnonunif}
In any \emph{minimal}, i.e. inclusion-wise minimal, reduced tree $S^R$, no vertex has out-degree greater than $\kappa+1$.
\end{claim}

\vspace{-0.3cm}\begin{proof}
Assume that a minimal reduced tree $S^R$ contains a vertex $v$ with at least $\kappa+2$ outgoing arcs and let $v'$ be a child of $v$ in $S^R$.  Let $S'^R$ denote the subtree obtained from $S^R$ by removing all vertices of $S^R(v')$.
Since every subtree $S^R(w)$ of $S^R$ rooted at a child $w$ of $v$ contains at least one terminal (otherwise $S^R$ is not minimal), we know that $S'^R(v)$ (and thus also $S'^R$) spans at least $\kappa+1>\kappa$ terminals. Hence, for every arc $a$ of $S'^R$ not on the path from $r$ to $v$ in $S^R$ and not in $S'^R(v)$, we know that
$r$ is $\bar{a}$-linked to at least $\kappa+1>K-c(a)$ terminals in $S'^R$. For every child $w$ of $v$ in $S'^R$, we know that, {\color{black}for any arc $a$ in $S'^R(w)\cup\{(v,w)\}$, $r$ is $\bar{a}$-linked} to the at least $\kappa\geq K-c(a)$ terminals in the subtree of $S'^R(v)$ obtained by removing all the vertices of $S'^R(w)$. Finally, for any arc $a$ on the path from $r$ to $v$ in $S'^R$, we know that $r$ is $\bar{a}$-linked to at least  $K-c(a)$ terminals in  $S'^R$, since this was the case in $S^R$. {\color{black}Hence, we have proved that $S'^R$ satisfies the definition of a reduced tree, and is included in $S^R$ while being smaller, a contradiction.}
\end{proof}

\vspace{-0.3cm}\begin{claim}
	\label{fact:height}
Any directed path in the skeleton of a minimal reduced tree $S^R$ contains at most $\kappa+1$ arcs.
\end{claim}

\vspace{-0.3cm}\begin{proof}
Assume that the skeleton of a minimal reduced tree $S^R$ contains a directed path with at least $\kappa+2$ arcs. {\color{black}Without loss of generality}, we can choose  such a path $\mu$ from $r$ to a terminal $t\in T$, since each leaf is a terminal (otherwise $S^R$ is not minimal). We denote by $v$ the predecessor of $t$ in $\mu$. Since each internal vertex of $\mu$ has degree at least 3 in the skeleton of $S^R$ {\color{black}(and hence in $S^R$)}, {\color{black}and since $S^R$ is minimal}, we know that $S^R$ spans at least $\kappa + 2$ terminals. We now remove the path from $v$ to $t$ in $S^R$ and thus obtain a subtree $S'^R$ spanning at least $\kappa +1$ terminals. Using arguments similar to those in the proof of Claim \ref{fact:dagnonunif}, it is easy to check that $S'^R$ {\color{black}satisfies the definition of a reduced tree}, which means that $S^R$ was not minimal, a contradiction.
\end{proof}

We can now prove Theorem \ref{th:CSTP-large-capa-non-unif-DAGs}. It follows from Claims \ref{fact:dagnonunif} and \ref{fact:height} that the skeleton of a minimal reduced tree has maximum out-degree $\kappa+1$ and maximum height $\kappa+1$. Hence,
the undirected skeleton of a minimal reduced tree contains
 at most {\color{black}$\Lambda = (\kappa+1)^{\kappa+1}$ terminals (which are its leaves)}, and it follows from Property \ref{prop:numbvert} that such a skeleton has at most $2 \Lambda$ vertices.

We therefore enumerate all trees with at most $2 \Lambda$ vertices, keeping only those that span at least $\kappa$ and at most $\Lambda$ terminals, in order to ensure that any potential skeleton of a minimum reduced tree is enumerated. Each such enumerated tree is oriented from $r$ to the leaves, and we then try to replace its arcs by vertex-disjoint paths but, unlike in Theorem  \ref{th:CSTP-large-capa-unif}, when replacing an arc $(u,v)$ of a tree by a path, we impose that each arc of the path from $u$ to $v$ has a label (capacity) $\geq K-x$, where $x$ is the number of terminals  which are not descendant of $v$ in the tree. In other words, instead of solving a \texttt{VDISJ-PATH} instance, we solve a  \texttt{LAB-VDISJ-PATH} instance.
If all arcs can be replaced by labelled vertex-disjoint paths, we test whether the extended skeleton is a reduced tree: in such a case, we stop the enumeration, since we know from Claim \ref{fact:reducedtree} that $I$ has a feasible solution. Otherwise, we conclude that $I$ has no solution.

{\color{black}Let us show that the whole process takes a polynomial time. $\Lambda$ being a constant, this means that, from Cayley's formula and from the number of ways for choosing at most $2 \Lambda$ vertices among $n$, there is a polynomial number of  
labelled trees to enumerate. Besides, since the number of arcs that must be replaced by vertex-disjoint paths is at most  $2\Lambda-1$ in each tree, this means, from Theorem \ref{th:CDS-DAG-K-fixed}, that the associated \texttt{LAB-VDISJ-PATH} instance can be solved in polynomial time.}

Consider now a \texttt{ML-CAP-STEINER-TREE} instance $I'$. We proceed in a similar way as in Theorem \ref{th:CSTP-large-capa-unif}. More precisely, instead of choosing arbitrary vertex-disjoint paths, we solve an  \texttt{ML-LAB-VDISJ-PATH} instance with a fixed number of vertex pairs, which takes a polynomial time in DAGs according to Theorem \ref{th:CDS-DAG-K-fixed}.
However, instead of stopping the enumeration when a reduced tree is found, we enumerate all of them and store the best one, that we denote by $S^1$.
The total length of $S^1$ is a lower bound on the total length of an optimal solution to $I'$, since such an optimal solution contains a reduced tree. We then use a $\rho$-approximation algorithm to determine a minimum-length Steiner tree $S^2$ spanning all the terminals not already spanned by $S^1$. The total length of $S^2$ is at most $\rho$ times larger than the total length of an optimal solution to $I'$. We finally build a solution $S$ to $I'$ by removing from $S^1 \cup S^2$ all arcs of $S^2$ entering a vertex with in-degree 2 in $S^1 \cup S^2$. This yields a $(1+\rho)$-approximation algorithm for \texttt{ML-CAP-STEINER-TREE} in DAGs.
\end{proof}

\vspace{-0.5cm}\subsection{\texttt{ML-CAP-STEINER-TREE}  with $c_{\min} \geq K-1$}

The results given in this section generalize the main results known about the complexity and approximation of \texttt{STEINER-TREE}.

If $c_{\min} \geq K$, then any Steiner tree is a feasible solution to \texttt{CAP-STEINER-TREE}, and thus \texttt{ML-CAP-STEINER-TREE} is equivalent to \texttt{STEINER-TREE}, which can be solved in polynomial time when $K$ is fixed \cite{dreyfus,feldman,watel}. So consider the case where $c_{\min} = K-1$. In what follows, we denote by $E_K$ the subset of arcs/edges with capacity at least $K$.
 Let $I$ be an \texttt{ML-CAP-STEINER-TREE}  instance and let $S$ be an {\color{black}optimal} solution to $I$. Let $w$ be the  closest vertex  to $r$ in $S$ having out-degree at least 2 (with possibly $r=w$). All arcs on the path linking $r$ to $w$ are in $E_K$, while those in $S(w)$ can have any capacity since $r$ is $e$-linked to at most $K-1$ terminals for all $e$ in $S(w)$. Moreover, $S(w)$ spans all terminals and, since we can assume that all vertices without outgoing arcs in $S$ are terminals, we know that $S$ contains at most $K-1$ vertices of degree at least 3 (see the proof of Property \ref{prop:numbvert}).

Assume we can find in $G$ a vertex $w$ and two terminals $t_i$ and $t_j$ such that there are three internally  vertex-disjoint paths: $\mu_{rw}$ from $r$ to $w$ (with possibly $r=w$) with all its arcs in $E_K$, $\mu_{wt_i}$ from $w$ to $t_i$, and $\mu_{wt_j}$ from $w$ to $t_j$. We can then build a feasible solution to $I$ by extending $\mu_{rw} \cup \mu_{wt_i} \cup \mu_{wt_j}$ arbitrarily to obtain a Steiner tree spanning all terminals. Indeed, any arc in $\mu_{rw} \cup \mu_{wt_i} \cup \mu_{wt_j}$ has a residual capacity $\geq K-2$, and there are $K-2$ other terminals to span in order to get a Steiner tree. Conversely, if there is a feasible solution to $I$, then there is such a triple $(w,t_i,t_j)$.

 \begin{claim}
 \label{claim:shortpath}
 Let $S$ be an optimal solution to an instance $I$ of \texttt{ML-CAP-STEINER-TREE} in a graph $G=(V,E)$ with $c_{\min}=K-1$ and $\ell(e)>0$ for all $e\in E$. Let $w$ be the vertex with out-degree at least 2 the closest to $r$ in $S$, and let $\mu_{rw}$ be the path from $r$ to $w$ in $S$. Then all shortest paths from $r$ to $w $ in $G'=(V,E_K)$ intersect $S(w)$ only at $w$, and $\mu_{rw}$ is one of them.
 \end{claim}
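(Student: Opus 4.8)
The plan is to prove both assertions by a single exchange argument that exploits the optimality of $S$ together with the strict positivity of the lengths. I first record the preliminary fact (already observed just before the claim) that $\mu_{rw}$ is itself a path of $G'=(V,E_K)$: since $w$ is the branching vertex closest to $r$, every arc $e=(u,v)$ of $\mu_{rw}$ has the whole of $S(w)$, hence all $K$ terminals, below it, so $c(e)\ge K$ and $e\in E_K$. Consequently $\mu_{rw}$ is a candidate $r$--$w$ path in $G'$, and any shortest such path $\mu^*$ satisfies $\ell(\mu^*)\le \ell(\mu_{rw})$.

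The engine of the proof is the following observation: if $\mu$ is any $r$--$w$ path in $G'$ that meets $S(w)$ only at $w$, then $\mu\cup S(w)$ is a feasible solution of length $\ell(\mu)+\ell(S(w))$. Indeed, every arc of $\mu$ then has all $K$ terminals below it but has capacity $\ge K$, while the arcs of $S(w)$ keep their subtrees, hence their already admissible terminal counts. Applying this to $\mu=\mu^*$ \emph{when} $\mu^*$ meets $S(w)$ only at $w$, optimality of $S$ gives $\ell(S)=\ell(\mu_{rw})+\ell(S(w))\le \ell(\mu^*)+\ell(S(w))$, so $\ell(\mu_{rw})\le \ell(\mu^*)$, which combined with $\ell(\mu^*)\le \ell(\mu_{rw})$ proves both that $\mu_{rw}$ is a shortest $r$--$w$ path in $G'$ and that $\mu^*$ has the same length. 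Everything therefore reduces to establishing the disjointness statement.

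To prove disjointness I would argue by contradiction: suppose some shortest $\mu^*$ meets $S(w)$ at a first vertex $z\ne w$ (first along $\mu^*$ from $r$), so that the prefix $\mu^*_{rz}$ meets $S(w)$ only at $z$ and, by strict positivity, $\ell(\mu^*_{rz})<\ell(\mu^*)\le \ell(\mu_{rw})$. Since $z$ is a descendant of $w$ in $S$ and $w$ branches, the subtree $S(z)$ carries a nonempty proper subset $T_z$ of the terminals. I then build a feasible tree $\tilde S$ that reaches $S(w)$ through the cheaper segment $\mu^*_{rz}$ rather than through $\mu_{rw}$, of length at most $\ell(\mu^*_{rz})+\ell(S(w))<\ell(\mu_{rw})+\ell(S(w))=\ell(S)$, contradicting optimality. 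In the undirected case this $\tilde S$ is obtained by re-rooting $S(w)\cup\mu^*_{rz}$ at $r$, i.e. reversing the orientation of the path from $w$ to $z$; here the capacity constraints are exactly preserved because every reversed edge now separates off at least the nonempty set $T_z$ and therefore carries at most $K-1$ terminals, which is admissible precisely because $c_{\min}=K-1$ (while the edges of $\mu^*_{rz}$, now carrying all $K$ terminals, have capacity $\ge K$).

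The main obstacle is the construction and feasibility check of $\tilde S$, and in particular the directed case, where arcs cannot be reversed. There I would instead reach $w$ through the remaining portion $\mu^*_{zw}$ of $\mu^*$, keep $S(w)$ with its original orientation, detach the subtree $S(z)$ from its parent, and break the single resulting circuit by deleting the arc of $S(w)$ entering $z$; one then checks that every arc of $S(w)$ sees a subtree no larger than in $S$ (so capacities still hold) and that the arcs of $\mu^*$ have capacity $\ge K$. The delicate point is that $\mu^*$ may cross $S(w)$ several times between $z$ and $w$, so the reconnection must be carried out carefully (for instance by choosing $z$ together with a matching exit vertex so that the connecting subpath is internally disjoint from $S(w)$, or by breaking all the induced circuits simultaneously); this bookkeeping, rather than any conceptual difficulty, is where the real work lies.
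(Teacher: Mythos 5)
Your preliminary fact and your ``engine'' coincide with the easy half of the paper's proof (the case where the shortest path meets $S(w)$ only at $w$), and your treatment of the undirected case is correct but genuinely different from the paper's. The paper never reverses arcs: it grafts the \emph{entire} shortest path $\mu'_{rw}$ onto $S(w)$, deletes every arc of $S$ not on $\mu'_{rw}$ that enters a vertex of $W=\mu'_{rw}\cap S(w)$ (so each vertex keeps in-degree $1$), prunes terminal-free subtrees, and then argues feasibility via the first branching vertex $\hat{w}$ of the new tree: $\hat{w}$ lies on $\mu'_{rw}$, all arcs above it are in $E_K$, and every arc below it misses at least one terminal, so capacity $K-1$ suffices. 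Your re-rooting variant instead uses only the prefix $\mu^*_{rz}$, gets strictness for free from the positive-length suffix, and checks feasibility locally (each reversed edge separates off $T_z\neq\emptyset$, hence carries at most $K-1$ terminals; prefix edges are in $E_K$). That is a clean shortcut in the undirected setting; note only that $T_z\neq\emptyset$ rests on the fact that, with $\ell(e)>0$ everywhere, every subtree of an optimal $S$ contains a terminal --- you use this silently and should state it.

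The genuine gap is the directed case, which you explicitly leave as ``bookkeeping'': it is not, because the claim is invoked (via Theorem \ref{th:large-capas-fixedK}) for digraphs as well, and handling multiple crossings of $\mu^*$ with $S(w)$ is precisely the content of the paper's argument, not a detail on top of it. Your sketch as written is off in two places: deleting ``the arc of $S(w)$ entering $z$'' and ``detaching $S(z)$ from its parent'' are one and the same operation, and there is in general no ``single resulting circuit'' --- one in-degree-$2$ conflict arises at \emph{every} vertex of $W$. The repair you gesture at (``breaking all the induced circuits simultaneously'') is exactly the paper's surgery, but to close it you still must verify three things the paper verifies: (i) treeness --- after cutting all tree arcs entering $W$-vertices, every vertex has in-degree $1$ and following in-arcs backwards always terminates at $r$ along $\mu'_{rw}$, so no circuit survives; (ii) strict length decrease --- the first crossing vertex $z\neq w$ has its $\mu'_{rw}$-predecessor outside $S(w)$, so its positive-length tree in-arc really is deleted, which compensates $\ell(\mu'_{rw})\leq\ell(\mu_{rw})$ being possibly an equality; (iii) feasibility --- either the paper's $\hat{w}$ argument, or your (correct, once (i) is in place) observation that surviving $S(w)$-arcs see subtrees no larger than before while $\mu'_{rw}$-arcs lie in $E_K$. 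Once the directed construction is done it also covers undirected graphs, so the case split in your proposal becomes unnecessary, though your re-rooting remains a nice self-contained argument for the undirected case.
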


 \begin{proof}
First notice that $S= \mu_{rw}\cup S(w)$ and all arcs of $\mu_{rw}$ belong to $E_K$. Let $\mu_{rw}'$ be any shortest path from $r$ to $w$ in $G'$, and let $W$ be the set of vertices that belong to both $\mu_{rw}'$ and $S(w)$. If $W = \{w\} $ then $S'= \mu_{rw}' \cup S(w)$ is a feasible solution to $I$, which means that  $\ell(\mu_{rw}) = \ell(\mu_{rw}')$, otherwise $S$ would not be optimal.

So assume $W \neq  \{w\} $ and let $S'$ be the tree obtained from $S$ by replacing  $\mu_{rw}$ by $\mu_{rw}'$, and by removing all arcs $(u,v)\notin \mu_{rw}'$ with $v \in W$, to ensure that each vertex (except $r$) still has in-degree 1. Notice that $\ell(S')<\ell(S)$, since
$\mu_{rw}'\leq \mu_{rw}$ and at least one arc (of length $>0$) is removed from $S(w)$ to obtain $S'$. Then, we remove all arcs  $(u,v)$ such that $S'(v)$ contains no terminal (since they are useless in a solution to $I$). This way, we obtain a new tree $S''$ rooted at $r$, spanning all terminals, and such that $\ell(S'')<\ell(S)$ and $S''(v) \cap T \neq \emptyset$ $\forall v \in S''$. This is illustrated on Figure \ref{fig:cor77}.

Let $w'$ be a vertex in $W\setminus\{w\}$, and let $t$ be any terminal in $S(w')$. In $S''$, there is a path from $r$ to $t$, while there is no path from $w$ to $t$. So $w''$, which is the closest vertex to $w$ on $\mu'_{rw}$ verifying $t \in S''(w'')$, has outdegree at least 2 in $S''$.

Let $\hat{w}$ be the vertex with outdegree at least 2 which is the closest to $r$ in $S''$ ({\color{black}$\hat{w}$ belongs to $\mu_{rw}' \setminus \{w\}$, from the previous paragraph}), and let $\mu_{r\hat{w}}''$ be the path from $r$ to $\hat{w}$ in $S''$. Note that the only vertices $v$ in $S''$ with $S''(v) \cap T =T$ are those on $\mu_{r\hat{w}}''$, and that all arcs on $\mu_{r\hat{w}}''$ have capacity at least $K$ since they also belong to $\mu_{rw}'$.
Moreover, since $S''(v) \cap T \neq \emptyset$ for every child $v$ of $\hat{w}$ in $S''$, all arcs in $S''(\hat{w})$ only need to have capacity $K-1$. Hence, $S''$ is a feasible solution to $I$ with $\ell(S'') < \ell(S)$, a contradiction.
\end{proof}

\begin{figure}
	\begin{center}
		\includegraphics[scale=0.95]{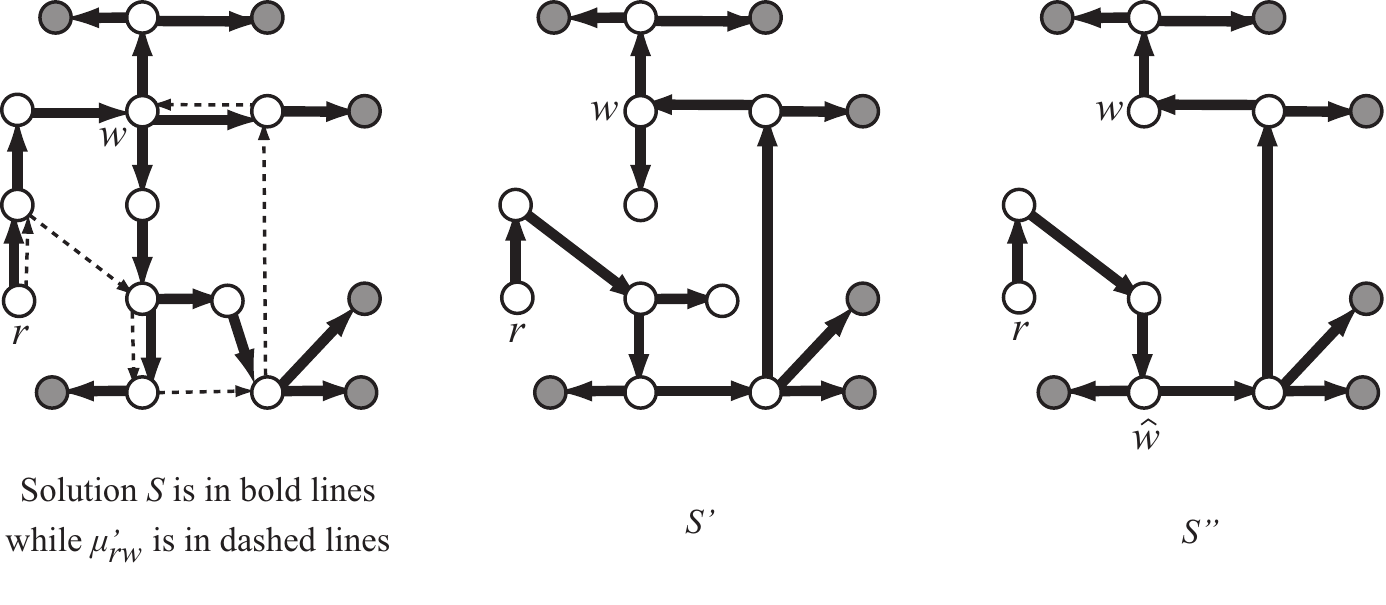}
		\caption{Illustration of the proof of Claim \ref{claim:shortpath}.}
		\label{fig:cor77}
	\end{center}
\end{figure}

We now consider 
 \texttt{ML-CAP-STEINER-TREE} with $c_{\min} \geq K-1$ and show that, when $K$ is fixed, it is solvable in polynomial time.

 \begin{theorem}
 \label{th:large-capas-fixedK}
 If $K$ is fixed and $c_{\min}\geq K-1$, \texttt{ML-CAP-STEINER-TREE} can be solved by an algorithm whose running time is polynomial, and whose only non FPT factor with respect to K is $O(n^{O(\log(K))})$. In particular, \texttt{ML-CAP-STEINER-TREE} is solvable in polynomial time if $K=2$.
 \end{theorem}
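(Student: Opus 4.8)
The plan is to split the analysis according to $c_{\min}$. If $c_{\min}\geq K$, then no capacity is ever binding (every edge may carry all $K$ terminals), so the instance is exactly a rooted \texttt{STEINER-TREE} instance, which is solvable in polynomial time for fixed $K$ as recalled in Section~\ref{sec:intro}; this settles that regime. So I would concentrate on the remaining case $c_{\min}=K-1$, where I rely on the decomposition established just before the statement: an optimal solution has the form $S=\mu_{rw}\cup S(w)$, with $w$ the out-degree-$\geq 2$ vertex closest to $r$, the spine $\mu_{rw}$ using only arcs of $E_K$, and $S(w)$ a subtree spanning all $K$ terminals in which $w$ branches. By the normalization adopted at the start of this section all lengths are positive integers, so Claim~\ref{claim:shortpath} applies and some shortest path from $r$ to $w$ in $G'=(V,E_K)$ realizes the spine while meeting $S(w)$ only at $w$. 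The algorithm would therefore enumerate the $O(n)$ candidates for $w$, compute a shortest path from $r$ to $w$ in $G'$, delete its interior from $G$, and then compute a minimum-length $S(w)$ in the remaining graph, requiring $w$ to have out-degree at least $2$; the minimum over all $w$ of the spine length plus $\ell(S(w))$ is the optimum. Correctness follows because, for the true first junction $w^{*}$, the tree $S(w^{*})$ avoids every shortest spine (Claim~\ref{claim:shortpath}) and hence is a candidate for our computed $S(w^{*})$, so the value we produce is at most, and being feasible equals, the optimum.

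The crucial observation for the second part is that below $w$ the capacity constraints are vacuous: since $w$ branches, every subtree of $S(w)$ rooted at a proper descendant of $w$ contains at most $K-1=c_{\min}$ terminals, so every arc strictly below $w$ may carry its whole subtree. Consequently $S(w)$ is nothing but a minimum-length rooted (directed) Steiner tree at $w$ with out-degree at least $2$, spanning the $K$ terminals and avoiding the spine interior. This is the heart of the matter: unlike the general fixed-$K$ uniform case (Theorem~\ref{th:undirected-unif-fixedK}), where realizing a skeleton forces genuinely vertex-disjoint paths and hence runs into \texttt{ML-VDISJ-PATH}, here overlaps between branches are harmless, because they can be pruned to a tree of no larger length whose capacities below $w$ stay within $K-1$. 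Thus no vertex-disjoint-paths barrier is met and the subproblem remains tractable even in undirected graphs.

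To obtain the announced running time I would solve this rooted Steiner subproblem by a divide-and-conquer enumeration of its skeleton, rather than by brute-force placement of all its junction vertices (which would cost $n^{\Theta(K)}$, cf.\ Property~\ref{enumskelet}). Using that an inclusion-wise minimal skeleton has $O(K)$ vertices (Property~\ref{prop:numbvert}) and, by Property~\ref{prop:pathlength}, short root-to-leaf paths, I would guess at each recursion step a balanced separator splitting the terminals still to be spanned into two parts, each of size at most a constant fraction of the current number; the recursion then has depth $O(\log K)$. Each level guesses only $O(1)$ separator vertices, giving $n^{O(\log K)}$ vertex guesses along any branch of the recursion, while the choice of how terminals are distributed among the parts contributes only a factor depending on $K$. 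Realizing each guessed skeleton arc by a shortest path, and checking the (trivial) capacities below $w$, takes polynomial time. Altogether the cost is a function of $K$ times $n^{O(\log K)}$ times a polynomial, exactly the claimed bound; and for $K=2$ the exponent $O(\log K)$ is constant, yielding a genuinely polynomial algorithm.

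I expect the main obstacle to be the complexity accounting of this last step: one must organize the recursion so that the only dependence on $K$ reaching the exponent of $n$ is the $O(\log K)$ recursion depth (the non-FPT factor $n^{O(\log K)}$), while all combinatorial choices of terminal partitions are confined to an FPT factor $f(K)$. Arranging the balanced-separator recursion so that it simultaneously enumerates every candidate skeleton of the optimal $S(w)$ and respects the disjointness from the spine guaranteed by Claim~\ref{claim:shortpath}, without double counting and without incurring an extra factor of $n$ per terminal, is the delicate point. By contrast, the remaining ingredients—the reduction of the $c_{\min}\geq K$ case to \texttt{STEINER-TREE}, the shortest-spine computation in $G'$, and the ``capacities are slack below $w$'' argument—are routine.
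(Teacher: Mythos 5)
Your opening moves match the paper exactly: the $c_{\min}\geq K$ case reduces to \texttt{STEINER-TREE}, the optimum decomposes as $S=\mu_{rw}\cup S(w)$ with the spine in $E_K$, Claim~\ref{claim:shortpath} lets you enumerate $w$ and delete the interior of a shortest spine, and capacities are indeed slack strictly below $w$. The fatal step is your assertion that \emph{overlaps between branches are harmless}. They are harmless below the first junction, but not at $w$ itself: if you realize the branches (or the guessed skeleton arcs) of $S(w)$ by independently computed shortest paths or Steiner trees and then prune the union to an arborescence, the out-degree of $w$ can collapse to $1$. Take $K=2$ with branch candidates $w\to a\to v\to t_1$ and $w\to b\to v\to t_2$: whichever in-arc of $v$ you keep, one root arc becomes terminal-free and is pruned, so the surviving first arc out of $w$ carries all $K$ terminals and needs capacity $K$, yet it need not lie in $E_K$. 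Moreover this infeasible pruned union can be strictly \emph{shorter} than every feasible branching tree, so your subroutine returns neither a feasible solution nor a valid bound, and discarding infeasible outcomes loses optimality. This is exactly why the paper does not dispense with disjointness: it guesses two terminals $t_i,t_j$ lying in different root branches of the optimum and computes two \emph{genuinely internally vertex-disjoint} paths from $w$ to $t_i$ and $t_j$ by a single-source min-cost flow (as in Theorem~\ref{th:CSTP-cap1}); the construction then keeps $t_i$ and $t_j$ in distinct branches of the output, which is what guarantees feasibility. The \texttt{ML-VDISJ-PATH} barrier is avoided not because disjointness is irrelevant, as you claim, but because disjoint paths from a \emph{single} source are flow-computable in polynomial time.

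There is a second gap, and it is where the paper's $n^{O(\log K)}$ actually comes from. Even with the two disjoint paths computed optimally, the Steiner tree on the remaining $K-2$ terminals must be able to reattach to them as cheaply as the optimal rest $R^*$ attaches to $\mu^*_{wt_i}\cup\mu^*_{wt_j}$; to ensure this, the paper chooses $t_i,t_j$ closest to the children of $w$ in the skeleton, bounds by Property~\ref{prop:pathlength} the number of junction vertices on the two optimal paths by $2\log_2(K)-2$, enumerates this set $W$ (cost $n^{O(\log K)}$), and forces the flow through $W$. Your balanced-separator recursion puts the $n^{O(\log K)}$ in the wrong place, and its accounting does not stand: the recursion tree has $\Theta(K)$ nodes, and the $O(1)$ vertex guesses multiply across \emph{all} of them — not just along one root-to-leaf branch — giving $n^{\Theta(K)}$ (no better than Property~\ref{enumskelet}) unless the subproblems are solved independently; but independence of the root branches is precisely what fails by the previous paragraph. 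Note, finally, that once overlaps below $w$ are correctly exploited, no separator machinery is needed at all for the remaining terminals: a single FPT Steiner-tree computation (Dreyfus--Wagner type) suffices, which is what the paper does.
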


  \begin{proof}
  The case  $c_{\min} \geq K$ has already been settled at the beginning of this section. So, assume $c_{\min} = K-1$. Given an instance $I$ of \texttt{ML-CAP-STEINER-TREE}, we solve $I$ as follows.
  We consider all vertices $w$ such that $w$ is either the root $r$ or  a  vertex of degree at least 3 in $G$. For each such vertex $w$:
  \begin{itemize}
  \vspace{-0.2cm}\item We determine a shortest path $\mu_{rw}$ from $r$ to $w$ in $G'=(V,E_K)$, and we denote by $G_w$ the subgraph of $G$ obtained by removing all vertices of $\mu_{rw}$, except $w$;
  \vspace{-0.3cm}\item We consider all pairs $(t_i,t_j)$ of distinct terminals and all subsets $W$ of at most $2\log_2(K)-2$ vertices  $v\neq w$ of degree at least 3 in $G_w$.
  \end{itemize}

  \vspace{-0.2cm}So, let $(w,t_i,t_j,W)$ be such a quadruple. We first determine two internally vertex-disjoint paths $\mu_{wt_i}$ and $\mu_{wt_j}$ linking $w$ to $t_i$ and to $t_j$ {\color{black}in $G_w$}, and such that $\mu_{wt_i}\cup\mu_{wt_j}$ contains all vertices of $W$ and has minimum total length. As in the proof of Theorem \ref{th:CSTP-cap1}, this can be done in polynomial time: we add a sink $s$ and two arcs $(t_i,s)$ and $(t_j,s)$, and we determine two internally vertex-disjoint paths of minimum total length from $w$ to $s$ by using a min-cost flow algorithm; in addition, we impose a flow equal to 1 on each arc $(v',v'')$  corresponding to a vertex $v$ of $W$ in the graph $H$ obtained from $G_w$ (as in the proof of  Theorem \ref{th:CSTP-cap1}), to ensure that the paths contain $W$.

 Assume we are able to find the two internally vertex-disjoint paths $\mu_{wt_i}$ and $\mu_{wt_j}$ in $G_w$. We then consider the graph $G'_w$ obtained from $G_w$ by assigning a length 0 to all arcs on $\mu_{wt_i}\cup\mu_{wt_j}$, and we determine a directed tree $S_{wt_it_j}$ of minimum total length in $G'_w$, rooted at $w$, and spanning all terminals in $T\setminus\{t_i,t_j\}$. Let $R_{wt_it_j}$ denote the set of arcs $(u,v)$ in $S_{wt_it_j}$ with $v$ not belonging to $\mu_{wt_i}\cup\mu_{wt_j}$. We finally build a solution to $I$ by taking all arcs of $\mu_{rw}\cup\mu_{wt_i}\cup\mu_{wt_j}\cup R_{wt_it_j}$.

 Among all built solutions, we keep the best one, which we denote by $S_{best}$. We now prove that $S_{best}$ is an optimal solution to $I$. Let $S^*$ be an optimal solution to $I$, and let $w$ be the vertex in $S^*$ the closest to $r$ with out-degree at least 2. Let $v_1$ and $v_2$ be two children of $w$ in the skeleton of  $S^*$, and  let $t_i$ (resp. $t_j$) be a terminal in $S^*(v_1)$ (resp. $S^*(v_2)$) closest to $v_1$ (resp. $v_2$) in terms of the number of vertices on the path linking them in the skeleton of $S^*(v_1)$ (resp. $S^*(v_2)$).
 We denote by $\mu^*_{rw},\mu^*_{wt_i}$ and $\mu^*_{wt_j}$ the paths in $S^*$ linking $r$ to $w$, $w$ to $t_i$, and $w$ to $t_j$, respectively. Finally, let $W$ be the set of vertices $v\neq w$ on $\mu^*_{wt_i}\cup\mu^*_{wt_j}$ having degree at least 3 in $S^*$, and let $R^*$ denote the set of arcs in $S^*$ that do not belong to $\mu^*_{rw}\cup\mu^*_{wt_i}\cup\mu^*_{wt_j}$.

  We first prove that the proposed algorithm considers the quadruple $(w,t_i,t_j,W)$. Since $w$ has out-degree at least 2, it is either the root $r$ or a vertex of degree at least 3 in $G$. It follows from Claim \ref{claim:shortpath} that $G_w$ contains all vertices of $W$. Hence, we only have to prove that $|W|\leq 2\log_2(K)-2$. Let $n_1$ (resp. $n_2$) be the number of vertices in the skeleton of $S^*(v_1)$ (resp. $S^*(v_2)$). If $n_1=1$, the path from $v_1$ to $t_i$  in the skeleton of $S^*$ is reduced to $1=\log_2(n_1+1)$ vertex $v_1=t_i$; otherwise, $v_1$ has out-degree at least 2 in $S^*$, and we know from {\color{black}the proof of} Property \ref{prop:pathlength} that the path from $v_1$ to $t_i$ in the skeleton of $S^*$ has at most $\log_2(n_1+1)$ vertices. Similarly, the path from $v_2$ to $t_j$ in the skeleton of $S^*$ has at most $\log_2(n_2+1)$ vertices. Hence $W$ contains at most $\log_2(n_1+1)+\log_2({\color{black}n_2}+1)-2$ vertices. Since $w$ has out-degree at least 2, it follows from Property \ref{prop:numbvert} that the skeleton of $S^*(w)$ contains at most $2K-1$ vertices, which implies $n_1+n_2\leq 2K-2$. The sum $\log_2(n_1+1)+\log_2({\color{black}n_2}+1)-2$ is therefore maximized for $n_1=n_2=K-1$, which implies that $W$ contains at most $2\log_2(K)-2$ vertices.

 We now prove that $\ell(S_{best})\leq \ell(S^*)$.
 Let  $\mu_{rw}\cup\mu_{wt_i}\cup\mu_{wt_j}\cup R_{wt_it_j}$ be the solution returned by the proposed algorithm for the quadruple $(w,t_i,t_j,W)$.
It follows from Claim \ref{claim:shortpath} that $\ell(\mu_{rw})=\ell(\mu^*_{rw})$, and that $G_w$ contains all vertices of $S^*(w)$. Since $\mu^*_{wt_i}$ and $\mu^*_{wt_j}$ are two internally vertex-disjoint paths linking $w$ to $t_i$ and to $t_j$ in $G_w$, we have
 $\ell(\mu_{wt_i})+\ell(\mu_{wt_j})\leq\ell(\mu^*_{wt_i})+\ell(\mu^*_{wt_j})$.
 Consider now the set $R'^*$ of arcs $(u,v)$ in $R^*$ with $v$ not belonging to $\mu_{wt_i}\cup\mu_{wt_j}$, and let $S$ be the tree
 obtained from $S^*(w)$ by replacing $\mu^*_{wt_i}$ and $\mu^*_{wt_j}$ by $\mu_{wt_i}$ and $\mu_{wt_j}$, and by {\color{black}removing the arcs in $R^* \setminus R'^*$}. Note that $S$ is a tree rooted at $w$, spanning all terminals in $T\supset T\setminus\{t_i,t_j\}$, and with total length at most equal to $\ell(R^*)$ in $G'_w$ (since all arcs in $\mu_{wt_i}\cup\mu_{wt_j}$ have length 0 in $G'_w$).  Hence, $\ell(R_{wt_it_j})\leq\ell(S_{wt_it_j})\leq \ell (S)\leq \ell(R^*)$ in $G'_w$, which implies $\ell(R_{wt_it_j})\leq\ell(R^*)$ in $G$. In summary,
 \begin{align*} \ell(S_{best})&{ \color{black}= }\ell(\mu_{rw})+\ell(\mu_{wt_i})+\ell(\mu_{wt_j})+\ell(R_{wt_it_j})\\
 	&\leq \ell(\mu^*_{rw})+\ell(\mu^*_{wt_i})+\ell(\mu^*_{wt_j})+\ell(R^*)\\
 	&=\ell(S^*).
 \end{align*}

The total number of possible quadruples is {\color{black}$O(K^2 n^{O(\log(K))})$}.  For each of them, we have to compute a shortest path, a minimum-cost flow, and {\color{black}an optimal Steiner tree spanning $K-2$ terminals. The latter problem can be solved in time FPT with respect to the number of terminals \cite{dreyfus,feldman,watel}, and the other two problems can be solved in polynomial time \cite{AHUJA}.}
 \end{proof}

Together with Theorem \ref{th:undirected-unif-fixedK}, the previous theorem shows, in particular, that Theorems \ref{th:CSTP-directed-unif-K-fixed} and \ref{th:CSTP-undirected-non-unif} are best possible. {\color{black}It also implies, together with Theorem \ref{th:CSTP-cap1}, that \texttt{ML-CAP-STEINER-TREE} is polynomial-time solvable if $K=3$ and all capacities are equal}. When $K$ is part of the input (i.e., not fixed), we have the following result, which complements Theorem \ref{th:unit-cost-Steiner}.

  \begin{theorem}
  \label{th:large-capas-non-fixedK}
   If $c_{min}\geq K-1$, \texttt{CAP-STEINER-TREE} is solvable in polynomial time and \texttt{ML-CAP-STEINER-TREE} can be approximated within a ratio of $1+\rho$.
  \end{theorem}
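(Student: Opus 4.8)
The plan is to reduce to the case $c_{\min}=K-1$ and to exploit the structural description established just before Claim~\ref{claim:shortpath}: in any optimal solution $S^*$, if $w$ denotes the vertex of out-degree at least $2$ closest to $r$, then $S^*$ splits into a path $\mu^*_{rw}$ from $r$ to $w$ using only arcs of $E_K$ (it carries all $K$ terminals) and a subtree $S^*(w)$ rooted at $w$ spanning all terminals, in which every arc carries at most $K-1$ terminals and is thus feasible. If instead $c_{\min}\ge K$, every Steiner tree is feasible, so \texttt{CAP-STEINER-TREE} is trivially solvable and \texttt{ML-CAP-STEINER-TREE} coincides with \texttt{STEINER-TREE}, which is $\rho$-approximable; hence I only treat $c_{\min}=K-1$.

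The algorithm enumerates every candidate junction $w\in\{r\}\cup\{v:\deg(v)\ge 3\}$ and every pair $(t_i,t_j)$ of distinct terminals. For each triple I first compute a shortest path $\mu_{rw}$ from $r$ to $w$ in $G'=(V,E_K)$, delete its internal vertices to obtain $G_w$, and then, exactly as in the proofs of Theorems~\ref{th:CSTP-cap1} and~\ref{th:large-capas-fixedK}, compute via a min-cost flow two internally vertex-disjoint paths $\mu_{wt_i},\mu_{wt_j}$ of minimum total length from $w$ to $t_i$ and $t_j$ in $G_w$. Finally I run a $\rho$-approximation algorithm for \texttt{STEINER-TREE} to obtain a tree $S^2$ rooted at $w$ spanning $T\setminus\{t_i,t_j\}$ in $G_w$, take $\mu_{rw}\cup\mu_{wt_i}\cup\mu_{wt_j}\cup S^2$, and prune redundant arcs entering vertices of in-degree $2$. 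The purpose of the pair $(t_i,t_j)$ is precisely to force $w$ to have out-degree at least $2$: then every arc strictly below $w$ carries at most $K-1$ terminals while $\mu_{rw}\subseteq E_K$ carries all $K$, so the resulting tree respects all capacities. The best tree over all triples is returned. As each step (shortest path, min-cost flow, one Steiner-tree approximation) is polynomial and there are $O(nK^2)$ triples, the whole procedure runs in polynomial time.

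For the approximation guarantee I analyse the triple $(w,t_i,t_j)$ arising from an optimal $S^*$, taking $t_i,t_j$ in two distinct children-subtrees of $w$. By Claim~\ref{claim:shortpath}, $\mu^*_{rw}$ is a shortest $r$--$w$ path in $G'$ and its interior avoids $S^*(w)$, so $\ell(\mu_{rw})=\ell(\mu^*_{rw})$ and $G_w\supseteq S^*(w)$. Consequently $\mu^*_{wt_i},\mu^*_{wt_j}$ are feasible disjoint paths in $G_w$, giving $\ell(\mu_{wt_i})+\ell(\mu_{wt_j})\le \ell(\mu^*_{wt_i})+\ell(\mu^*_{wt_j})\le \ell(S^*(w))$, and $S^*(w)\subseteq G_w$ is itself a feasible rooted Steiner tree on $T\setminus\{t_i,t_j\}$, so $\ell(S^2)\le \rho\,\ell(S^*(w))$. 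Adding these and using $\ell(S^*)=\ell(\mu^*_{rw})+\ell(S^*(w))$ together with $1+\rho\ge 1$ yields total length at most $\ell(\mu^*_{rw})+(1+\rho)\ell(S^*(w))\le (1+\rho)\ell(S^*)$, the desired ratio. For \texttt{CAP-STEINER-TREE} (all lengths $0$) I run the same algorithm after setting all lengths to $1$: feasibility is unaffected by lengths, and applying the same argument to a feasible solution with the fewest arcs shows that whenever the instance is feasible the algorithm finds, for the corresponding triple, a backbone that extends to a feasible tree; thus the algorithm returns a solution if and only if one exists, settling feasibility in polynomial time.

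The delicate point is that, $K$ being part of the input, one cannot afford the enumeration of the internal branch vertices used in Theorem~\ref{th:large-capas-fixedK} (this would cost $n^{O(\log K)}$, only quasi-polynomial). The way around it is to give up on reconstructing the optimal subtree below $w$ exactly and instead bound it by a single $\rho$-approximate Steiner tree; the two-part estimate above, separating the $E_K$-prefix from the capacity-free subtree, is what keeps the loss down to the combined factor $1+\rho$. The one ingredient that makes the separate computation of $\mu_{rw}$ and of the structure below $w$ legitimate is Claim~\ref{claim:shortpath}, which guarantees that an arbitrary shortest $E_K$-prefix neither lengthens the solution nor blocks an optimal subtree.
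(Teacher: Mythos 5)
Your proposal is correct and follows the paper's own proof essentially step for step: the same enumeration of $O(nK^2)$ triples $(w,t_i,t_j)$ with $w$ the root or a degree-$\geq 3$ vertex, the same shortest path $\mu_{rw}$ in $G'=(V,E_K)$ followed by a min-cost-flow computation of two internally vertex-disjoint paths in $G_w$, the same $\rho$-approximate Steiner tree on $T\setminus\{t_i,t_j\}$ with pruning, and the same $(1+\rho)$ accounting via Claim \ref{claim:shortpath} (your handling of \texttt{CAP-STEINER-TREE} by unit lengths, and your rooting of $S^2$ at $w$ rather than $r$, are cosmetic variants of the paper's greedy completion and give the same guarantee). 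One small correction: define $G_w$ by deleting \emph{all} vertices of $\mu_{rw}$ except $w$ --- including $r$ --- as the paper does, since deleting only the ``internal vertices'' of $\mu_{rw}$ leaves $r$ available to the disjoint paths or to $S^2$, which could then loop back through $r$ and destroy the tree structure of the assembled solution.
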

  \begin{proof}
 We use the same ideas as those used in {\color{black}the proof of} the previous theorem. More precisely, for solving an instance $I$ of \texttt{CAP-STEINER-TREE}, we enumerate all triples $(w,t_i,t_j)$, where $w$ is either the root $r$ or a  vertex of degree at least 3 in $G$, and
 $t_i, t_j$ both belong to $T$. For each such triple, we determine a shortest path $\mu_{rw}$ from $r$ to $w$ in $G'=(V,E_K)$, remove all vertices of $\mu_{rw}$, except $w$, to create $G_w$, and determine two internally vertex-disjoint paths $\mu_{wt_i}$ and $\mu_{wt_j}$ from $w$ to $t_i$ and to $t_j$ in $G_w$. This can be done in polynomial time using path and flow techniques similar to those used in the previous proof (without lengths on the arcs). If we succeed in finding the three paths
 $\mu_{rw},\mu_{wt_i},\mu_{wt_j}$ for a triple $(w,t_i,t_j)$, then we greedily complete their union into a Steiner tree rooted at $r$ and spanning all terminals, which gives a solution to $I$. Otherwise, $I$ does not have any feasible solution. All this can be done in polynomial time, since there are $O(nK^2)$ triples $(w,t_i,t_j)$ to enumerate.

 For an instance $I'$ of \texttt{ML-CAP-STEINER-TREE}, we again enumerate all triples $(w,t_i,t_j)$, and determine for each such triple a shortest path $\mu_{rw}$ from $r$ to $w$ in $G'$, as well as two internally vertex-disjoint paths of shortest total length, $\mu_{wt_i}$ and $\mu_{wt_j}$ from $w$ to $t_i$ and to $t_j$ in $G_w$ (as in the proof of Theorem \ref{th:large-capas-fixedK}).  If we succeed in finding the three paths
 $\mu_{rw},\mu_{wt_i},\mu_{wt_j}$ for a triple $(w,t_i,t_j)$, we then use a $\rho$-approximation algorithm to determine a directed tree $S_{wt_it_j}$ of minimum total length, rooted at $r$, and spanning all terminals in $T\setminus\{t_i,t_j\}$.
 Let $R_{wt_it_j}$ be the set of arcs $(u,v)$ in $S_{wt_it_j}$ with $v$ not belonging to $\mu_{rw}\cup\mu_{wt_i}\cup\mu_{wt_j}$; we build a solution to $I'$ by taking all arcs of $\mu_{rw}\cup\mu_{wt_i}\cup\mu_{wt_j}\cup R_{wt_it_j}$. Among all built solutions, we keep the best one, which we denote by $S_{best}$. Now, let $S^*$ be an optimal solution to $I'$, and let $w$ be the vertex in $S^*$ the closest to $r$ with out-degree at least 2.
 Let $v_1$ and $v_2$ be two distinct children of $w$ in $S^*$, and let $t_i$ be a terminal in $S^*(v_1)$, and $t_j$ a terminal in $S^*(v_2)$. The triple $(w,t_i,t_j)$ is considered in our enumeration, and we clearly have $\ell(\mu_{rw})+\ell(\mu_{wt_i})+\ell(\mu_{wt_j})\leq \ell(S^*)$ and
 $\ell(R_{wt_it_j})\leq \ell(S_{wt_it_j})\leq \rho\ell(S^*)$. Hence, $\ell(S_{best})\leq (1+\rho)\ell(S^*)$.

 Again, the whole process takes a polynomial time. Indeed, there are $O(nK^2)$ enumerated triples, and, for each of them, we have to determine a shortest path, a minimum-cost flow, and a $\rho$-approximate solution to an instance of \texttt{STEINER-TREE}. All these problems can be solved in polynomial time.
  \end{proof}

\section{Concluding remarks}

{\color{black}We have studied the complexity of \texttt{ML-CAP-STEINER-TREE} in digraphs, DAGs and undirected graphs, and we have dealt with any possible case with respect to all the parameters that we considered ({\color{black}minimum and maximum capacities}, lengths, and number of terminals).} {\color{black}Moreover, whenever \texttt{ML-CAP-STEINER-TREE} was intractable while \texttt{CAP-STEINER-TREE}, the case with lengths $0$, was not, we have provided approximation results for \texttt{ML-CAP-STEINER-TREE} nearly as good as the best ones for \texttt{STEINER-TREE}.}

While we have also obtained some results about the parameterized complexity of ML-CAP-STEINER-TREE, several questions remain open in this area:
\begin{itemize}
\vspace{-0.2cm}	\item The results associated with leaves 11 and 13 in Figure \ref{fig:arbres} are best possible, since the FPT-reduction from VDISJ-PATH parameterized by $p$ described in Theorem \ref{cor:pMVDP-unif} shows in particular that CAP-STEINER-TREE is W[1]-hard with respect to either $K$ or $\kappa$ in this case, even with uniform capacities (note that, in this reduction, we have $K = O(p^2)$ and $\kappa = O(p^2)$).
\vspace{-0.2cm}	\item However, the result associated with leaf 9 in Figure 1 may not be the best possible one (i.e., this case might actually be FPT with respect to $\kappa$), since in undirected graphs VDISJ-PATH is FPT with respect to $p$ (so Theorem \ref{cor:pMVDP-unif} does not provide any useful information in this case).
\vspace{-0.2cm}	\item Finally, we think that the main open problem is related to the result provided in Theorem \ref{th:large-capas-fixedK} (and associated to leaf 5 in Figure \ref{fig:arbres}). We have proved that ML-CAP-STEINER-TREE is polynomial-time solvable in this case, hence generalizing the same result already known for STEINER-TREE, but it may actually be FPT with respect to $K$: in particular, notice that this is indeed the case for STEINER-TREE.
\end{itemize}


\section*{Acknowledgments}

This work was done with the support of the Gaspard Monge Program for Optimization and operations research (PGMO) http://www.fondation-hadamard.fr/fr/pgmo.



\begin{thebibliography}{99}

\bibitem{AHUJA}
{\sc R.K. Ahuja}, {\sc T. L.  Magnanti}, {\sc J.B. Orlin},
Networks flows: Theory, Algorithm, and Applications,
{\it Prentice Hall} (1993).

\bibitem{arkin}
E. M. Arkin, N. Guttmann-Beck, R. Hassin (2012), {\it The ($K,k$)-Capacitated Spanning Tree Problem}, Discrete Optimization 9, 258--266.


\bibitem{bern}
M Bern, P Plassmann (1989), {\it The Steiner tree problem with edge lengths 1},  Information Processing Letters 32, 171--176.

\bibitem{bjorklund}
A. Björklund, T. Husfeldt (2014), {\it Shortest Two Disjoint Paths in Polynomial Time}, Proceedings ICALP, LNCS 8572, 211--222.

\bibitem{bousba}
C. Bousba, L.A. Wolsey (1991), {\it Finding minimum cost directed trees with demands and capacities}, Annals of Operations Research 33, 285--303.

\bibitem{byrka}
J. Byrka, F. Grandoni, T. Rothvoß, L. Sanità (2010),
  {\it Approximation algorithms for directed Steiner problems}, Proceedings STOC, 583--592.

\bibitem{charikar}
M. Charikar, C. Chekuri, T.-Y. Cheung, Z. Dai, A. Goel, S. Guha, M. Li (1998), {\it An improved LP-based approximation for Steiner tree}, Proceedings SODA, 192--200.

\bibitem{cheng}
X. Cheng, D.-Z. Du (eds.), Steiner Trees in Industry, Springer (2001).

\bibitem{cong}
J. Cong, A. B. Kahng, K.-S. Leung (1998), {\it Efficient algorithms for the minimum shortest path Steiner arborescence problem with applications to VLSI physical design}. IEEE Trans. on CAD of Integrated Circuits and Systems 17, 24--39.

\bibitem{downey}
R. G. Downey, M. R. Fellows, Parameterized Complexity,
Springer-Verlag (1999).

\bibitem{dreyfus}
S. E. Dreyfus, R. A. Wagner (1971), {\it The Steiner problem in graphs}, Networks 1, 195--207.

\bibitem{du}
D. Du, X. Hu, Steiner Tree Problems In Computer Communication Networks, {\em World Scientific Publishing} (2008).
	
\bibitem{duan}
G. Duan, Y. Yu (2003), {\it Distribution System Optimization by an Algorithm for Capacitated Steiner Tree Problems with Complex flows and Arbitrary Cost Functions}, International Journal of Electrical Power and Energy Systems 25, 515--523.

\bibitem{edmonds}
J. Edmonds, R.M. Karp (1972),  {\it Theoretical improvements in algorithmic efficiency for network flow problems}, J. of the ACM 19, 248--264.

\bibitem{feige}
U. Feige (1996), {\it A threshold of ln $n$ for approximating set-cover}, Proceedings STOC, 314--318.

\bibitem{feldman}
J. Feldman, M. Ruhl (2006), {\it The Directed Steiner Network problem is tractable for a constant number of terminals}, SIAM Journal on Computing 36, 543--561.

\bibitem{fenner}
T. Fenner, O. Lachish, A. Popa (2014), {\it Min-sum 2-paths problems}, Proceedings WAOA 2013, LNCS 8447, 1--11.


\bibitem{fortune}
S. Fortune, J. Hopcroft, J. Wyllie (1980), {\it The directed subgraph homeomorphism problem}, Theoretical Computer Science 10,  111--121.

\bibitem{garey}
{\sc M.R. Garey}, {\sc D.S. Johnson}, Computers and intractability, a guide to the theory of NP-completeness, {\em ed. Freeman, New York} (1979).

\bibitem{gondran}
{\sc M. Gondran}, {\sc M. Minoux}, Graphs and Algorithms, Chapter 5, {\em ed. Wiley} (1984).

\bibitem{haj}
M. Hajiaghayi , R. Khandekar, G. Kortsarz and Z. Nutov (2014), {\it On fixed cost k-flow problems}, Proceedings WAOA 2013, LNCS 8447, 49--60.


\bibitem{hertz}
A. Hertz, O. Marcotte, A. Mdimagh, M.Carreau, F. Welt (2012),
{\it Optimizing the Design of a Wind Farm Collection Network}, INFOR, 95--104.


\bibitem{hwang}
F. K. Hwang, D. S. Richards, P. Winter (1992).{ \it The Steiner Tree Problem}. Annals of Discrete Mathematics 53. North-Holland: Elsevier.

\bibitem{jothi}
R. Jothi, B. Raghavachari (2005). {\it Approximation Algorithms for the Capacitated Minimum Spanning Tree Problem and Its Variants in Network Design}. ACM Transactions on Algorithms 1--2, 265--282.

\bibitem{kobayashi}
Y. Kobayashi, C. Sommer (2010). {\it  On Shortest Disjoint Paths in Planar Graphs}, Discrete Optimization 7, 234--245.

\bibitem{lee}
K. Lee, K. Park, S. Park (1996). {\it Design of capacitated networks with tree configurations}. Telecommunication Systems 6--1, 1--19.

\bibitem{papadimitriou}
C. H. Papadimitriou (1978), {\it The complexity of the capacitated tree problem}, Networks 8, 217--230.

\bibitem{pillai}
A.C. Pillai, J. Chick, L. Johanning, M. Khorasanchi, V. de Laleu (2015), Engineering Optimization, 47--12, 1689--1708.

\bibitem{promel}
H. J. Prömel, A. Steger, The Steiner Tree Problem, Advanced Lectures in Mathematics, {\em ed. Springer} (2002).

\bibitem{robertson}
N. Robertson, P.D. Seymour (1995), {\it  Graphs minors XIII: The disjoint paths problem} J. Comb. Theory, Series B 63,  65--110.

\bibitem{robins}
G. Robins, A. Zelikovsky (2000), {\it Improved Steiner tree approximation in graphs}, Proceedings SODA, 770--779.

\bibitem{schrijver}
{\sc A. Schrijver},
Combinatorial Optimization, Polyhedra and Efficiency,
{\it Springer-Verlag} (2003).

\bibitem{slivkins}
A. Slivkins  (2003) {\it Parameterized Tractability of Edge-Disjoint Paths on Directed Acyclic Graphs}, Proceedings ESA, 482--493.

\bibitem{tovey}
C. A. Tovey (1984)  {\it A simplified NP-complete satisfiability problem}, Discrete Applied Mathematics 8, 85--89.

\bibitem{uchoa}
E. Uchoa, R. Fukasawa, J. Lysgaard, A. Pessoa, M. Poggi de Aragão, D. Andrade (2008), {\it Robust branch-cut-and-price for the Capacitated Minimum Spanning Tree problem over a large extended formulation}, Math. Program., Ser. A 112, 443--472.

\bibitem{voss}
S. Voß, (2001) { \it Capacitated minimum spanning trees}, in Encyclopedia of Optimization, C.A. Floudas and P.M. Pardalos (Editors), Kluwer, Vol. 1, 25--235.

\bibitem{watel}
D. Watel, M.-A. Weisser, C. Bentz, D. Barth (2015) {\it An FPT algorithm in polynomial space for the Directed Steiner Tree problem with Limited number of Diffusing nodes}, Information Processing Letters 115, 275--279.

\bibitem{west}
{\sc D.B. West},
Introduction to Graph Theory,
{\it Second Edition, Prentice Hall} (2001).

\bibitem{wu}
B.Y. Wu (2012), {\it On the maximum disjoint paths problem on edge-colored graphs}, Discrete Optimization 9,  50--57.

\bibitem{zosin}
L. Zosin, S. Khuller (2002), {\it On directed Steiner trees}, Proceedings SODA, 59--63.
\end{thebibliography}
\end{document}